\renewcommand{\P}{\mathbb{P}}
\renewcommand{\rho}{\varrho}
\newcommand{\tS}{\widetilde{S}}
\newcommand{\vphi}{\varphi}
\newcommand\1{0\le t\le 1}
\newcommand\N{\mathbb{N}}
\newcommand\E{\mathbb{E}}
\newcommand\R{\mathcal{R}}
\newcommand\Z{\mathbb{Z}}
\newcommand{\f}[2]{\frac{#1}{#2}}
\newcommand{\ceil}[1]{\left\lceil#1\right\rceil}
\newcommand{\abs}[1]{\left\lvert#1\right\rvert}
\newcommand{\barc}{\overline{c}}
\newcommand{\set}[1]{\left\{#1\right\}}
\newcommand{\half}{\frac{1}{2}}
\newtheorem{theorem}{Theorem}[section]  
\newtheorem{assumption}[theorem]{Assumption}
\newtheorem{corollary}[theorem]{Corollary}
\newtheorem{definition}[theorem]{Definition}
\newtheorem{lemma}[theorem]{Lemma}
\newtheorem{proposition}[theorem]{Proposition}
\theoremstyle{definition}
\newtheorem{remark}[theorem]{Remark}
\title[Utility Maximization in a Binomial Model]{Utility Maximization in a
  Binomial Model with transaction costs: a Duality Approach Based on the
  Shadow Price Process}
\begin{document}
\author[Ch. Bayer]{Christian Bayer} 
\address{Weierstrass Institute\\Mohrenstrasse 39\\10117 Berlin\\Germany}
\email{christian.bayer@wias-berlin.de}
\author[B. Veliyev]{Bezirgen Veliyev}
\address{Department of Mathematics\\University of Vienna\\Nordbergstrasse
  15\\1090 Vienna\\Austria}
\email{bezirgen.veliyev@univie.ac.at}

\thanks{We gratefully acknowledge to continued support of Walter
  Schachermayer, who introduced the problem to us and offered valuable hints
  and guidance. We are also grateful to Johannes Muhle-Karbe and Philipp
  D\"orsek for enlightening discussions.}

\maketitle

\begin{abstract}
  We consider the problem of optimizing the expected logarithmic utility of
  the value of a portfolio in a binomial model with proportional transaction
  costs with a long time horizon. By duality methods, we can find expressions
  for the boundaries of the no-trade-region and the asymptotic optimal growth
  rate, which can be made explicit for small transaction costs (in the sense
  of an asymptotic expansion). Here we find that, contrary to the classical
  results in continuous time, see Jane\v{c}ek and Shreve~[Fin.~Stoch.~8,
  2004], the size of the no-trade-region as well as the asymptotic growth rate
  depend analytically on the level $\lambda$ of transaction costs, implying a
  linear first order effect of perturbations of (small) transaction costs, in
  contrast to effects of order $\lambda^{1/3}$ and $\lambda^{2/3}$,
  respectively, as in continuous time models. Following the recent study by
  Gerhold, Muhle-Karbe and Schachermayer~[Fin.~Stoch.~2011 (online first)] we
  obtain the asymptotic expansion by an almost explicit construction of the
  shadow price process.
\end{abstract}

\section{Introduction}
\label{sec:introduction}

In this paper we consider the problem of optimal investment in a market
consisting of two assets, one risk-free asset, the bond, which, for
simplicity, is assumed to be constant in time and one stock. More precisely,
we assume that the investor wants to maximize her expected utility from final
wealth, i.e.,
\begin{equation*}
  \E\left[ U(V_T) \right] \to \max,
\end{equation*}
for a given finite horizon $T > 0$, a given utility function $U$ and,
certainly, a given initial wealth, say, $x$. Here, $V_T$ denotes the value of
the portfolio obtained by the investor at time $T$. In fact, we shall only
consider the case of the most tractable utility function, $U(x) =
\log(x)$.\footnote{It is also possible to carry out our analysis for CRRA
  utility functions of the form $U(x) = \frac{x^\gamma}{\gamma}$.} In this
framework, it is known since the seminal work of Merton in 1969~\cite{mer69}
that in a frictionless market in which the price of the risky asset follows a
geometrical Brownian motion (with drift $\mu$ and volatility $\sigma$), it is
optimal for the investor to keep the \emph{fraction} of wealth invested in the
risky asset, $\vphi_t S_t$, w.r.t.~the total portfolio wealth, $\vphi^0_t +
\vphi_t S_t$ constant equal to $\mu/\sigma^2$. In particular, this means that
the portfolio has to be constantly re-balanced. Of course, this result fully
deserves its fame, but nonetheless it mainly implies that the model of a
frictionless financial market in continuous time is not an adequate model of
reality in the context of portfolio optimization, since it gives an investment
strategy which would lead to immediate bankruptcy if applied in practice due
to the bid-ask spread. Consequently, it is essential to study the optimal
investment problem under transaction costs, a work undertaken by many authors
starting with Magill and Constantinides~\cite{mag/con76}. While actually
treating the related problem of optimizing utility from consumption, in this
work the main difference to the \emph{Merton rule} has already been
established in a heuristic way, namely that an investor optimizing his
expected utility keeps the proportion of wealth invested in the stock to total
wealth inside of a fixed \emph{interval} instead of fixed single
point. Consequently, the investor will not trade actively while the proportion
remains inside the interval, suggesting the term ``no-trade-region''. On the
other hand, when the proportion is about to leave the no-trade-region, then
the investor will trade stocks for bonds (or conversely) so as to just keep
the proportion inside the interval.

Since then, many papers in the finance and mathematical finance literature
have treated the problem of portfolio optimization under proportional
transaction costs, for instance \cite{DANO90}, \cite{SHSO94}, \cite{JASH04}
and \cite{TAKA88}, to mention some of the most influential ones on the
mathematical side. As usual for concave optimization problems, there are
essentially two approaches for the analysis: the primal approach, which, in
this case, is mostly based on the associated Hamilton-Jacobi-Bellman equation,
and the dual approach. Representatives of the former method are the works
\cite{SHSO94} and \cite{JASH04}, where the (asymptotic) first order effect of
the transaction costs to the no-trade-region was found for the
utility-from-consumption problem. An elegant formulation of the dual approach
is based on the notion of shadow prices, see Kallsen and
Muhle-Karbe~\cite{KAMK10}, and we especially mention the inspiring work of
Gerhold, Muhle-Karbe and Schachermayer~\cite{GMKS11}, where asymptotic
expansions for the no-trade-region and the asymptotic growth rate were found
in a utility-from-terminal-wealth problem. \cite{JASH04} and \cite{GMKS11}
found the characteristic result that the size of the no-trade-region is of
order $\lambda^{1/3}$, where $\lambda$ is the relative bid-ask-spread.

Almost all of the literature mentioned so far studied the effects of
market-friction in the form of proportional transaction costs in the case of
markets allowing continuous time trading, more specifically, in a
Black-Scholes model. In the context of a discrete model, the problem seems to
be less pressing, as infinite trading activities are anyway not possible,
which implies that the optimal portfolio strategy of a friction-less,
discrete-time model is, at least, admissible in a model with transaction
costs. However, also in a discrete-time market, such a portfolio will be far
from optimal. We refer to \cite{gen/jun94} for numerical experiments on the
effects of transaction costs in a utility-from-terminal-wealth problem. A
thorough analytical and numerical study of the use of dynamic programming was
done by Sass~\cite{sas05} allowing for very general structures of transaction
costs, including some numerical examples. \cite{chi/shi/she06} use the dual
approach for their analysis of the value function and the optimal strategy for
the super-replication problem of a derivative. In particular, when the
transaction costs are large enough, they show that buy-and-hold (or
sell-and-hold) strategies are optimal. In the context of super-replication,
one should also mention the recent \cite{dol/son11}. Last but not least, we
would also like to mention \cite{kus95}, where the convergence of the
super-replication cost in a binomial model with transaction costs was studied
when the binomial model converges weakly to a geometrical Brownian motion.

The goal of this paper is to derive similar asymptotic expansions of the size
of the no-trade-region and the asymptotic growth rate in the binomial
model\nocite{cox/ros/rub79}. For this purpose, we are going to use the shadow
price approach of \cite{GMKS11}, and, as common in this strand of research, we
shall restrict our attention to the problem of a long investment horizon $T
\to \infty$. We find explicit terms for the no-trade-region as well as the
asymptotic optimal growth rate when the relative bid-ask-spread $\lambda$ is
small, in the sense of asymptotic expansions in terms of $\lambda$. We find
that, contrary to the continuous case, in a binomial model the first order
effect of proportional transaction costs $\lambda$ to both the no-trade-region
and the optimal growth rate is of order $\lambda$.\footnote{In the continuous
  case, the first order effects are of order $\lambda^{1/3}$ and
  $\lambda^{2/3}$, respectively.} Economically, this marked difference can be
easily understood, as in a discrete-time model all-too-frequent trading is
already hindered by the model itself, which does not allow infinite trading
activities. Analytically, we find that the Black-Scholes model appears as a
singular limit of the family of binomial models. More precisely, let us
consider a family of binomial models with fixed horizon $T$ indexed by the
time increment $\delta$ converging weakly on path-space to a Black-Scholes
model. Then the no-trade-region depends analytically on $\lambda$ for every
$\delta > 0$, but in the limiting case the function is no longer
differentiable, implying different first order effects. Finally, we study the
convergence of the no-trade-region and the asymptotic growth rate to the
corresponding quantities in the Black-Scholes model provided that $\delta$ is
small compared to $\lambda$.

\section{Setting}
\label{sec:setting}

Let $\left(\Omega, \mathcal{F}, P\right)$ denote a probability space large
enough that we can define a binomial model $(S_t)_{t\in\N}$ with infinite time
horizon.\footnote{In fact, it would be sufficient to consider a family of
  finite probability spaces $(\Omega_T, \mathcal{F}_T, P_T)$ carrying the
  binomial model with $T$ periods for any $T \in \mathbb{N}$.} Throughout the
paper, the filtration $(\mathcal{F}_t)_{t\in\N}$ is generated by the process
$(S_t)_{t\in\N}$. For simplicity, we assume interest rates $r =
0$. Consequently, the model is free of arbitrage when $u > 1 > d$. Here, we
assume that we are given a \emph{re-combining tree}, i.e., $d = 1/u < 1$, but
allow for general $0 < p < 1$. (Recall that $S_{t+1} = uS_t$ with probability
$p$ and $S_{t+1} = d S_t$ with probability $1-p$.) While we allow for binomial
models with infinite time horizon, in general we shall consider the
restriction to a finite time horizon, i.e., $(S_t)_{t=0,\ldots, T}$. A
portfolio is given by the number $\vphi^0_t$ of bonds held at time $t$ (until
time $t+1$) and the number $\vphi_t$ of stocks.

Moreover, we also have a proportional transaction cost $\lambda,$ satisfying
$0 < \lambda <1.$ That is, for each $t\ge0$ the
bid and ask prices are given by $(1-\lambda) S_{t}$ and $S_{t}$,
respectively.

Before we go to more details about the markets with transaction costs, we
recall the log-optimal portfolio in a \emph{generalized} binomial model
without transaction costs.
\begin{proposition} 
  \label{frictionless} 
  Let $w_t$, $t=1,\ldots,T$, be a sequence of independent random variables
  taking the values $\pm 1$ with positive probabilities each and define a
  stochastic process $(S_t)_{t=0,\ldots,T}$ by some fixed value $S_0 > 0$ and
  by
  \begin{equation*}
    S_{t+1} \coloneqq
    \begin{cases}
      u_{t+1} S_t,& w_{t+1}=1,\\
      d_{t+1} S_t,& w_{t+1}=-1,
    \end{cases}
  \end{equation*}
  where $u_{t+1}>1>d_{t+1}>0$ are $\sigma(w_1, \ldots, w_t)$-measurable random
  variables and $0 \le t \le T-1$. Then the $\log$-optimizing portfolio for
  the stock-price is given in terms of the ratio $\pi_t$ of wealth invested in
  stock and total wealth at time $t$ by
  \begin{equation*}
    \pi_t \coloneqq \f{\vphi_t S_t}{\vphi^0_t + \vphi_t S_t} =
    \f{P(w_{t+1} = 1) u_{t+1} + P(w_{t+1}=-1) d_{t+1} -1}{(u_{t+1} -
      1)(1-d_{t+1})}.
  \end{equation*}
\end{proposition}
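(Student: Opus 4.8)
The plan is to exploit the fact that logarithmic utility is \emph{myopic}. First I would express the terminal wealth of an arbitrary self-financing strategy through the fractions $\pi_t$: since the interest rate vanishes, rebalancing over one period acts on total wealth by $V_{t+1} = V_t\bigl(1+\pi_t(R_{t+1}-1)\bigr)$, where $R_{t+1} := S_{t+1}/S_t$ equals $u_{t+1}$ on $\{w_{t+1}=1\}$ and $d_{t+1}$ on $\{w_{t+1}=-1\}$. Iterating and taking logarithms gives $\log V_T = \log V_0 + \sum_{t=0}^{T-1}\log\bigl(1+\pi_t(R_{t+1}-1)\bigr)$. Any strategy for which $V_T$ fails to be strictly positive with positive probability has $\E[\log V_T]=-\infty$ and is therefore suboptimal, so we may restrict attention to strategies keeping wealth strictly positive; for those each $\pi_t$ is $\sigma(w_1,\dots,w_t)$-measurable and $1+\pi_t(u_{t+1}-1)>0$, $1+\pi_t(d_{t+1}-1)>0$ hold pointwise.

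Next I would take expectations and, in the $t$-th summand, condition on $\mathcal F_t := \sigma(w_1,\dots,w_t)$. Since $w_{t+1}$ is independent of $\mathcal F_t$ whereas $u_{t+1},d_{t+1}$ are $\mathcal F_t$-measurable, setting $p_{t+1} := P(w_{t+1}=1)$ one obtains
\begin{equation*}
  \E\bigl[\log V_T\bigr] = \log V_0 + \sum_{t=0}^{T-1}\E\Bigl[p_{t+1}\log\bigl(1+\pi_t(u_{t+1}-1)\bigr)+(1-p_{t+1})\log\bigl(1+\pi_t(d_{t+1}-1)\bigr)\Bigr].
\end{equation*}
The purpose of parametrizing by fractions is that the $t$-th summand depends only on $\pi_t$ (not on the wealth level, nor on the other $\pi_s$), so the whole sum is maximized by maximizing each integrand pointwise in $\omega$, independently over $t$ --- the elementary dynamic-programming step.

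It then remains to solve, for fixed $p\in(0,1)$ and $u>1>d>0$, the one-period problem of maximizing $f(\pi):=p\log(1+\pi(u-1))+(1-p)\log(1+\pi(d-1))$ over the interval $\pi\in\bigl(-\tfrac1{u-1},\tfrac1{1-d}\bigr)$ on which both logarithms are finite. There $f$ is strictly concave and tends to $-\infty$ at both endpoints, hence admits a unique maximizer, characterized by $f'(\pi)=0$. Clearing denominators turns this first-order condition into a \emph{linear} equation in $\pi$; solving it and simplifying the numerator by $p(u-1)+(1-p)(d-1)=pu+(1-p)d-1$ gives exactly the asserted formula. Substituting $u=u_{t+1}$, $d=d_{t+1}$, $p=P(w_{t+1}=1)$, and noting that the resulting candidate $\pi_t$ is automatically $\mathcal F_t$-measurable and lies in the admissible interval, completes the proof.

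None of these steps is deep; the only points needing a little care are the reduction to strictly-positive-wealth strategies (so that $\log V_T$ is well defined and the $\pi_t$ are the right decision variables) and the justification that the additive separability of $\log V_T$ genuinely decouples the optimization over the $\pi_t$. The one-period calculus itself is entirely routine.
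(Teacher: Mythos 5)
Your proof is correct, but it follows a genuinely different route from the paper. You argue on the primal side, exploiting myopia of log utility: factor the wealth as a product of one-period returns, condition each summand of $\log V_T$ on $\mathcal F_t$ (using independence of $w_{t+1}$ from $\mathcal F_t$ and $\mathcal F_t$-measurability of $u_{t+1},d_{t+1}$), and maximize the resulting strictly concave one-period functional pointwise; the first-order condition is linear in $\pi$ and yields the stated fraction, and since the probability space generated by $w_1,\dots,w_T$ has finitely many atoms, the integrability and measurability issues you flag are indeed harmless. The paper instead proves the result by duality: it writes down the state price density $Z_t$, uses the inverse marginal utility $I(x)=1/x$ to get $V_T=1/(\lambda Z_T)$, deduces $V_t=V_0/Z_t$ by induction, and then backs out the holdings $\vphi^0_t$ and $\vphi_t S_t$, from which $\pi_t$ follows. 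Your argument is more elementary and self-contained, and it delivers uniqueness of the optimal fraction directly from strict concavity; the paper's dual argument buys the explicit representation $V_t=V_0/Z_t$ of the optimal wealth in terms of the state price density, which is reused later (for instance in the frictionless growth-rate computation preceding Theorem~\ref{thr:growth-rate}), so if you adopted your proof there you would need to recover that representation separately.
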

\begin{proof}
  The usual proof in the normal binomial model (see, for instance,
  \cite{shr04}) goes through without modifications. For the convenience of the
  reader, we give a short sketch. Let $p_t \coloneqq P(w_t = 1) \eqqcolon 1 -
  q_t$ and $\widetilde{p}_t \coloneqq \f{1-d_t}{u_t-d_t} \eqqcolon 1 -
  \widetilde{q}_t$. Then the state price density satisfies
  \begin{equation*}
    Z_t \coloneqq \prod_{s=1}^t \left( \f{\widetilde{p}_s}{p_s}
      \mathbf{1}_{\set{1}}(w_s) +  \f{\widetilde{q}_s}{q_s}
      \mathbf{1}_{\set{-1}}(w_s)\right),
  \end{equation*}
  since we have assumed that the interest rate is $0$. Denoting by $V_t$ the
  value of the optimizing portfolio, we obtain by Lagrangian optimization
  \begin{equation*}
    V_T = I(\lambda Z_T) = \f{1}{\lambda Z_T}, \quad \E\left[ Z_T \f{1}{\lambda
        Z_T} \right] = \E\left[Z_T I(\lambda Z_T) \right] = V_0,
  \end{equation*}
  using that $I(x) \coloneqq (U^\prime)^{-1}(x) = 1/x$. Thus,
  $\frac{1}{\lambda} = V_0$, and, by induction,
  \begin{equation*}
    V_t = \f{V_0}{Z_t}, \quad t = 0, \ldots, T.
  \end{equation*}
  On the other hand, $V_t = \vphi^0_{t-1} + \vphi_{t-1} S_t$, implying
  \begin{equation*}
    \vphi^0_t = \f{V_0}{Z_t} \f{u_{t+1}(1-d_{t+1}) - (u_{t+1} - d_{t+1})
      p_{t+1}}{(u_{t+1}-1) (1-d_{t+1})}, \quad \vphi_t S_t = \f{V_0}{Z_t}
    \f{p_{t+1} (u_{t+1}-d_{t+1}) -1 + d_{t+1}}{(u_{t+1}-1) (1-d_{t+1})},
  \end{equation*}
  which gives the formula for $\pi_t$.
\end{proof}

Next we give a formal definition of a self-financing trading strategy in the
binomial model with proportional transaction costs. Note that in a model with
transaction costs the initial position of the portfolio, i.e., before the very
first trading possibility, matters.
\begin{definition}
  A trading strategy is an adapted $\R^2$-valued process
  $(\vphi^0_t, \vphi_t)_{-1\le t \le T}$ such that $(\vphi^0_{-1},
  \vphi_{-1})=(x,0).$ It is called self-financing, if 
\begin{equation*}
 \vphi^0_t-\vphi^0_{t-1}+(\vphi_t-\vphi_{t-1}) S_t \leq 0 \mbox{ and }
\vphi^0_t-\vphi^0_{t-1}+(\vphi_t-\vphi_{t-1}) (1-\lambda) S_t \leq 0 , \  0\le
t \le T. 
\end{equation*}
  Moreover, it is called admissible, if the corresponding wealth
  process
  \begin{equation*}
    V_{t}(\varphi^0, \varphi):=\varphi^0_t +
    \varphi_{t}^{+}(1-\lambda)S_{t}-\varphi_{t}^{-}S_t, \  0\le t \le T,
  \end{equation*}
  is a.s. non-negative.
\end{definition}
In general, one would allow for portfolio process with negative values, as
long as there is a deterministic lower bound for the wealth. In a setting of
log-optimization, however, it makes sense to rule out such strategies as the
logarithm assigns utility $-\infty$ to outcomes with negative wealth.

\begin{definition}
  An admissible trading strategy $(\varphi^0_t, \varphi_t)_{-1\le t \le T}$ is
  called log-optimal on $\set{0, \ldots, T}$ for the bid-ask process
  $((1-\lambda)S, S)$ , if
  \begin{equation*}
    \E[\log(V_{T}(\psi^0, \psi))] \leq  \E[\log(V_{T}(\varphi^0, \varphi ))]
  \end{equation*}
  for all admissible trading strategies $(\psi^0, \psi).$
\end{definition}
Due to technical reasons, it is not easy to solve the above problem for finite
$T>0$, as the optimal strategy will be time-inhomogeneous. As usual in the
literature on models with transaction costs, we will instead
modify it in Definition \ref{moddef}, essentially by letting
$T\to\infty$. Here, we introduce the notion of a shadow price process, for
which we refer to \cite{KAMK10}.

\begin{definition}
  A shadow price process for $S$ is an adapted process $\tS$ such that
  $(1-\lambda)S_t \leq \tS_t \leq S_t$ for any $0\le t \le T$ and
  the log utility optimizing portfolio $({\varphi}^0, \varphi)$ for
  the frictionless market with stock price process $\tS$ exists and
  satisfies 
  \begin{gather*}
    \set{\varphi_t- \varphi_{t-1} >0} \subseteq \set{\widetilde{S}_{t} = S_t},\\
    \set{ \varphi_t- \varphi_{t-1} <0} \subseteq \set{ \widetilde{S}_{t} =
      (1-\lambda) S_t},
  \end{gather*}
  for all $0 \le t \le T$.
\end{definition}

By results from \cite{KAMK10}, \cite{KAMK11}, it is known that a shadow price
process exists and that the optimal portfolio in the frictionless market given
by the shadow price process is, in fact, also optimal in the model with
transaction costs. Indeed, the shadow price process can be seen as a solution
of the \emph{dual} optimization problem and is intimately related to the
notion of a consistent price system. For more background information on dual
methods for utility optimization in markets with transaction costs we refer to
the lecture notes~\cite{Scha11}.

\begin{definition} \label{moddef} Given a shadow price process $\tS =
  (\tS_t)_{0\le t \le T},$ an admissible trading strategy $(\vphi^0_t,
  \vphi_t)_{-1\le t \le T}$ is called log-optimal on $\set{0,\ldots, T}$
  for the \emph{modified problem} if 
  \begin{equation*}
    \E[\log(\widetilde{V}_{T}(\psi^0, \psi))] \leq
    \E[\log(\widetilde{V}_{T}(\vphi^0, \vphi ))]. 
  \end{equation*}
  for all admissible trading strategies $(\psi^0, \psi),$ where
  \begin{equation*}
    \widetilde{V}_{t}(\vphi^0, \vphi) \coloneqq \vphi^0_t + \vphi_{t} \tS_t, \
    t \geq 0.
  \end{equation*}
\end{definition}

\begin{proposition} \label{SOLMODPROB}
  Let $\tS$ be a shadow price process for the bid-ask price process
  $((1-\lambda)S, S)$ and let $(\vphi^0, \varphi)$ be its log-optimal
  portfolio. If $V(\vphi^0,\varphi) \geq 0$, then $(\vphi^0, \varphi)$ is
  also log-optimal for the modified problem. 
\end{proposition}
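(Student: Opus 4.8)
The plan is to reduce the modified (transaction-cost) optimization problem to the frictionless optimization problem for the stock price $\tS$, for which $(\vphi^0,\varphi)$ is optimal by the very definition of a shadow price. Two things must then be checked: that $(\vphi^0,\varphi)$ is itself an admissible trading strategy for the bid-ask process $((1-\lambda)S,S)$, so that it is a legitimate competitor in Definition~\ref{moddef}; and that every such competitor is, in turn, admissible in the frictionless market driven by $\tS$ and hence dominated by $(\vphi^0,\varphi)$ there.

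First I would verify that $(\vphi^0,\varphi)$ is self-financing for $((1-\lambda)S,S)$. Being the frictionless optimizer for $\tS$, it satisfies $\vphi^0_t-\vphi^0_{t-1}+(\varphi_t-\varphi_{t-1})\tS_t=0$ for $0\le t\le T$, and it inherits the initial position $(x,0)$. Splitting on the sign of $\varphi_t-\varphi_{t-1}$ and invoking the complementarity relations in the definition of a shadow price: on $\{\varphi_t-\varphi_{t-1}>0\}$ one has $\tS_t=S_t$, so the first self-financing inequality holds with equality and, since $(1-\lambda)S_t<S_t$, the second holds strictly; symmetrically on $\{\varphi_t-\varphi_{t-1}<0\}$, where $\tS_t=(1-\lambda)S_t$; and on $\{\varphi_t=\varphi_{t-1}\}$ both reduce to $\vphi^0_t-\vphi^0_{t-1}=0$. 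Together with the hypothesis $V(\vphi^0,\varphi)\ge 0$, this shows $(\vphi^0,\varphi)$ is admissible.

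Next, let $(\psi^0,\psi)$ be any admissible strategy for $((1-\lambda)S,S)$. Splitting again on the sign of $\psi_t-\psi_{t-1}$ and using the sandwich $(1-\lambda)S_t\le\tS_t\le S_t$, the two transaction-cost self-financing inequalities collapse into $\psi^0_t-\psi^0_{t-1}+(\psi_t-\psi_{t-1})\tS_t\le 0$, i.e. $(\psi^0,\psi)$ is super-self-financing for $\tS$. A routine free-disposal argument then yields a genuinely self-financing strategy $(\bar\psi^0,\psi)$ with the same stock position, the same initial endowment, and $\bar\psi^0_t\ge\psi^0_t$ for all $t$ (define $\bar\psi^0$ recursively via the frictionless self-financing equality and check the inequality by induction on $t$), hence $\widetilde V_T(\bar\psi^0,\psi)\ge\widetilde V_T(\psi^0,\psi)$. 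Moreover, since $\tS$ lies between the bid and ask prices, $\widetilde V_t(\psi^0,\psi)=\psi^0_t+\psi_t\tS_t\ge\psi^0_t+\psi_t^+(1-\lambda)S_t-\psi_t^-S_t=V_t(\psi^0,\psi)\ge 0$, so $(\bar\psi^0,\psi)$ has non-negative frictionless wealth throughout and is admissible for the frictionless market with price $\tS$. Optimality of $(\vphi^0,\varphi)$ there then gives $\E[\log\widetilde V_T(\psi^0,\psi)]\le\E[\log\widetilde V_T(\bar\psi^0,\psi)]\le\E[\log\widetilde V_T(\vphi^0,\varphi)]$, which is exactly the inequality required in Definition~\ref{moddef}.

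I expect the only delicate point to be the second step: one has to make sure the class of admissible strategies over which $(\vphi^0,\varphi)$ is known to be optimal in the frictionless $\tS$-market is broad enough to contain all the super-self-financing, non-negative-wealth strategies produced from transaction-cost-admissible ones (equivalently, that passing from $(\psi^0,\psi)$ to $(\bar\psi^0,\psi)$ is legitimate), and to keep the fixed initial position $(x,0)$ consistent across all three problems. The sign case analyses in the first two steps are routine once the complementarity conditions and the inequality $(1-\lambda)S\le\tS\le S$ are in place.
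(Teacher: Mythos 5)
Your argument is correct and follows essentially the same route as the paper's proof: use the complementarity conditions of the shadow price to show $(\vphi^0,\varphi)$ is self-financing (hence, with $V\ge 0$, admissible) for $((1-\lambda)S,S)$, then lift any transaction-cost-admissible competitor $(\psi^0,\psi)$ to a frictionless $\tS$-strategy $(\bar\psi^0,\psi)$ with $\bar\psi^0\ge\psi^0$ and conclude via $\E[\log\widetilde V_T(\psi^0,\psi)]\le\E[\log\widetilde V_T(\bar\psi^0,\psi)]\le\E[\log\widetilde V_T(\vphi^0,\varphi)]$. Your write-up just spells out the sign case analysis and the non-negativity check $\widetilde V_t(\psi^0,\psi)\ge V_t(\psi^0,\psi)\ge 0$ in more detail than the paper does.
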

\begin{proof}
  As $\varphi$ only increases on $\set{\widetilde{S}_{t} = S_t}$ and decreases
  on $\set{ \widetilde{S}_{t} = (1-\lambda) S_t}$, we obtain that $(\vphi^0,
  \varphi)$ is self-financing for the bid-ask process $((1-\lambda)S, S).$
  Then, the assumption $V(\vphi^0,\varphi) \geq 0$ implies that $(\vphi^0,
  \varphi)$ is admissible for $((1-\lambda)S, S).$ Now, if $(\psi^0, \psi)$ is
  any admissible strategy for $((1-\lambda)S, S),$ we define a self-financing
  trading strategy $(\widetilde \psi^0, \psi)$ for the frictionless market with
  $\widetilde S$ by $\widetilde \psi_{-1}^0=x$ and $\widetilde
  \psi_{t}^0=\widetilde \psi_{t-1}^0-\widetilde S_{t}(\psi_{t}- \psi_{t-1})$ for
  $0 \leq t \leq T.$ Due to $(1-\lambda)S \leq \widetilde S \leq S$ and the
  fact that $(\psi^0, \psi)$ is admissible for $((1-\lambda)S,S),$ we obtain
  that $(\widetilde \psi^0, \psi)$ is admissible for $\widetilde S$ and
  $\widetilde \psi^0 \geq \psi^0.$ Then, we are done by
  \begin{equation*}
    \E[\log(\widetilde{V}_{T}(\psi^0, \psi))] \leq
    \E[\log(\widetilde{V}_{T}(\widetilde \psi^0, \psi))] \leq 
    \E[\log(\widetilde{V}_{T}(\vphi^0, \vphi ))].\qedhere
  \end{equation*}
\end{proof}

Using the above proposition, we obtain that difference between the true and
the modified problem is of order $\lambda$.
\begin{corollary}\label{cor:diff-true-modified}
  Let $\tS$ be a shadow price process for the bid-ask price process
  $((1-\lambda)S, S).$ \\
  (i) If its log-optimal portfolio $(\vphi^0, \varphi)$
  satisfies $\vphi^0 \geq 0$ and $\varphi \geq 0,$ then
  $$ \sup_{(\psi^0, \psi)} \E[\log({V}_{T}(\psi^0, \psi))]+ \log(1-\lambda)
  \le \E[\log({V}_{T}(\vphi^0, \vphi ))] \le  \sup_{(\psi^0, \psi)}
  \E[\log({V}_{T}(\psi^0, \psi))].$$ 
  (ii) In general, we can find a positive, bounded random variable $Y =
  Y(\lambda)$ having a finite, deterministic limit $Y(0) = \lim_{\lambda\to0}
  Y(\lambda)$ such that
  \begin{equation*}
    \sup_{(\psi^0, \psi)} \E[\log({V}_{T}(\psi^0, \psi))]+ E[\log(1-\lambda
    Y(\lambda))] 
    \le \E[\log({V}_{T}(\vphi^0, \vphi ))] \le  \sup_{(\psi^0, \psi)}
    \E[\log({V}_{T}(\psi^0, \psi))].
  \end{equation*}
\end{corollary}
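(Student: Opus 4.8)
The plan is to read the statement off from the comparison $(1-\lambda)S_t\le\tS_t\le S_t$ together with two facts already contained in the proof of Proposition~\ref{SOLMODPROB}. First, that proof shows that the shadow-price portfolio $(\vphi^0,\varphi)$ is self-financing for the bid--ask process $((1-\lambda)S,S)$; under the sign hypothesis of~(i) this gives $V(\vphi^0,\varphi)=\vphi^0+\varphi(1-\lambda)S\ge0$, so $(\vphi^0,\varphi)$ is admissible for $((1-\lambda)S,S)$, and in~(ii) the same admissibility is part of what it means for $\tS$ to be a shadow price (cf.\ the results of \cite{KAMK10,KAMK11} invoked above). In either case $(\vphi^0,\varphi)$ is one of the competitors in $\sup_{(\psi^0,\psi)}\E[\log V_T(\psi^0,\psi)]$, which already yields both upper bounds. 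Second, the chain of inequalities at the end of that proof — lift an arbitrary admissible $(\psi^0,\psi)$ to a strategy $(\widetilde\psi^0,\psi)$ for the frictionless market $\tS$ with $\widetilde\psi^0\ge\psi^0$, note $\widetilde V_T(\widetilde\psi^0,\psi)\ge V_T(\psi^0,\psi)$ using $\tS\ge(1-\lambda)S$, and use optimality of $(\vphi^0,\varphi)$ for $\tS$ — shows
\begin{equation*}
  \E\big[\log\widetilde V_T(\vphi^0,\varphi)\big]\ \ge\ \sup_{(\psi^0,\psi)}\E\big[\log V_T(\psi^0,\psi)\big].
\end{equation*}
So the lower bounds reduce to comparing the true wealth $V_T(\vphi^0,\varphi)$ with the modified wealth $\widetilde V_T(\vphi^0,\varphi)$ of the \emph{same} portfolio.

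For that comparison I would split according to the sign of $\varphi_T$ and apply $(1-\lambda)S_T\le\tS_T\le S_T$ in each branch; this gives, with no assumption on the signs of $\vphi^0,\varphi$,
\begin{equation*}
  0\ \le\ \widetilde V_T(\vphi^0,\varphi)-V_T(\vphi^0,\varphi)\ \le\ \lambda\,\abs{\varphi_T}\,S_T .
\end{equation*}
Under the extra hypothesis $\vphi^0\ge0$, $\varphi\ge0$ of~(i), the same elementary estimate sharpens to $V_T(\vphi^0,\varphi)\ge(1-\lambda)\,\widetilde V_T(\vphi^0,\varphi)$; taking logarithms, then expectations, and inserting the displayed lower bound for $\E[\log\widetilde V_T(\vphi^0,\varphi)]$ proves~(i).

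For~(ii) I would set $Y(\lambda):=\big(\widetilde V_T(\vphi^0,\varphi)-V_T(\vphi^0,\varphi)\big)/\big(\lambda\,\widetilde V_T(\vphi^0,\varphi)\big)$, so that $V_T(\vphi^0,\varphi)=\widetilde V_T(\vphi^0,\varphi)\,(1-\lambda Y(\lambda))$ holds identically. The bound above gives $0\le Y(\lambda)\le\abs{\varphi_T}S_T/\widetilde V_T(\vphi^0,\varphi)$, which for each fixed $\lambda$ is a bounded random variable (on the finite probability space $\widetilde V_T$ is bounded away from $0$, since $(\vphi^0,\varphi)$ is log-optimal for $\tS$); moreover $\E[\log(1-\lambda Y(\lambda))]=\E[\log V_T(\vphi^0,\varphi)]-\E[\log\widetilde V_T(\vphi^0,\varphi)]$, so the asserted two-sided estimate is once more a rearrangement of the displayed inequality and the upper bound. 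The only non-formal point — and the main obstacle — is that $Y(\lambda)$ has a finite, deterministic limit $Y(0)$ as $\lambda\to0$: I would defer this to after the explicit description of $\tS$ and of the optimal holdings obtained later in the paper, from which one reads off $\tS_T/S_T\to1$, convergence of $(\vphi^0_T,\varphi_T)$ to the frictionless log-optimal holdings of Proposition~\ref{frictionless}, and that $\big(\widetilde V_T(\vphi^0,\varphi)-V_T(\vphi^0,\varphi)\big)/\lambda$ converges to a deterministic multiple of $\widetilde V_T(\vphi^0,\varphi)$; at this stage I would only record that this computation is postponed.
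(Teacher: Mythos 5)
Your argument is correct and follows essentially the same route as the paper: part (i) is the same chain of inequalities built on Proposition~\ref{SOLMODPROB} and the comparison $(1-\lambda)S\le\tS\le S$, and for part (ii) you, like the paper's Lemma~\ref{lem:proof-diff-true-modified}, reduce everything to the gap $\widetilde V_T(\vphi^0,\vphi)-V_T(\vphi^0,\vphi)$, encode it in a random variable $Y(\lambda)$ (the paper uses an explicit bound in terms of $\vphi^0_T/(\vphi_T S_T)$ rather than your exact ratio, a cosmetic difference), and defer boundedness and the deterministic limit to the explicit description of the optimal holdings in Theorem~\ref{th1}, where the limit is the Merton proportion. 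That deferral is exactly how the paper handles it, so no gap remains.
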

\begin{proof}
  Here we only give the proof of (i). For the second part we refer to
  Lemma~\ref{lem:proof-diff-true-modified}. Let $(\psi^0, \psi)$ be any
  admissible strategy for $((1-\lambda)S,S).$ As $(1-\lambda)S \leq \widetilde
  S \leq S,$ we get ${V}_{T}(\psi^0, \psi ) \leq \widetilde{V}_{T}(\psi^0,\psi
  ).$ If $\vphi^0 \geq 0$ and $\varphi \geq 0,$ then by the same reason we
  obtain ${V}_{T}(\vphi^0, \vphi ) \geq (1-\lambda)
  \widetilde{V}_{T}(\vphi^0,\vphi ).$ Combining these with Proposition
  \ref{SOLMODPROB}, we obtain
  \begin{align*}
    \E[\log({V}_{T}(\vphi^0, \vphi ))] & \geq \E[\log(\widetilde
    {V}_{T}(\vphi^0, \vphi ))]+\log(1-\lambda) \\  
    & \geq \E[\log(\widetilde {V}_{T}(\psi^0, \psi))]+ \log(1-\lambda) \\
    & \geq \E[\log({V}_{T}(\psi^0, \psi))]+ \log(1-\lambda). \qedhere
  \end{align*}
\end{proof}
In particular, Corollary~\ref{cor:diff-true-modified} implies that both
problems coincide in the limit when $T \to \infty$. Intuitively, this is
clear, as an additional transaction at a final time $T$ should not matter much
when $T$ is large and we have a proper time-rescaling. To make this statement
precise, we need to introduce one more notion.
\begin{definition}
  \label{def:asymptotic-growth-rate}
  The \emph{optimal growth rate} is defined as
  \begin{equation*}
    R \coloneqq \limsup_{T\to\infty} \f{1}{T} \E\left[ \log\left(
        V_T(\vphi^{0,T}, \vphi^T) \right) \right], 
  \end{equation*}
  where $(\vphi^{0,T},\vphi^T)$ denotes the log-optimal portfolio for the
  time-horizon $T$.
\end{definition}
Intuitively, this means that by trading optimally, the value of the portfolio
will grow like $e^{RT}$ on average. Now,
Corollary~\ref{cor:diff-true-modified} obviously implies that we can replace
$V_T$ by $\widetilde{V}_T$ and the optimal portfolio by the optimal portfolio
of the modified problem.

\section{Heuristic construction of the shadow price process}
\label{sec:heur-constr-shad}

In this section, we are going to construct the shadow price process $\tS$ on a
heuristic level, which will then be made rigorous in the next section. In
particular, we want to stress that most of the assumptions made in this
section will be justified in Section~\ref{sec:form-constr-shad}. Moreover,
some rather heuristic and vague constructions shall be made more precise.

Following~\cite{GMKS11}, we make a particular \emph{ansatz} for the
parametrization of the shadow price process.

\begin{assumption}\label{ass:1}
  The shadow price process $\tS$ is a generalized binomial model as 
  introduced in Proposition~\ref{frictionless}.  For any \emph{excursion} of
  the shadow price process $\tS$ away from the boundaries given by the bid-
  and ask-price process, there is a deterministic function $g$ such that $\tS
  = g(S)$ during the excursion, i.e., whenever the shadow price process
  satisfies $\tS_t \in \set{(1-\lambda)S_t, S_t},\ \tS_{t+k} \in
  \set{(1-\lambda)S_{t+k}, S_{t+k}}$ but $(1-\lambda) S_{t+h} < \tS_{t+h} <
  S_{t+h}$ for any $1 \le h \le k-1$, then there is a function $g$ such that
  $\tS_{t+h} = g(S_{t+h})$, $1 \le h \le k-1$. \footnote{Note that for
    different excursions, the functions $g$ are not assumed to be equal. Later
    on, we will, however, see that those functions can be easily transformed
    into each other, see Proposition~\ref{PropBnm} together with
    Proposition~\ref{Propg}.}
\end{assumption}

We assume that we start by buying at $t=0$, i.e., $\tS_0 = S_0$. Hence, the
relation
\begin{equation*}
  x=\varphi_{0} \tS_{0} + \vphi^0_{0}
\end{equation*}
implies $\vphi^0_{0}=\frac{c x}{c+1}$ and $\varphi_{0}= \frac{x}{(c+1)
  S_{0}},$ where $c \coloneqq \frac{\vphi^0_{0}}{\varphi_{0} S_{0}}.$ Let us
note once more that $c$ is treated as a known quantity for the
moment.

In the frictionless case, Proposition~\ref{frictionless} shows that the
optimal portfolio is, indeed, determined by $c$ via $\pi = \f{1}{1+c}$. Here,
we treat the market with transaction costs as a \emph{perturbation} of the
frictionless market. Therefore, this motivates a parametrization of the
portfolio by the fraction $c$ also in that case. Keeping $c$ constant over
time requires continuous trading, incurring prohibitive transaction
costs. Consequently, we may expect that the optimal portfolio will only be
re-balanced when $c$ leaves a certain interval. Our first objective,
therefore, is to compute the initial holdings in the optimal portfolio, i.e.,
the initial $c$. In what follows, we shall, however, assume that $c$ is known
and compute the given transaction costs $\lambda$ as a function of the
parameter --- a relation, which is going to be inverted to obtain $c$.

Next, we construct the shadow price process $\tS$ during an \emph{excursion}
away from the boundary. For this, we parametrize $\tS$ not by time $t$ but by
the number $n$ of ``net upwards steps'' of the underlying price process, i.e.,
for a given $t\ge 0$, we consider $n = n(t)$ such that $S_t = u^n S_0$,
$n\in\Z$, which is possible by our choice of a re-combining binomial tree
model, i.e., by $d = u^{-1}$. During the first excursion from the bid-ask
boundary, Assumption~\ref{ass:1} implies that $\tS_t = g(S_t)$ for some
function $g$. In particular, since $n(s) = n(t)$ implies that $S_s = S_t$, we
have that $\tS_t$ will only depend on $n$, but not on time $t$
itself. Therefore, we may, during the first excursion away from the bid-ask
prices, index the shadow price process by $n$ instead of $t$.

Before constructing the shadow prices in the interior of the bid-ask price
interval, let us take a look at the expected behavior of the shadow price
process when the stock price falls, i.e., when $n \le 0$. Intuitively, and
following~\cite{GMKS11}, when the stock price gets smaller than the initial
price $S_0$, we have to continue buying stock, i.e., we have $\tS_{-n} =
S_{-n} = d^n S_0$ for $n \ge 0$, before the first instance of selling
stock.\footnote{Obviously, a positive excursion thereafter will be treated
  differently as a positive excursion immediately started at time $0$, i.e.,
  with different shadow price process.} We formulate this extended ansatz as a
second assumption.

\begin{assumption}
  \label{ass:2}
  Given a time $t\ge0$ at which the number of bonds and stocks in the
  log-optimal portfolio for the frictionless market in the shadow price
  process $\tS$ needs to be adjusted. Let $t+h$ be the (random) next time
  of an adjustment of the portfolio in the opposite direction. If $\tS_t =
  S_t$ and $S_u < S_t$, then $\tS_u = S_u$ and, conversely, if $\tS_t =
  (1-\lambda) S_t$ and $S_u > S_t$, then $\tS_u = (1-\lambda) S_u$, for
  $t \le u \le t+h$.
\end{assumption}
\begin{figure}
\begin{center}
\includegraphics[width=\textwidth]{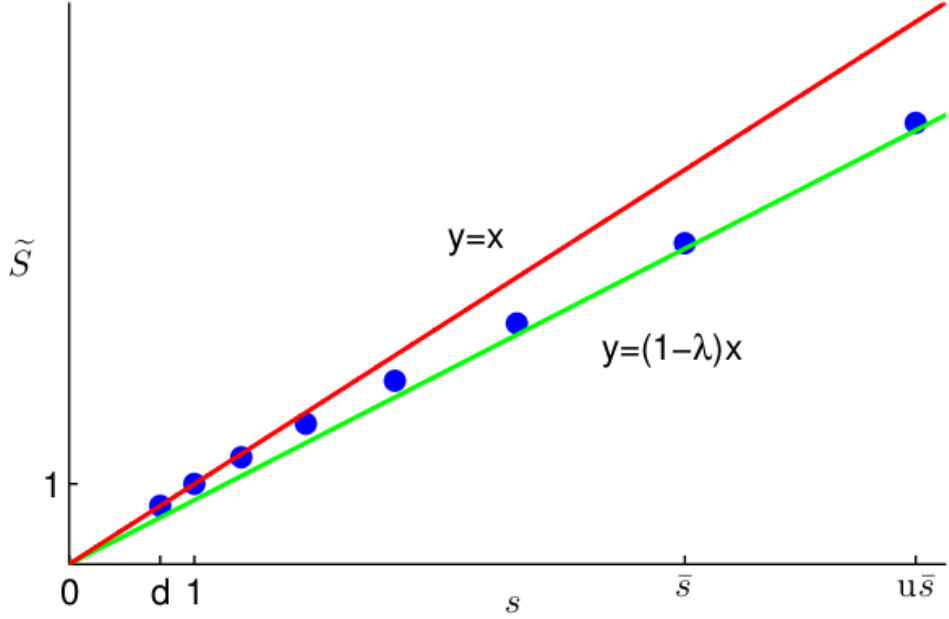}
\caption{A path of the shadow price process}
\end{center}
\end{figure}
What happens when $S_t$ increases beyond $S_0$? Intuitively, it seems clear
that we will not change the log-optimal portfolio at times $t$ with $S_t >
S_0$ except by selling stock, i.e., for positive $n$ we expect to have
$(1-\lambda) S_n \le \tS_n < S_n$. Thus, during a positive excursion of the
stock price process $S$ from $S_0$, the excursion of the shadow price process
away from the bid-ask price boundary will end at $\tau \coloneqq \min \set{t
  \ge 0\ \big|\ \tS_t = (1-\lambda) S_t}$, assuming that $\forall 0 \le t \le
\tau:\ S_t \ge S_0$. We also let $k \coloneqq n(\tau)$ be the corresponding
net number of upwards steps. This means that
\begin{equation*}
  \frac{\vphi^0_n}{\varphi_n S_0 }=c  \text{ for }  0 \leq n \leq k-1
\end{equation*}
As for $0 \le n \le k-1$ the numbers of bonds and stock in the log-optimal
portfolio for the market given by $\tS$ may not change,
Proposition~\ref{frictionless} implies
\begin{equation} \label{eq2} \pi_n= \frac{\varphi_n \tS_n}{\vphi^0_n +
    \varphi_n \tS_n} = \frac{\tS_n}{cS_{0}+ \tS_n} = \frac{p \widetilde{u}_{n+1} +
    (1-p) \widetilde{d}_{n+1} -1}{(\widetilde{u}_{n+1} - 1)(1 - \widetilde{d}_{n+1})}.
\end{equation}
where
\begin{equation*}
  \widetilde{u}_{n+1}=\frac{ \tS_{n+1}}{\tS_{n}} \text{ and } \widetilde{d}_{n+1} =
  \frac{\tS_{n-1}}{\tS_{n}}.
\end{equation*}
Solving \eqref{eq2} gives the recursion
\begin{equation*}
  \tS_{n+1} = \frac{\tS_{n}cS_{0}+p \tS_{n} \tS_{n-1} - cS_{0}(1-p) \tS_{n-1}}{pcS_{0}+
    \tS_{n-1} - (1-p)\tS_{n}},\ \tS_{0}=S_{0} \text{ and } \tS_{-1}=S_{0} d.
\end{equation*}
Fortunately, we can find an explicit solution for the above recursion. It is
given by 
\begin{align} \label{eq:tS_esplicit_p}
  \tS_n & =S_0 \frac{c(1-(\frac{1-p}{p})^n) +
    \beta_p}{-(1-(\frac{1-p}{p})^n)+\beta_p}& \mbox{ for }  p \neq
  \frac{1}{2},  \\ 
\widetilde S_{n}& =S_0 \frac{c n+ \beta}{-n+\beta} & \mbox{ for } p= \frac{1}{2} ,
\end{align}
where $\beta_p = \f{(c+d)(2p-1)}{(1-d)(1-p)}$ and $\beta=\frac{c+d}{1-d}.$

When we do not want to parametrize the shadow price process in terms of $n$,
we can still express $\tS_t = S_0 g_c(S_t)$ for $0 \le t \le \tau$. Indeed, by
$S_n = S_0 u^n$ we see that we can express $n$ in terms of the stock price $s$
by $n = \f{\log(s)}{\log(u)}$, and inserting into~\eqref{eq:tS_esplicit_p} gives
\begin{align*}
g_c(s) &= \f{c \left(1 - (\f{1-p}{p})^{-\f{\log(s)}{\log d}}
    \right) + \beta_p }{-\left(1- (\f{1-p}{p})^{-\f{\log(s)}{\log d}}\right) + \beta_p} & \mbox{ for }  p \neq \frac{1}{2}, \\
  g_{c}(s)&=\frac{c \log(s) +\beta \log u}{-\log(s) +\beta \log u} & \mbox{ for } p= \frac{1}{2}.
\end{align*}
Note that $g_{c}(s)$ is increasing, first
concave and then convex. 

Now we have constructed a candidate for the shadow price process $\tS$ which
is defined until the first time when it again hits either the bid or the ask
price of the true stock. We have also, en passant, settled the case when the
process first hits the ask price again: for $n = -1$, we have $\tS_{-1} =
S_{-1} = dS_0$, and we will buy additional stock and re-start the recursion,
but at a different initial value, see the next section for a detailed
account. However, when we actually consider the passage from ask to bid price,
i.e., when $n = k$ and $\tS_k = (1-\lambda) S_k$, we have to decide how to
re-balance our portfolio. In practice, the situation will be a bit difficult:
most likely, we are not able to follow our explicit
formula~\eqref{eq:tS_esplicit_p}, as it is quite possible that $\tS_k <
(1-\lambda) S_k$, i.e., that the recursion formula does not hold true anymore
for the last step, because it would induce a violation of the first basic
property of the shadow price process. In principle, it would be possible to
handle this situation. However, it would lead to inherent non-continuities,
which would not allow us to use the method of asymptotic expansions. Thus, we
assume that the shadow price process \emph{touches} the bid price at an
integer point $k$. (Note that this is really an assumption on the model
parameter, not just an ansatz! The assumption will be made more explicit in
Assumption~\ref{ass:k-integer} in the subsequent section.)
\begin{assumption}
  \label{ass:3}
  The model parameters ($u$, $d$, $p$, $S_0$ and $\lambda$) are chosen such
  that $\tS_k = (1-\lambda) u^k S_0$ and $\tS_{k+1} = (1-\lambda) u^{k+1}
  S_0$.
\end{assumption}
The second part of Assumption~\ref{ass:3} requires some justification. In
fact, it reflects a choice on the trading involved at the first opportunity of
selling. More precisely, it means that we \emph{do not re-balance the
  log-optimal portfolio when the shadow price process first hits the bid
price}. Only when the stock price increases once more, the shadow price is
again equal to the bid price and then we \emph{do trade}. In the discrete
time situation, this particular structure of the shadow price process seems
arbitrary, but it reflects an important condition in the continuous problem as
discussed in~\cite{GMKS11}, namely the \emph{smooth pasting condition} for the
analogous function $g$ in the Black-Scholes model with proportional
transaction costs. This condition says that $g$ is continuously differentiable
at $\overline{s}$ with $g(\overline{s}) = (1-\lambda) \overline{s}$, i.e., in
some sense the shadow price process ``smoothly'' merges with the bid price
process. In continuous time, this assumption is very beneficial in, for
instance, avoiding any reference to local times. In the discrete case, other
choices are clearly also possible, which lead, inter alia, to different shadow
price processes as the one studied by~\cite{GMKS11} in the Black-Scholes model
seen as a limiting case of the binomial model. Since one of the main
motivations for the present model is to study precisely this convergence, we
impose the second part of Assumption~\ref{ass:3}.

In the next step, we interpret the two equalities in Assumption~\ref{ass:3} as
a system of equations for the two unknowns $k$ and $\lambda$.\footnote{Recall
  that we treat $\lambda$ as an unknown and $c$ as a known quantity with the
  prospect of inverting the function for $\lambda$ in terms of $c$ at a later
  step.} For $p=\f{1}{2},$ the solution is given by
$k=\frac{(c+d)(c-1)}{c(1-d)}$ and $\lambda=1-c^2 d^k.$


For $p \neq \frac{1}{2}$, set $x = \f{1-p}{p}$ and $y = x^k$. 
If we eliminate $\lambda$ from equations, then we obtain 
\begin{equation} \label{eq:elim-lambda-for-k}
  \frac{c(1-y)+\beta_p}{-(1-y)+\beta_p}=\frac{c(1-x y)+\beta_p}{-(1-x
  y)+\beta_p} d 
\end{equation}
which is second order polynomial equation for $y.$ We obtain two solutions
$y_1$ to be given in~\eqref{eq:general-p-z} and $y_2 = \frac{p}{1-p}$ which
implies that the net number of upwards steps is $k = -1$. However, for $k=-1$,
we indeed solve equation~\eqref{eq:elim-lambda-for-k}, but at the ask-price
instead of the bid price. Therefore, the remaining solution must be the
appropriate one,
\begin{equation}
  \label{eq:general-p-z}
  y = \f{[c(p+pd-d)+d(2p-1)][1-p-pd -c(2p-1)]}{c(1-d)^2(1-p)^2}.
\end{equation}

Hence, $\bar s=d^{-k}= y^{- \log(d)/\log(x)}.$
Inserting this, we obtain
\begin{equation}
  \label{eq:general-p-lambda}
  \lambda = \frac{(c p ((c+2) d+c)-c (c+1) d) \left(-\frac{c (d-1)^2
        (p-1)^2}{((c+2) d p-(c+1) d+c p) (c (2 p-1)+d
        p+p-1)}\right)^{-\frac{\log (d)}{\log \left(x\right)}}}{c
    (2 p-1)+d p+p-1}+1 \eqqcolon F(c).
\end{equation}

\begin{remark}
  If we are mainly interested in the limit to the Black-Scholes model, we may
  assume that $u-1 \ll \lambda$. In that case, we can anyway bound
  \begin{equation*}
    \abs{\widetilde{S}_\tau - \widetilde{S}_{\ceil{\tau}}} \le (u-1)
    \widetilde{S}_\tau \ll \lambda \widetilde{S}_\tau.
  \end{equation*}
  Thus, in that sense, it should not matter for the asymptotic result, how
  we treat the boundary conditions, and whether we really hit at an integer
  point in time.
\end{remark}

\section{Formal construction of the shadow price process}
\label{sec:form-constr-shad}

The proofs of most propositions in this section are found in Appendix
\ref{ProofsAppendix}.  From now on we fix $S_0=1, 0< \lambda<1,$ $1>d=1/u>0$
and $\frac{d}{1+d}<p< \frac{1}{1+d}.$ (The last inequality translates to the
condition $0< \mu<\sigma^2$ in the Black-Scholes case. By modifying some of
the functions, it is also possible to carry out the whole analysis for the
other cases.)
Moreover, we denote $\bar c=\frac{1-p-pd}{p+pd-d}$ and $b=\frac{\log (d)}{\log
  \left((1-p)/p\right)}.$ Note that the optimal wealth fraction $\pi_t$ in the
frictionless binomial model is by Proposition~\ref{frictionless} given by
$\pi_t = \frac{1}{1+\bar{c}}$.
\begin{proposition} \label{resultsc}
Define 
$$F(c):=\begin{cases}
  1-\left (\frac{c(p+pd-d)+d(2p-1)}{(1-d)(1-p)} \right )^2
  \left(\frac{(c(p+pd-d)+d(2p-1))(1-p-p d-c (2 p-1))}
    {c (1-d)^2(1-p)^2}\right)^{b-1} & \mbox{ for } p \neq \frac{1}{2}, \\
  1-c^2 d^{\frac{(c+d)(c-1)}{c(1-d)}} & \mbox{ for } p= \frac{1}{2}.
\end{cases}$$ Then, $F(c)=\lambda$ has a unique solution in $(\bar c, \infty)$
if $p \in (\frac{d}{1+d}, \frac{1}{2}],$ and a unique solution in $(\bar c,
\frac{1-p-pd}{2p-1})$ if $p \in (\frac{1}{2}, \frac{1}{1+d}).$
\end{proposition}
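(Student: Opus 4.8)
The plan is to regard $F$ as a function of the real variable $c$ on the interval $J$ in question --- $J=(\barc,\infty)$ when $p\in(\tfrac d{1+d},\tfrac12]$ and $J=(\barc,\tfrac{1-p-pd}{2p-1})$ when $p\in(\tfrac12,\tfrac1{1+d})$ --- and to prove that $F$ restricts there to a \emph{continuous, strictly increasing bijection onto $(0,1)$}; since $0<\lambda<1$, existence and uniqueness of a solution of $F(c)=\lambda$ in $J$ is then immediate from the intermediate value theorem and injectivity. For $p\ne\tfrac12$ I abbreviate $A(c):=c(p+pd-d)+d(2p-1)$ and $B(c):=1-p-pd-c(2p-1)$, so that $1-F(c)=\bigl((1-d)(1-p)\bigr)^{-2b}A(c)^{b+1}B(c)^{b-1}c^{1-b}$, and I record the algebraic identities $A(\barc)=(1-p)(1-d)$, $B(\barc)=\barc\,(1-p)(1-d)$ and $(2p-1)+(1-p)(1-d)=p+pd-d$, all obtained by substituting $\barc=\tfrac{1-p-pd}{p+pd-d}$.

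First, one checks $A,B>0$ on $J$ (for $A$: use $A(\barc)>0$ and $A'=p+pd-d>0$, the hypothesis $p>\tfrac d{1+d}$; for $B$: use $B(\barc)>0$ together with $B'=-(2p-1)>0$ when $p<\tfrac12$, or with $c<\tfrac{1-p-pd}{2p-1}$ when $p>\tfrac12$), so $F$ is real-analytic there; also $\barc<\tfrac{1-p-pd}{2p-1}$ when $p>\tfrac12$ (this is just $d<1$), so $J\ne\emptyset$. Substituting $c=\barc$ and using the identities above makes both bracketed factors in $F$ equal to $1$, whence $F(\barc)=0$. For the other endpoint I show $F\to1$: when $p\le\tfrac12$, the hypothesis $p>\tfrac d{1+d}$ is equivalent to $b<-1$, and then $A(c)^{b+1}B(c)^{b-1}c^{1-b}\sim\text{const}\cdot c^{b+1}\to0$ as $c\to\infty$; when $p>\tfrac12$, the hypothesis $p<\tfrac1{1+d}$ is equivalent to $b>1$, and $B(c)\downarrow0$ as $c\uparrow\tfrac{1-p-pd}{2p-1}$ while $A(c),c$ stay away from $0$, so $1-F(c)\to0$.

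The heart of the matter is strict monotonicity, for which it suffices to show that $\Phi(c):=\tfrac{d}{dc}\log(1-F(c))=(b+1)\tfrac{A'}{A(c)}-(b-1)\tfrac{2p-1}{B(c)}-\tfrac{b-1}{c}$ is strictly negative on $J$. Using the identities, $\Phi(\barc)$ simplifies to $\tfrac{A'}{\barc(1-p)(1-d)}\bigl[(b+1)\barc-(b-1)\bigr]$; writing $x:=(1-p)/p$, so $\barc=\tfrac{x-d}{1-dx}$, and clearing the positive factor $1-dx$, one finds $(b+1)\barc-(b-1)$ has the sign of $b(1+d)(x-1)+(1-d)(x+1)$, and with $b=\log d/\log x$ this is negative exactly because $\log\beta\cdot(\beta+1)(\alpha-1)>\log\alpha\cdot(\beta-1)(\alpha+1)$ for all $1<\alpha<\beta$ --- taking $(\alpha,\beta)=(x,\tfrac1d)$ when $p<\tfrac12$ and $(\alpha,\beta)=(\tfrac1x,\tfrac1d)$ when $p>\tfrac12$. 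That two-parameter inequality vanishes at $\alpha=\beta$ and, differentiated in $\beta$, reduces to elementary one-variable estimates of the shape $(t-1)^2\ge0$ (concretely $t-t^{-1}\ge2\log t$ for $t\ge1$). To upgrade $\Phi(\barc)<0$ to $\Phi<0$ on all of $J$, multiply $\Phi$ by the positive quantity $c\,A(c)\,B(c)$ to obtain a quadratic $Q$ whose leading coefficient works out to $-(p+pd-d)(2p-1)(b+1)$, negative in both parameter ranges; thus $Q$ is a downward parabola with $Q(\barc)<0$, and one verifies --- e.g.\ by checking $Q'(\barc)\le0$, so that $\barc$ already lies at or beyond the vertex, or by bounding the discriminant of $Q$ --- that $Q$ has no zero in the interior of $J$; hence $Q<0$, i.e.\ $\Phi<0$, on $J$.

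Finally, for $p=\tfrac12$ everything is explicit: $\barc=1$, $F(c)=1-c^2 d^{(c+d)(c-1)/(c(1-d))}$, $F(1)=0$, $F\to1$ as $c\to\infty$, and $\tfrac{d}{dc}\log(1-F(c))=\tfrac2c+\tfrac{\log d}{1-d}\bigl(1+\tfrac d{c^2}\bigr)$ is negative on $(1,\infty)$ because $\tfrac{2c}{c^2+d}$ is decreasing there with value $\tfrac2{1+d}$ at $c=1$, so it suffices that $\log(1/d)\ge\tfrac{2(1-d)}{1+d}$ --- again $(t-1)^2\ge0$ with $t=1/d$. I expect the monotonicity step for $p\ne\tfrac12$ to be the real obstacle: unlike the $p=\tfrac12$ case, $\Phi$ is not visibly signed, and pinning down its sign on all of $J$ needs precisely the identities at $c=\barc$, the reduction of $(b+1)\barc-(b-1)<0$ to the two-parameter inequality above, and the finite-but-fiddly analysis of the auxiliary quadratic $Q$ away from $\barc$. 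Given monotonicity, continuity together with $F(\barc)=0$ and $F\to1$ at the far end of $J$ makes $F|_J$ a bijection onto $(0,1)$, which proves the proposition.
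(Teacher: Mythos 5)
Your overall strategy is sound, and it is in fact a \emph{stronger} route than the paper's: you aim to show that $F$ is strictly increasing on the whole interval $J$, hence a bijection onto $(0,1)$. The paper does not prove (or need) this. It computes $F'$, observes that its sign is governed by the quadratic $(1+b)c^2-2c_2c+(b-1)d\barc$, shows only that the \emph{smaller} root of that quadratic lies to the left of $\barc$, and then concludes uniqueness even in the scenario where $F$ first decreases on part of $J$: since $F(\barc)=0$, any initial dip keeps $F$ negative there, and $\lambda>0$ can only be attained on the subsequent increasing stretch, where $F\to1$ at the far end of $J$. Your argument replaces this "dip is harmless" observation by the additional fact $F'(\barc)>0$, i.e. $(b+1)\barc-(b-1)<0$, and your proof of that fact is correct: the reduction to $b(1+d)(x-1)+(1-d)(x+1)<0$ with $x=(1-p)/p$, and then to strict monotonicity of $h(t)=\frac{t-1}{(t+1)\log t}$ on $(1,\infty)$ via $t-t^{-1}\ge 2\log t$, all check out (including the choice of $(\alpha,\beta)$ in the two parameter ranges, where $\alpha<\beta$ is exactly the standing hypothesis $\frac{d}{1+d}<p<\frac{1}{1+d}$). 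The $p=\tfrac12$ case and the boundary behaviour $F(\barc)=0$, $F\to1$ are also handled correctly, and your treatment of $p=\tfrac12$ is cleaner and more complete than the paper's.

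The one place where your write-up is not yet a proof is the step "one verifies \dots that $Q$ has no zero in the interior of $J$". This is genuine additional work, not a routine check. Your primary suggestion does work: $Q'(\barc)\le 0$ amounts (after simplification) to $(2p-1)(b+1)\ge p+pd-d$, equivalently $w\!\left(\tfrac{1-p}{p}\right)\le w\!\left(\tfrac1d\right)$ for $w(z)=\frac{z\log z}{z-1}$, which holds on the admissible parameter range (for $p<\tfrac12$ via monotonicity of $w$ on $(1,\infty)$ together with $\tfrac{1-p}{p}<\tfrac1d$; for $p>\tfrac12$ via $w\le1$ on $(0,1)$ and $w\ge1$ on $(1,\infty)$). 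This is precisely the inequality the paper itself invokes in its root-location argument, and it is of comparable substance to the inequality you did prove in full at $\barc$, so it must actually be carried out rather than asserted. Your fallback suggestion, bounding the discriminant of $Q$, does \emph{not} work in general: $Q$ can have two real roots (e.g. for $p$ close to $\tfrac12$ and $d$ close to $1$); they simply both lie to the left of $\barc$, which is exactly the vertex-plus-value argument. With the vertex inequality supplied, your proof is complete and yields the stronger conclusion that $F$ is strictly increasing on $J$, which the paper's case analysis deliberately avoids having to decide.
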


As we have $c,$ we can define $k$ and $\bar s.$ Denote
\begin{equation*}
  r(c) \coloneqq \frac{[c(p+pd-d)+d(2p-1)][1-p-pd-c (2 p-1)]}{c
  (1-d)^2(1-p)^2}. 
\end{equation*}
\begin{proposition}\label{prop:define-sbar}
Fix $c$  and define
$$k:= \begin{cases}
     \frac{\log(r(c) )}{\log\left( \frac{1-p}{p}\right)}    & \mbox{ for } p
     \neq \frac{1}{2}, \\           
  \frac{(c+d)(c-1)}{c(1-d)}       & \mbox{ for } p= \frac{1}{2},
 \end{cases}$$
and $\bar s:= u^k.$ We have $k > 0.$ 
\end{proposition}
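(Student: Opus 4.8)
The statement $k>0$ only requires that $c$ lies strictly inside the interval produced by Proposition~\ref{resultsc} --- namely $c>\bar c$ when $p\in(\tfrac{d}{1+d},\tfrac12]$, and $\bar c<c<\tfrac{1-p-pd}{2p-1}$ when $p\in(\tfrac12,\tfrac{1}{1+d})$ --- and not the equation $F(c)=\lambda$ itself; so my plan is to analyse the rational function $r$ directly. I would settle the case $p=\tfrac12$ first: there $\bar c=\tfrac{1-p-pd}{p+pd-d}=1$, hence $c>1$, and then $k=\tfrac{(c+d)(c-1)}{c(1-d)}>0$ since every factor is strictly positive (recall $0<d<1$). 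From here on assume $p\neq\tfrac12$.

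Write $A:=p+pd-d$, $B:=1-p-pd$, $E:=2p-1$. The standing hypothesis $\tfrac{d}{1+d}<p<\tfrac{1}{1+d}$ says precisely $A>0$ and $B>0$, and one checks $\bar c=B/A$. Since $k=\dfrac{\log r(c)}{\log((1-p)/p)}$ and $\log\tfrac{1-p}{p}$ has the sign of $1-2p$, proving $k>0$ amounts to showing $\log r(c)$ has the same sign, i.e.\ $r(c)>1$ when $p<\tfrac12$ and $0<r(c)<1$ when $p>\tfrac12$. I would derive this from two facts about $r$ on $\{c>0\}$: (a) $r(\bar c)=1$, and (b) $r$ is strictly monotone, increasing if $p<\tfrac12$ and decreasing if $p>\tfrac12$.

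For (a) I would substitute $c=\bar c=B/A$: using $A-E=(1-p)(1-d)$ one gets $Ac+dE\big|_{\bar c}=B+dE=(1-p)(1-d)$ and $B-Ec\big|_{\bar c}=\tfrac{B(A-E)}{A}=\tfrac{B(1-p)(1-d)}{A}$, so the numerator of $r(\bar c)$ equals $\tfrac{B(1-p)^2(1-d)^2}{A}$, which is exactly $\bar c\,(1-d)^2(1-p)^2$, the denominator; hence $r(\bar c)=1$. (Alternatively this can be read off from $F(\bar c)=0$ via the formula in Proposition~\ref{resultsc}, but the direct substitution is shorter.) For (b), expanding the numerator and rewriting $r(c)=\dfrac{-AEc+(AB-dE^2)+dEB/c}{(1-d)^2(1-p)^2}$, differentiation gives $r'(c)=\dfrac{(1-2p)\bigl(A+dB/c^2\bigr)}{(1-d)^2(1-p)^2}$; as $A+dB/c^2>0$ for $c>0$, the sign of $r'$ on $(0,\infty)$ is that of $1-2p$.

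Combining (a) and (b) finishes the proof. If $p<\tfrac12$, $r$ is strictly increasing on $(0,\infty)$, so $c>\bar c>0$ gives $r(c)>r(\bar c)=1$, hence $\log r(c)>0$, matching $\log\tfrac{1-p}{p}>0$. If $p>\tfrac12$, $r$ is strictly decreasing; the interval $\bigl(\bar c,\tfrac{1-p-pd}{2p-1}\bigr)=(\bar c,B/E)$ is nonempty because $\bar c=B/A<B/E$, which is equivalent to $E<A$ and hence to $d<1$; and for $c$ in this interval $r(c)<r(\bar c)=1$ while $r(c)>0$, since the numerator factors $Ac+dE$ (with $dE=d(2p-1)>0$) and $B-Ec$ (with $c<B/E$) are both positive; thus $\log r(c)<0$, matching $\log\tfrac{1-p}{p}<0$. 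In every case $k>0$. I expect the only substantive steps to be the two algebraic identities in (a) and the closed form of $r'$ in (b); everything else is sign bookkeeping, so I do not anticipate a real obstacle.
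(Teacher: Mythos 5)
Your proposal is correct and follows essentially the same route as the paper's proof: both cases rest on the location of $c$ from Proposition~\ref{resultsc}, the identity $r(\bar c)=1$, and the derivative formula $r'(c)=\frac{(1-2p)}{(1-d)^2(1-p)^2}\,\frac{(p+pd-d)c^2+d(1-p-pd)}{c^2}$, whose sign is that of $1-2p$, matched against the sign of $\log\frac{1-p}{p}$. Your explicit verification of $r(\bar c)=1$ and of $r(c)>0$ merely spells out steps the paper asserts without detail.
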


\begin{assumption}\label{ass:k-integer}
  We assume that the model parameter $d$ is given such that $k$ is a positive
  integer in the above definition.  
\end{assumption}
Note that this is the only assumption left from the previous
Section~\ref{sec:heur-constr-shad}. A closer look at the definition of $k$
shows the intuitively obvious fact that $k$ converges to infinity when $d \to
1$. Consequently, at least when we are really interested in binomial models
with $d \approx 1$, Assumption~\ref{ass:k-integer} is easy to fulfill by a
slight modification of the model parameters.

\begin{proposition} \label{Propg}
Define the function $g$ on $\{ d,1, \ldots, \bar s, u \bar s \}$ by
$$ g(s):= \begin{cases} \f{c \left(1 - (\f{1-p}{p})^{-\f{\log(s)}{\log d}}
    \right) + \beta_p }{-\left(1- (\f{1-p}{p})^{-\f{\log(s)}{\log d}}\right) + \beta_p}, & \mbox{ for }  p \neq \frac{1}{2}, \\
  \frac{c \log(s) +\beta \log u}{-\log(s) +\beta \log u}, & \mbox{ for } p= \frac{1}{2},
\end{cases}$$
where $\beta_p = \f{(c+d)(2p-1)}{(1-d)(1-p)}$ and $\beta=\frac{c+d}{1-d}.$
Then $g$ is increasing, maps $\{ d,1, \ldots, \bar s, u \bar s \}$ onto $\{d, 1, \ldots, (1-\lambda)\bar s, (1-\lambda) u \bar s \}$
and satisfies the ``smooth pasting'' conditions
\begin{equation} \label{smpaste} g(d)=d, \mbox{ } g(1)=1, \mbox{ } g(\bar
  s)=(1-\lambda)\bar s, \mbox{ } g(u \bar s)=(1-\lambda) u \bar s.
\end{equation}
In addition, $$ (1-\lambda) s \leq g(s) \leq s \mbox{ for } 1 \leq s \leq \bar s.$$
Finally, we have $$\frac{p \frac{g(us)}{g(s)} +(1-p)\frac {g(ds)}{g(s)}-1}{(\frac{g(us)}{g(s)}-1)(1-\frac{g(ds)}{g(s)})}=\frac{g(s)}{c+g(s)}
\mbox{ for } 1 \leq s \leq \bar s.$$
\end{proposition}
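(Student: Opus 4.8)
\emph{Proof idea.} The plan is to push everything onto the integer grid $s=u^{n}$, $n\in\{-1,0,\dots,k,k+1\}$ (legitimate since $d=u^{-1}$), and to write $h(n):=g(u^{n})$, which by the definition of $g$ is the right-hand side of \eqref{eq:tS_esplicit_p} with $S_{0}=1$: for $p\neq\frac12$ it has the M\"obius form $h(n)=\frac{cz_{n}+\beta_{p}}{\beta_{p}-z_{n}}$ with $z_{n}:=1-x^{n}$, $x:=\frac{1-p}{p}$, and for $p=\frac12$ it is $\frac{cn+\beta}{\beta-n}$ (all of the following is written for $p\neq\frac12$, the case $p=\frac12$ being identical but simpler, or a limit). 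The two ``inner'' pasting values are immediate substitutions: at $n=0$, $z_{0}=0$ gives $h(0)=\beta_{p}/\beta_{p}=1$; at $n=-1$, $x^{-1}=\frac{p}{1-p}$ gives $1-x^{-1}=-\frac{2p-1}{1-p}$, and a one-line cancellation against $\beta_{p}=\frac{(c+d)(2p-1)}{(1-d)(1-p)}$ yields $h(-1)=d$. The two ``outer'' values, $g(\bar s)=(1-\lambda)\bar s$ and $g(u\bar s)=(1-\lambda)u\bar s$, are built into the construction: in Section~\ref{sec:heur-constr-shad} and Propositions~\ref{resultsc}, \ref{prop:define-sbar}, $\bar s=u^{k}$ with $x^{k}=r(c)$ and $\lambda=F(c)$ were obtained precisely by solving $h(k)=(1-\lambda)u^{k}$, $h(k+1)=(1-\lambda)u^{k+1}$ --- eliminating $\lambda$ into the quadratic \eqref{eq:elim-lambda-for-k} in $y=x^{k}$, discarding the spurious root $y_{2}=\frac{p}{1-p}$ (which would force $k=-1$, i.e.\ touching the ask price) and keeping $y_{1}=r(c)$ --- so back-substituting $y=r(c)$ into either equation returns $\lambda=F(c)$ and both identities hold. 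Here one also records $\lambda=F(c)\in(0,1)$ and $k>0$ from the cited propositions.

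For monotonicity and well-definedness: $z\mapsto\frac{cz+\beta_{p}}{\beta_{p}-z}$ has derivative $\frac{(c+1)\beta_{p}}{(\beta_{p}-z)^{2}}$, of the sign of $\beta_{p}$, while $z_{n}=1-x^{n}$ is increasing in $n$ iff $x<1$ iff $p>\frac12$ --- which is also exactly when $\beta_{p}>0$; hence in both parameter regimes the two monotonicities compose to make $h$ strictly increasing in $n$, so $g$ is strictly increasing on its domain. No pole is crossed, because $\beta_{p}-z_{n}=\beta_{p}-1+x^{n}$ is monotone in $n$, equals $\frac{(2p-1)(c+1)}{(1-p)(1-d)}\neq0$ at $n=-1$, and cannot change sign on $\{-1,\dots,k+1\}$ without $h$ blowing up there, contradicting the already-computed finite values $h(-1)=d$, $h(k+1)=(1-\lambda)u^{k+1}$; so $h>0$ throughout. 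Strict monotonicity makes $g$ injective on its $(k+3)$-point domain, and with the four pasting values this pins the image down to $\{d,1,\dots,(1-\lambda)\bar s,(1-\lambda)u\bar s\}$ --- the ``onto'' assertion.

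The sandwich $(1-\lambda)s\le g(s)\le s$ on $1\le s\le\bar s$ is equivalent to $1-\lambda\le\mu_{n}\le1$ for $0\le n\le k$ with $\mu_{n}:=h(n)d^{n}$, and the heart of the matter is that $\mu$ is non-increasing on $\{-1,\dots,k+1\}$. Indeed $\mu_{n+1}\le\mu_{n}$ is equivalent to $h(n+1)/h(n)\le u$, which after clearing the positive denominators becomes $P(w)\ge0$ for a quadratic $P$ in $w=x^{n}$ with negative leading coefficient; but $h(n+1)/h(n)=u$ exactly at $n=-1$ (from $h(-1)=d$, $h(0)=1$) and at $n=k$ (from the two outer pasting identities), so $P$ vanishes at $w=x^{-1}=\frac{p}{1-p}$ and at $w=x^{k}=r(c)$, and hence $P\ge0$ precisely on the interval between these two distinct roots --- which, since $k>0$, is exactly the range of $w=x^{n}$ as $n$ ranges over $\{-1,\dots,k\}$ (increasing in $n$ if $p<\frac12$, decreasing if $p>\frac12$). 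So $\mu$ is non-increasing, and then $\mu_{k+1}=1-\lambda\le\mu_{n}\le\mu_{-1}=1$ for $0\le n\le k$.

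Finally, the last displayed identity is just \eqref{eq2} rewritten: with $s=u^{n}$, $\frac{g(us)}{g(s)}=\frac{h(n+1)}{h(n)}=\widetilde u_{n+1}$, $\frac{g(ds)}{g(s)}=\widetilde d_{n+1}$, $\frac{g(s)}{c+g(s)}=\frac{\widetilde S_{n}}{cS_{0}+\widetilde S_{n}}$, and since \eqref{eq:tS_esplicit_p} was produced as the solution of the recursion equivalent to \eqref{eq2} with $\widetilde S_{0}=1$, $\widetilde S_{-1}=d$, it suffices to verify by a direct substitution that the M\"obius form satisfies that second-order recursion (with the linear-fractional analogue for $p=\frac12$). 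I expect the real difficulty to be bookkeeping rather than conceptual: the outer pasting identities and the recursion check both mean grinding through sizeable rational expressions while tracking signs (positivity of $c$, of $\beta_{p}-z_{n}$, of $h$), whereas the only step needing a genuine idea --- the sandwich bound --- is dispatched by the ``quadratic with prescribed roots'' observation above.
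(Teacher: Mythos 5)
Your outline parallels the paper's proof in structure (reduce to the grid $s=u^n$, check the pasting values, monotonicity, the sandwich, and the recursion identity), and in one place it is genuinely nicer: for the sandwich you exploit that $\mu_n=h(n)d^n$ satisfies $\mu_{n+1}\le\mu_n$ iff a quadratic $Q(w)\ge 0$ in $w=x^n$ with negative leading coefficient, whose two roots are forced to be $w=x^{-1}$ and $w=x^{k}=r(c)$ by the pasting identities themselves. The paper instead differentiates $H(s)=g(s)/s$, reduces to a parabola $N$, and must prove the extra inequality $N(1)<0$ (via estimates on $\log(\frac{1-p}{p})/\log(d)$ and the localization of $c$); your ``quadratic with prescribed roots'' observation removes that step. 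Your treatment of the outer pasting values ``by construction'' (root $y_1=r(c)$ of \eqref{eq:elim-lambda-for-k}, then back-substitution defining $F(c)$) is an acceptable substitute for the paper's direct plug-in, modulo the caveat below.

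There is, however, a genuine gap at the ``no pole is crossed'' step, which is exactly where the paper invests its main computation. Your argument — the denominator $\beta_p-z_n=\beta_p-1+x^n$ is monotone in $n$, nonzero at $n=-1$, and ``cannot change sign on $\{-1,\dots,k+1\}$ without $h$ blowing up there, contradicting the finite values $h(-1)$, $h(k+1)$'' — is invalid: a monotone function can change sign strictly between two consecutive integers, in which case $h$ is finite at every grid point and no contradiction with finiteness arises. Nor does the sign at $n=-1$ help, because the monotonicity runs the wrong way: for $p>\frac12$ the denominator is \emph{decreasing} in $n$ and positive at $n=-1$, so it could still turn negative before $n=k+1$ (symmetrically for $p<\frac12$). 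Constant sign of the denominator over the whole range $\{-1,\dots,k+1\}$ is precisely what both your monotonicity-by-composition argument and your ``clearing the positive denominators'' step in the sandwich rely on, so the gap propagates. The paper closes it by computing the extreme value $x^{k+1}+\beta_p-1=\frac{\bigl(1-p-pd-c(2p-1)\bigr)(c+1)d(2p-1)}{(1-d)^2(1-p)\,c\,p}$ and pinning down its sign using the localization $\bar c<c<\frac{1-p-pd}{2p-1}$ for $p>\frac12$, resp.\ $c>\bar c$ for $p<\frac12$, from Proposition~\ref{resultsc} — a piece of information you never invoke for this purpose. A patch along your lines does exist (if the sign flipped before $n=k+1$, then $h(k+1)$ would lie on the far branch of the M\"obius map, where its values are negative, contradicting $h(k+1)=(1-\lambda)u^{k+1}>0$), but even that, and indeed the back-substitution at $n=k,k+1$ itself, needs $1-p-pd-c(2p-1)\neq 0$, i.e.\ the same bounds on $c$; as written, the well-definedness/monotonicity step fails and the paper's sign analysis (or an equivalent) must be inserted.
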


Define the sequence of stopping times $(\rho_n)_{n=0}^{\infty}$,
$(\sigma_n)_{n=1}^{\infty}$ and a process $(m_t)_{t \geq 0}$ by $$\rho_0=1
\mbox{ and } m_t=\min_{0 \leq i \leq t} S_i, \mbox{ } 0 \leq t \leq
\sigma_{1},$$ where $\sigma_1$ is defined as
$$ \sigma_{1}=\min \left \{ t \geq \rho_0: \frac{S_t}{m_t} =  \bar s \mbox{ }
  \& \mbox{ } \frac{S_{t-1}}{m_{t-1}} = \bar s  \right \}.$$ 
Then, define the process $(M_t)_{t \geq 0}$ as
$$M_t=\max_{\sigma_1 \leq i \leq t} S_i, \mbox{ } \sigma_1 \leq t \leq \rho_1,$$
where $\rho_1$ is defined as
$$\rho_1=\min \left \{ t \geq \sigma_1 :\frac{ S_t}{M_t} = \frac{1}{\bar s}
  \mbox{ } \& \mbox{ } \frac{ S_{t-1}}{M_{t-1}} = \frac{1}{\bar s} \right
\}.$$ 
Afterwards, we again pass to the running minimum and define
$$m_t=\min_{\rho_1 \leq i \leq t} S_i, \mbox{ } \rho_1 \leq t \leq \sigma_{2},$$
where $$ \sigma_{2}=\min \left \{ t \geq \rho_1: \frac{S_t}{m_t} = \bar s
  \mbox{ } \& \mbox{ } \frac{S_{t-1}}{m_{t-1}} = \bar s 
\right \}.$$ Then, for $t \geq \sigma_2,$ we define
$$M_t=\max_{\sigma_2 \leq i \leq t} S_i, \mbox{ } \sigma_2 \leq t \leq \rho_2,$$
where
$$\rho_2=\min \left \{ t \geq \sigma_2 : \frac{ S_t}{M_t} = \frac{1}{\bar s}
  \mbox{ } \& \mbox{ } \frac{ S_{t-1}}{M_{t-1}} = \frac{1}{\bar s} \right
\}.$$ 
 
Proceeding in a similar way, we get the stopping times
$(\sigma_n)_{n=1}^{\infty}$, $(\rho_n)_{n=1}^{\infty}$. Both $\sigma_n$ and
$\rho_n$ increase a.s.~to infinity. Note that these stopping times are indeed
attained because $S$ is a binomial model, $\bar s=u^k$ where $k \in \N$ and
$\frac{S_0}{m_0}=1, \frac{S_{\sigma_n}}{M_{\sigma_n}}=1,
\frac{S_{\rho_n}}{m_{\rho_n}}=1, \mbox{ for } n \geq 1.$ Moreover, we see that
$m_t$ and $M_t$ are only defined on stochastic intervals
$\llbracket\rho_{n-1},\sigma_n \rrbracket$ and $\llbracket\sigma_n, \rho_n
\rrbracket$ respectively.  Note that $ \bar s m_{\sigma_n-1}=S_{\sigma_n-1}
\mbox{ and } M_{\rho_n-1}=\bar s S_{\rho_n-1} \mbox{ for } n \geq 1. $ Then,
we extend the processes $M$ and $m$ to $\N$ by
$$M_t:= \bar s m_t, \mbox{ for } t \in \bigcup_{n=1}^\infty
\llbracket\rho_{n-1},\sigma_{n} \llbracket \mbox{ and } 
m_t:=\frac{M_t}{\bar s}, \mbox{ for } t \in \bigcup_{n=1}^\infty
\llbracket\sigma_n, \rho_n 
\llbracket.  $$ 
Therefore, we have $$m_t \leq S_t \leq \bar s m_t \mbox{ for } t \geq 0.$$
Furthermore, by construction, $m$ decreases only on $\{S_t=m_t  \}$ and
increases only on  
$\{ S_t= M_t \}=\{ S_t= \bar s m_t \}.$

Now, we can define a candidate for a shadow price. The result shows that it is
a generalized binomial model.
\begin{proposition}\label{PropBnm}
  Define $\widetilde S_t=m_t g(\frac{S_t}{m_t}), t \geq 0. $ Then, $\widetilde
  S$ is an adapted process which lies in the bid-ask interval
  $[(1-\lambda)S, S].$ Moreover, consider the multipliers $\widetilde{u}_t$
  and $\widetilde{d}_t$ implicitly defined by
  \begin{equation*}
    \widetilde{S}_{t+1} =
    \begin{cases}
      \widetilde{u}_{t+1} \widetilde{S}_t, & S_{t+1} = u S_t,\\
      \widetilde{d}_{t+1} \widetilde{S}_t, & S_{t+1} = d S_t,
    \end{cases}
  \end{equation*}
  then we have
  $$\widetilde u_{t+1} = \frac{g(\frac{S_t u}{m_t})}{g(\frac{S_t}{m_t})}
  >1>  
  \widetilde d_{t+1} = \frac{g(\frac{S_t d}{m_t})}{g(\frac{S_t}{m_t})}.$$
\end{proposition}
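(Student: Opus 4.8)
The plan is to establish the three assertions of the proposition in turn — adaptedness of $\widetilde S$, the bid-ask containment $\widetilde S\in[(1-\lambda)S,S]$, and the formula for the multipliers $\widetilde u,\widetilde d$ together with $\widetilde u>1>\widetilde d$ — the first two being essentially immediate and the third reducing to a short case analysis governed entirely by the smooth-pasting identities~\eqref{smpaste}.

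For adaptedness I would simply observe that $m$ is built pathwise from $S$ through running minima and maxima taken between the $(\F_t)$-stopping times $\sigma_n,\rho_n$, so each $m_t$ is $\F_t$-measurable and hence so is $\widetilde S_t=m_t\,g(S_t/m_t)$. For the containment, recall from the construction preceding the proposition that $m_t\le S_t\le\bar s\,m_t$; thus $s:=S_t/m_t$ is a power of $u$ lying in $[1,\bar s]$, in particular in the domain $\{d,1,\dots,\bar s,u\bar s\}$ of $g$, and Proposition~\ref{Propg} gives $(1-\lambda)s\le g(s)\le s$, which after multiplication by $m_t>0$ is exactly $(1-\lambda)S_t\le\widetilde S_t\le S_t$.

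The core is the multiplier identity. Fix $t$, write $s=S_t/m_t$, and note that $S_tu/m_t$ and $S_td/m_t$ are again powers of $u$ inside $\{d,1,\dots,u\bar s\}$, hence in the domain of $g$. I claim $\widetilde S_{t+1}=m_t\,g(S_tu/m_t)$ on $\{S_{t+1}=uS_t\}$ and $\widetilde S_{t+1}=m_t\,g(S_td/m_t)$ on $\{S_{t+1}=dS_t\}$; dividing by $\widetilde S_t=m_t\,g(s)$ then yields the asserted formulas, and since $g$ is strictly increasing and positive (Proposition~\ref{Propg}) while $u>1>d$, this also gives $\widetilde u_{t+1}>1>\widetilde d_{t+1}>0$. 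If $m_{t+1}=m_t$ the claim is trivial, since then $S_{t+1}/m_{t+1}$ equals $S_tu/m_t$ or $S_td/m_t$. Otherwise $m$ jumps at $t$, and by the construction $m$ can only decrease on $\{S_t=m_t\}$ and only increase on $\{S_t=\bar s\,m_t\}=\{S_t=M_t\}$, so either $s=1$ or $s=\bar s$. If $m_{t+1}<m_t$ then $s=1$, and the step must be down (an up-step cannot lower a running minimum, and $m$ never decreases while it tracks $M/\bar s$), so $m_{t+1}=S_{t+1}=dm_t$; using $g(1)=1$ and $g(d)=d$ from~\eqref{smpaste} one gets $\widetilde S_{t+1}=m_{t+1}g(1)=dm_t=m_t\,g(d)=m_t\,g(S_td/m_t)$. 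If $m_{t+1}>m_t$ then $s=\bar s$, $S_t=M_t=\bar s\,m_t$ and $m=M/\bar s$ here; a down-step would leave $M$, hence $m$, unchanged, so the step is up, $M_{t+1}=u\bar s\,m_t$, $m_{t+1}=um_t$, $S_{t+1}/m_{t+1}=\bar s$, and using $g(\bar s)=(1-\lambda)\bar s$ and $g(u\bar s)=(1-\lambda)u\bar s$ from~\eqref{smpaste} one gets $\widetilde S_{t+1}=um_t\,g(\bar s)=um_t(1-\lambda)\bar s=m_t\,g(u\bar s)=m_t\,g(S_tu/m_t)$.

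The step I expect to be the main obstacle is not any single computation but the bookkeeping at the phase-transition times $\sigma_n,\rho_n$: one must verify that the two local prescriptions for $m$ (running minimum versus $M/\bar s$) paste together there and that $s\in\{1,\bar s\}$ at such instants as well, so that the two jump cases above genuinely exhaust all possibilities. This is precisely where Assumption~\ref{ass:k-integer} (the bid-ask boundary is touched at an integer number of net up-steps) and the smooth-pasting conditions~\eqref{smpaste} do their work; granting this, the computations above are the whole proof, and I would relegate the remaining case-checking to the appendix as announced.
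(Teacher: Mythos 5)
Your proof is correct and takes essentially the same route as the paper's: adaptedness of $\widetilde S$ from adaptedness of $m$, the bound $1\le S_t/m_t\le\bar s$ plus Proposition~\ref{Propg} for the bid-ask containment, and a case split according to whether $m$ moves, settled via the smooth-pasting values $g(d)=d$, $g(1)=1$, $g(\bar s)=(1-\lambda)\bar s$, $g(u\bar s)=(1-\lambda)u\bar s$ and the monotonicity of $g$. The reflection facts you defer to the phase-transition bookkeeping (that $m$ changes only on $\{S_t=m_t\}$ resp.\ $\{S_t=\bar s m_t\}$ and that $m_t\le S_t\le\bar s m_t$) are exactly the properties the paper records as part of the construction preceding the proposition and then invokes without further proof.
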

\begin{proof}
  $\widetilde S$ is adapted because $m$ is adapted.
Moreover, $$1\leq \frac{S_t}{m_t} \leq \bar s, \mbox{ for } t \geq 0.$$ Also Proposition \ref{Propg} implies that 
$$(1-\lambda) s \leq g(s) \leq s \mbox{ for } 1 \leq s \leq \bar s .$$ Hence $\widetilde S$ lies in the bid-ask interval. 
The ratios in the last assertion easily follow in the case $m_t<S_t<\bar s m_t$ as $m_t$ does not change. In the cases $S_t=m_t$ and $S_t=\bar s m_t$
they follow using $g(d)=d$ and $g(u \bar s)=(1-\lambda) u \bar s$ respectively. Finally, $\widetilde
  u_{t+1}>1>\widetilde d_{t+1},$ since $g$ is increasing.
\end{proof}

The $\log$-optimal portfolio can be given in closed form relative to the
process $m$ and the sequence of stopping times $\rho$ and $\sigma$.
\begin{theorem}\label{th1}
  Let $\widetilde S_t= m_t g \left (\frac{S_t}{m_t} \right ).$ Then the
  $\log$-optimizer $(\vphi^0_t, \varphi_t)$ in the frictionless market with
  $\widetilde S$ exists and satisfies $(\vphi^0_{-1}, \varphi_{-1})=(x,0)$,
  $(\vphi^0_{0}, \varphi_{0})=(\frac{cx}{c+1}, \frac{x}{c+1})$ and for $t >0$
 $$ \vphi^0_t= \begin{cases}  \vphi^0_{\rho_{n-1}-1} \left ( \frac{c+d}{c+1} \right )^{\frac{\log(m_t)-\log(m_{\rho_{n-1}-1})}{\log(d)}}, & \mbox{ on } 
\cup_{n=1}^{\infty} \llbracket\rho_{n-1},\sigma_n \llbracket,  \\
                          \vphi^0_{\sigma_n-1} \left ( \frac{cd+(1-\lambda)\bar s}{c+(1-\lambda) \bar s} \right )^{\frac{\log(m_t)-\log(m_{\rho_{n-1}-1})}{\log(d)}} 
\frac{m_t}{m_{\sigma_{n}}-1}, & \mbox{ on } \cup_{n=1}^{\infty} \llbracket\sigma_{n},\rho_n \llbracket,               
\end{cases} $$ 
 together with 
$$ \vphi_t= \begin{cases}  \varphi_{\rho_{n-1}-1} \left ( \frac{c+d}{c+1} \right )^{\frac{\log(m_t)-\log(m_{\rho_{n-1}-1})}{\log(d)}} 
\frac{m_{\rho_{n-1}-1}}{m_t}, & \mbox{ on } \cup_{n=1}^{\infty} \llbracket\rho_{n-1},\sigma_n \llbracket,  \\
 \varphi_{\sigma_n-1} \left ( \frac{cd+(1-\lambda)\bar s}{c+(1-\lambda) \bar s} \right )^{\frac{\log(m_t)-\log(m_{\rho_{n-1}-1})}{\log(d)}}, & \mbox{ on } 
\cup_{n=1}^{\infty} \llbracket\sigma_{n},\rho_n \llbracket.               
\end{cases} $$
Furthermore, the optimal fraction of wealth invested in the stock satisfies
$$\widetilde \pi_t= \frac{\vphi_t \widetilde S_t}{\vphi^0_t+\vphi_t \widetilde S_t}=\frac{g \left (\frac{S_t}{m_t} \right )}
{c+g \left (\frac{S_t}{m_t} \right )}. $$  
\end{theorem}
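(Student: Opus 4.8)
The plan is to read off everything from Propositions~\ref{frictionless}, \ref{PropBnm} and~\ref{Propg}. By Proposition~\ref{PropBnm} the process $\widetilde S_t=m_tg(S_t/m_t)$ is a generalized binomial model of the type considered in Proposition~\ref{frictionless}: the multipliers
\[
\widetilde u_{t+1}=\frac{g(S_tu/m_t)}{g(S_t/m_t)}>1>\widetilde d_{t+1}=\frac{g(S_td/m_t)}{g(S_t/m_t)}>0
\]
are $\sigma(w_1,\dots,w_t)$-measurable since $S_t$ and $m_t$ are, and $g>0$; moreover the arguments $S_tu/m_t$ and $S_td/m_t$ always lie in the domain $\{d,1,\dots,\bar s,u\bar s\}$ of $g$ because $m_t\le S_t\le\bar s m_t$. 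Hence Proposition~\ref{frictionless} yields existence of the $\log$-optimizer $(\varphi^0,\varphi)$ for $\widetilde S$ together with the optimal wealth fraction
\[
\widetilde\pi_t=\frac{p\,\widetilde u_{t+1}+(1-p)\,\widetilde d_{t+1}-1}{(\widetilde u_{t+1}-1)(1-\widetilde d_{t+1})}.
\]
Writing $s=S_t/m_t\in[1,\bar s]$ and substituting the multipliers, the last identity of Proposition~\ref{Propg} gives at once $\widetilde\pi_t=g(S_t/m_t)/\bigl(c+g(S_t/m_t)\bigr)$, which is the final assertion of the theorem.

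The first step toward the explicit formulas is to note that this identity is equivalent to the single relation $\varphi^0_t=c\,m_t\,\varphi_t$ for all $t\ge0$: inserting $\widetilde S_t=m_tg(S_t/m_t)$ into $\widetilde\pi_t=\varphi_t\widetilde S_t/(\varphi^0_t+\varphi_t\widetilde S_t)$ and cancelling the positive factor $g(S_t/m_t)$ gives $\varphi^0_t=c\,m_t\,\varphi_t$, and this holds at $t=0$ because $(\varphi^0_0,\varphi_0)=\bigl(\tfrac{cx}{c+1},\tfrac{x}{c+1}\bigr)$, $m_0=1$, $\widetilde S_0=1$. Combining $\varphi^0_t=c\,m_t\,\varphi_t$ with the (frictionless) self-financing relation $\varphi^0_{t+1}+\varphi_{t+1}\widetilde S_{t+1}=\varphi^0_t+\varphi_t\widetilde S_{t+1}$ produces the one-step recursion
\[
\frac{\varphi_{t+1}}{\varphi_t}=\frac{c\,m_t+\widetilde S_{t+1}}{c\,m_{t+1}+\widetilde S_{t+1}},\qquad \frac{\varphi^0_{t+1}}{\varphi^0_t}=\frac{m_{t+1}}{m_t}\cdot\frac{\varphi_{t+1}}{\varphi_t},
\]
which, together with the initial holdings, determines $(\varphi^0,\varphi)$ completely.

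It then remains to evaluate this recursion along the three possible one-step transitions and to telescope. Recall that $m$ changes only at a \emph{buy} transition $\{S_t=m_t,\ S_{t+1}=dS_t\}$, where $m_{t+1}=dm_t$, and at a \emph{sell} transition $\{S_t=M_t=\bar s m_t,\ S_{t+1}=uS_t\}$, where $m_{t+1}=um_t$; otherwise $m_{t+1}=m_t$. On an idle transition the recursion gives $\varphi_{t+1}=\varphi_t$ and $\varphi^0_{t+1}=\varphi^0_t$, i.e.\ no trading inside an excursion. On a buy transition $\widetilde S_{t+1}=dm_t\,g(1)=dm_t$ by the smooth-pasting value $g(1)=1$, whence $\varphi_{t+1}/\varphi_t=(c+d)/\bigl(d(c+1)\bigr)$ and $\varphi^0_{t+1}/\varphi^0_t=(c+d)/(c+1)$. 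On a sell transition $\widetilde S_{t+1}=um_t\,g(\bar s)=um_t(1-\lambda)\bar s$ by $g(\bar s)=(1-\lambda)\bar s$, whence $\varphi_{t+1}/\varphi_t=\bigl(cd+(1-\lambda)\bar s\bigr)/\bigl(c+(1-\lambda)\bar s\bigr)$ and $\varphi^0_{t+1}/\varphi^0_t=u\bigl(cd+(1-\lambda)\bar s\bigr)/\bigl(c+(1-\lambda)\bar s\bigr)$. Since on $\llbracket\rho_{n-1},\sigma_n\llbracket$ the process $m$ is a running minimum (only buy and idle transitions occur) while on $\llbracket\sigma_n,\rho_n\llbracket$ it is non-decreasing (only sell and idle transitions), the number of trading transitions between a reference time and any later time $t$ in such an interval equals $\bigl(\log m_t-\log m_{\mathrm{ref}}\bigr)/\log d$ up to sign; iterating the three factors above, and using $S_{\rho_{n-1}-1}=m_{\rho_{n-1}-1}$ and $S_{\sigma_n-1}=\bar s m_{\sigma_n-1}$ to classify the junction steps $\rho_{n-1}-1\to\rho_{n-1}$ and $\sigma_n-1\to\sigma_n$ as a buy, resp.\ a sell, a straightforward induction on $n$ then produces the closed-form expressions above.

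The algebra in the three transition cases is routine; the real work is the combinatorial bookkeeping over the excursion structure --- classifying every transition as buy/sell/idle from the position of $S$ relative to the running extrema $m$ and $M$, handling the junction steps correctly, and checking that the exponents obtained by counting trading transitions agree with those displayed in the statement.
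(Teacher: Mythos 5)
Your proposal is correct and rests on exactly the same ingredients as the paper's proof: Proposition~\ref{PropBnm} to see that $\widetilde S_t=m_t g(S_t/m_t)$ is a generalized binomial model, Proposition~\ref{frictionless} for existence of the log-optimizer and the formula for its wealth fraction, the final identity of Proposition~\ref{Propg} to get $\widetilde\pi_t=g(S_t/m_t)/(c+g(S_t/m_t))$, the equivalent relation $\vphi^0_t=c\,m_t\vphi_t$, and the same buy/sell/idle bookkeeping of the self-financing condition (with $\widetilde S$ equal to $m_t$, resp.\ $(1-\lambda)\bar s\,m_t$, at the trading steps via $g(1)=1$ and $g(\bar s)=(1-\lambda)\bar s$). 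The only difference is direction --- the paper verifies that the displayed formulas define a self-financing portfolio with fraction $g/(c+g)$ and then concludes by uniqueness of the log-optimizer, whereas you derive the formulas forward from the optimality condition through the one-step recursion --- so the mathematical content is the same argument in mirror image.
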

\begin{proof}
  We will show that $(\vphi^0_t, \varphi_t)$ given above is indeed the
  log-optimal portfolio.  It is clear from the above definition that
  $(\vphi^0_t, \varphi_t)$ is an adapted process. Inductively, we obtain that
\begin{equation} \label{eq5}
\vphi^0_t= c m_t \varphi_t, \mbox{ for } t \geq 0,
\end{equation}
both on $\cup_{n=1}^{\infty} \llbracket\rho_{n-1},\sigma_n \llbracket$ and on
$\cup_{n=1}^{\infty} \llbracket\sigma_{n},\rho_n \llbracket.$ Therefore, the
self-financing condition
$$\vphi^0_{t+1}-\vphi^0_t + \widetilde S_{t+1} (\varphi_{t+1}-\varphi_t)=0,$$
follows easily when $m_t$ does not change, as then $\vphi^0_t$ and $\varphi_t$
do not change, either. If $m_t$ changes and $t \in \cup_{n=1}^{\infty}
\llbracket\rho_{n-1},\sigma_n \llbracket$ , then the self-financing condition
follows using \eqref{eq5} and the fact that $\widetilde S_{t}=m_t \mbox{ and }
\widetilde S_{t+1}=m_{t+1}=d m_t.$ It follows similarly for $t \in
\cup_{n=1}^{\infty} \llbracket\sigma_{n},\rho_n \llbracket.$ Therefore,
\eqref{eq5} implies that the fraction of wealth in the stock is
$$\frac{\varphi_t \widetilde S_t}{\vphi^0_t+\varphi_t \widetilde S_t}=\frac{g \left (\frac{S_t}{m_t} \right )}
{c+g \left (\frac{S_t}{m_t} \right )}.$$ Now, we prove that the same holds for
the $\log$-optimizer and hence by uniqueness we are done.  By Proposition
\ref{PropBnm}, $\widetilde S$ is a generalized binomial model and hence
Proposition \ref{frictionless} and Proposition \ref{Propg} imply that the
fraction of wealth invested in the stock is given by
\begin{equation*}
  \widetilde \pi_t= \frac{p \widetilde u_{t+1}+(1-p) \widetilde d_{t+1} -1 }{(\widetilde
    u_{t+1}-1)(1-\widetilde d_{t+1})}= 
  \frac{p \frac{g(u\frac{S_t}{m_t})}{g(\frac{S_t}{m_t})} +(1-p)\frac
    {g(d\frac{S_t}{m_t})}{g(\frac{S_t}{m_t})}-1} 
  {(\frac{g(u\frac{S_t}{m_t})}{g(\frac{S_t}{m_t})} - 1) (1 -
    \frac{g(d\frac{S_t}{m_t})}{g(\frac{S_t}{m_t})})} = \frac{g \left
      (\frac{S_t}{m_t} \right )}{c + g \left (\frac{S_t}{m_t} \right
    )}. \qedhere 
\end{equation*}
\end{proof}

\begin{corollary}\label{cor:shadow-price-process}
Let $\widetilde S_t= m_t g \left (\frac{S_t}{m_t} \right ).$ Then $\widetilde S_t$ is
a shadow price. 
\end{corollary}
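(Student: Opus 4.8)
The plan is to check the three conditions in the definition of a shadow price process for $\widetilde{S}_t = m_t g(S_t/m_t)$. Adaptedness together with $(1-\lambda)S_t \le \widetilde{S}_t \le S_t$ is precisely Proposition~\ref{PropBnm}, and the existence of the log-optimal portfolio $(\vphi^0,\varphi)$ for the frictionless market driven by $\widetilde{S}$ is Theorem~\ref{th1}. Theorem~\ref{th1} also hands us the two ingredients we will actually use: the identity $\vphi^0_t = c\, m_t\, \varphi_t$ valid for all $t \ge 0$ (this is~\eqref{eq5} in its proof) and the fact that $\varphi_t > 0$ for $t \ge 0$ (immediate from $\varphi_0 = x/(c+1) > 0$ and the multiplicative recursions). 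Thus the entire content of the corollary is the pair of inclusions $\{\varphi_t - \varphi_{t-1} > 0\} \subseteq \{\widetilde S_t = S_t\}$ and $\{\varphi_t - \varphi_{t-1} < 0\} \subseteq \{\widetilde S_t = (1-\lambda)S_t\}$.

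First I would rewrite $\varphi_t - \varphi_{t-1}$ in a form whose sign is transparent, thereby bypassing the explicit branchwise formulas of Theorem~\ref{th1}. Since $(\vphi^0,\varphi)$ is self-financing in the frictionless market with $\widetilde{S}$, its wealth at time $t$ is $\widetilde V_t = \vphi^0_{t-1} + \varphi_{t-1}\widetilde{S}_t = \vphi^0_t + \varphi_t\widetilde{S}_t$; substituting~\eqref{eq5} at $t-1$ and at $t$ gives
\begin{equation*}
  \varphi_{t-1}\bigl(c\, m_{t-1} + \widetilde{S}_t\bigr) = \widetilde V_t = \varphi_t\bigl(c\, m_t + \widetilde{S}_t\bigr),
\end{equation*}
hence
\begin{equation*}
  \varphi_t - \varphi_{t-1} = \frac{c\,\varphi_{t-1}\,(m_{t-1} - m_t)}{c\, m_t + \widetilde{S}_t}, \qquad t \ge 1.
\end{equation*}
As $\varphi_{t-1} > 0$, $c > 0$ and $c\,m_t + \widetilde{S}_t > 0$, the sign of $\varphi_t - \varphi_{t-1}$ is exactly the sign of $m_{t-1} - m_t$; at $t = 0$ one checks directly that $\varphi_0 - \varphi_{-1} = x/(c+1) > 0$ while $\widetilde{S}_0 = g(1) = 1 = S_0$.

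It then remains to match this with the behaviour of the running extremum $m$. On $\{\varphi_t - \varphi_{t-1} > 0\} = \{m_t < m_{t-1}\}$ the process $m$ strictly decreases at time $t$, and by construction $m$ can decrease only on $\{S_t = m_t\}$; there $S_t/m_t = 1$, so $\widetilde{S}_t = m_t\, g(1) = m_t = S_t$ by the smooth-pasting relation $g(1) = 1$ of Proposition~\ref{Propg}. Symmetrically, on $\{\varphi_t - \varphi_{t-1} < 0\} = \{m_t > m_{t-1}\}$ the process $m$ strictly increases, which happens only on $\{S_t = \bar s\, m_t\}$; there $S_t/m_t = \bar s$, so $\widetilde{S}_t = m_t\, g(\bar s) = m_t (1-\lambda)\bar s = (1-\lambda)S_t$, again by Proposition~\ref{Propg}. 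This yields the two required inclusions and finishes the proof.

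I do not expect a genuine obstacle here: the only point needing a little care is that~\eqref{eq5} and the self-financing relation hold at \emph{every} $t \ge 0$, in particular at the transition times $\sigma_n$ and $\rho_n$ where the closed-form expressions in Theorem~\ref{th1} pass from one branch to the other, so that the computation above applies uniformly and no separate analysis of those times is needed. Everything else is just the two structural facts about $m$ recorded right before Proposition~\ref{PropBnm} (that $m$ decreases only on $\{S_t = m_t\}$ and increases only on $\{S_t = \bar s\, m_t\}$) together with the smooth-pasting values $g(1) = 1$ and $g(\bar s) = (1-\lambda)\bar s$.
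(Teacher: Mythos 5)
Your proof is correct and follows the same overall route as the paper: everything reduces to the two inclusions, which are obtained from the recorded facts that $m$ decreases only on $\{S_t=m_t\}$ and increases only on $\{S_t=\bar s\, m_t\}$, combined with the pasting values $g(1)=1$ and $g(\bar s)=(1-\lambda)\bar s$ from Proposition~\ref{Propg}. Where you genuinely deviate is in how the sign of $\varphi_t-\varphi_{t-1}$ is tied to the movement of $m$: the paper simply invokes ``the definition of $\varphi$ in Theorem~\ref{th1}'', i.e., reads the monotonicity off the closed-form branchwise expressions, whereas you derive the identity $\varphi_t-\varphi_{t-1}= c\,\varphi_{t-1}(m_{t-1}-m_t)/(c\, m_t+\widetilde S_t)$ from the ratio relation \eqref{eq5} together with the self-financing property (both established inside the proof of Theorem~\ref{th1}). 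This buys you something: the sign is transparent uniformly in $t$, including at the switching times $\sigma_n,\rho_n$, and you never need to check monotonicity of the explicit formulas in $m_t$ --- a verification the paper leaves implicit and which, on the branch $\llbracket\sigma_n,\rho_n\llbracket$, is not immediate from the expression as printed (one must read the exponent there with care for the formula to yield a decrease of $\varphi$ when $m$ increases); your increment identity sidesteps this entirely. Your explicit treatment of $t=0$ and of the positivity of $\varphi$ (needed so that the prefactors do not affect the sign) completes the argument; note only that you also use $c>0$, which holds since $c>\bar c>0$ by Proposition~\ref{resultsc} under the standing assumption $\frac{d}{1+d}<p<\frac{1}{1+d}$.
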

\begin{proof}
 By definition, $m$ decreases only on $\{S_t=m_t  \}$ and increases only on 
$\{ S_t= \bar s m_t \}.$ Hence, by definition of $\varphi$ in Theorem \ref{th1}, we obtain
\begin{align*}
  & \{ \varphi_t- \varphi_{t-1} >0 \}
  \subseteq \{S_t=m_t  \}=\{ \widetilde S_{t}=S_t \}  \mbox{ and } \\ &\{ \varphi_t-
   \varphi_{t-1} <0 \} 
  \subseteq \{S_t=\bar s m_t  \}=\{ \widetilde S_{t}=(1-\lambda)S_t \}. \qedhere
\end{align*}

\end{proof}

\section{Asymptotic expansions}
\label{sec:asympt-expans}

Having constructed the shadow price process and the corresponding log-optimal
portfolio process in Theorem~\ref{th1}, we can now start to reap the
benefits. Note, however, that the almost explicit account of the log-optimal
portfolio depends on the optimal ratio $c$ between wealth invested in bonds
and stocks, respectively. We have implicitly found $c$ as solution of a
non-linear equation $\lambda = F(c)$, see~\eqref{eq:general-p-lambda}, but we
need a better grip on it to facilitate further understanding of the optimal
portfolio under proportional transaction costs $\lambda$, which can be gained
by formal series expansions. In the following, denote
$\eta\coloneqq\frac{(2p-1) \log(d)}{(1-d) \log((1-p)/p)}$ if $p \neq
\frac{1}{2}$ and $\eta\coloneqq\frac{\log(d)}{-2(1-d)}$ if $p=\frac{1}{2}.$

\begin{remark}
  Assuming that we know $c$, we can find the optimal portfolio and the value
  function by a simple iteration on the tree in \emph{forward} direction,
  instead of the typical backward iteration. Thus, the shadow price method can
  be directly turned into an attractive numerical method by solving the
  equation for $c$ numerically.
\end{remark}

\begin{proposition}\label{prop:c-expansion}
  The optimal ratio of wealth invested in bonds and stocks $c$ has the series
  expansion
  $$c=\bar c + \sum_{i=1}^{\infty} c_i \lambda^i, $$ 
  where all the coefficients $c_i$ can be computed by means of well-known
  symbolic algorithms. In particular, the first two coefficients are given by
$$c_1=\frac{\bar c(1-p)}{(1+d) \eta-1} \mbox{ and } 
c_2=\frac{\bar c(1-p) \left [(1+d)^2 \eta^2+ \frac{d^2+(2-2p)d-1-2p}{1-d} \eta+\frac{2p(p+pd-d)}{1-d}\right]}{2[(1+d)\eta-1]^3}. $$
\end{proposition}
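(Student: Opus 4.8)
\emph{Proof proposal.} The plan is to invert the implicit relation $\lambda = F(c)$ near $(c,\lambda)=(\bar c,0)$ by means of the real-analytic inverse function theorem, and then to read off the coefficients $c_i$ by matching powers of $\lambda$ (equivalently, Lagrange inversion). First one checks that $F$ is real-analytic in a neighbourhood of $\bar c$. For $p\neq\frac{1}{2}$ write $F(c)=1-A(c)^2\,r(c)^{b-1}$, where $A(c)=\frac{c(p+pd-d)+d(2p-1)}{(1-d)(1-p)}$ and $r(c)$ is as in Proposition~\ref{prop:define-sbar}. Using $\bar c(p+pd-d)=1-p-pd$ one gets by direct substitution that $A(\bar c)=r(\bar c)=1$; hence both bases are positive near $\bar c$ and $F$ is a composition of real-analytic functions there, and moreover $F(\bar c)=0$. (For $p=\frac{1}{2}$ one has $\bar c=1$, $F(c)=1-c^2d^{(c+d)(c-1)/(c(1-d))}$, and analyticity and $F(\bar c)=0$ are immediate.)

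Next I would compute $F'(\bar c)$ by logarithmic differentiation. From $\log(1-F)=2\log A+(b-1)\log r$ and $A(\bar c)=r(\bar c)=1$ one obtains $A'(\bar c)=\frac{p+pd-d}{(1-d)(1-p)}$ (as $A$ is affine) and, by the quotient rule applied to $r$, $r'(\bar c)=\frac{(1+d)(1-2p)}{\bar c(1-d)(1-p)}$. Combining these and substituting $\bar c=\frac{1-p-pd}{p+pd-d}$ together with the identity $b=\frac{(1-d)\eta}{2p-1}$ (which holds by the definitions of $b$ and $\eta$), a number of terms telescope and one is left with
\begin{equation*}
  F'(\bar c) \;=\; \frac{(1+d)\eta-1}{\bar c(1-p)}\;\neq\;0,
\end{equation*}
the non-vanishing being valid under the standing parameter assumptions (this is precisely the quantity that degenerates in the Black--Scholes limit). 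The analytic inverse function theorem then gives that $c=c(\lambda)$ is real-analytic at $\lambda=0$ with $c(0)=\bar c$, which is the asserted convergent expansion $c=\bar c+\sum_{i\ge1}c_i\lambda^i$; the coefficients can be computed to any order by the standard symbolic recursion obtained from inserting this ansatz into $F(c)=\lambda$.

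Matching the coefficient of $\lambda$ yields $c_1=1/F'(\bar c)=\frac{\bar c(1-p)}{(1+d)\eta-1}$, as claimed. Matching the coefficient of $\lambda^2$ yields $c_2=-\frac{F''(\bar c)}{2F'(\bar c)^3}$, and $F''(\bar c)$ is obtained by a second round of logarithmic differentiation (differentiating $\frac{(1-F)'}{1-F}=2\frac{A'}{A}+(b-1)\frac{r'}{r}$ once more, using $A''=0$); substituting this, together with the closed form for $F'(\bar c)$ and again $b=\frac{(1-d)\eta}{2p-1}$, collapses $c_2$ to the stated expression. The case $p=\frac{1}{2}$ is handled in the same way with the simpler $F$ above, or by continuity, since $\eta$ and all other quantities extend continuously across $p=\frac{1}{2}$.

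The main obstacle is purely computational: the tidy closed forms for $F'(\bar c)$ and, much more so, for the numerator of $c_2$ only emerge after the substitutions $\bar c=\frac{1-p-pd}{p+pd-d}$ and $b=\frac{(1-d)\eta}{2p-1}$ cancel a large number of intermediate terms. This is routine but lengthy, and — as the statement indicates — is most safely carried out with computer algebra.
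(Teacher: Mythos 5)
Your proposal is correct and follows essentially the same route as the paper: expand $F$ around $\bar c$, observe $F(\bar c)=0$ and $F'(\bar c)=\frac{(1+d)\eta-1}{\bar c(1-p)}\neq 0$, and invert via the analytic inverse function theorem/Lagrange inversion to get $c_1=1/F'(\bar c)$ and $c_2=-F''(\bar c)/(2F'(\bar c)^3)$. The only difference is that you obtain the first-order coefficient by hand via logarithmic differentiation of $1-F=A^2r^{b-1}$ (your intermediate values $A(\bar c)=r(\bar c)=1$, $r'(\bar c)$, and the identity $b=\frac{(1-d)\eta}{2p-1}$ all check out), whereas the paper delegates the Taylor coefficients to computer algebra, as you do for $c_2$.
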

\begin{proof}
  We will try to formally invert the power series for $\lambda$ as a function of $c$.  
  Since we can only invert such a power series when the $0$-order term vanishes, we
  expand the right hand side of equation~\eqref{eq:general-p-lambda} around
  the value $c = \overline{c}=\frac{1-p-pd}{p+pd-d}$, which is the optimal $c$
  in the frictionless binomial model.

We only consider the case $p\neq \frac{1}{2},$ the case $p=\frac{1}{2}$ being similar. 
Using Mathematica~\cite{mathematica}, we do a
Taylor expansion
\begin{equation}\label{eq:lambda-p-series}
  \lambda = F(c) = \lambda_{1} (c-\overline{c}) + \lambda_2 (c-\overline{c})^2 + \mathcal{O}((c-\overline{c})^3),
\end{equation}
where $$\lambda_1 = \frac{(1+d) \eta-1}{\bar c (1-p)}, \quad 
\lambda_2= - \frac{(1+d)^2 \eta^2+ \frac{d^2+(2-2p)d-1-2p}{1-d} \eta+\frac{2p(p+pd-d)}{1-d}}{2 \bar c^2  (1-p)^2 }. $$
Note that all coefficients of the series could, in principle, be found in
symbolic form. As the first order term $\lambda_1$ does not vanish, the
implicit function theorem implies the existence of an analytic local inverse
function $F^{-1}$. The power series coefficients of the inverse function can
be found using Lagrange's inversion theorem, see, for instance,
\cite[p.~527]{knu98}. Inverting the series~\eqref{eq:lambda-p-series}, we
thus obtain obtain a series for $c$ in terms of $\lambda$
\begin{equation} \label{eq:c-p-series}
  c= \barc + c_1 \lambda + c_2 \lambda^2 + \mathcal{O}(\lambda^3), 
\end{equation} 
where $$c_1=\frac{1}{\lambda_1}=\frac{\bar c (1-p) }{ (1+d) \eta -1},
c_2=-\frac{\lambda_2}{\lambda_1^3}=\frac{\bar c(1-p) \left [(1+d)^2 \eta^2+ \frac{d^2+(2-2p)d-1-2p}{1-d} \eta+
\frac{2p(p+pd-d)}{1-d}\right]}{2[(1+d)\eta-1]^3}.$$
Again, we note that higher order coefficients can be obtained explicitly using
symbolical algorithms.
\end{proof}
\begin{remark}
  When $p\ge 1/2$, Proposition~\ref{prop:c-expansion} yields a nice economic
  interpretation. Indeed, $c_1$ is positive and increasing in $d$ and
  decreasing in $p$. Hence, the investor becomes more conservative in the
  presence of transaction costs, as $c_1 \ge 0$, and this is more pronounced
  when $d$ is large or $p$ is small, as in these cases the potential average
  gains from investment in the risky asset are relatively small. For $p<1/2$,
  the situation is less intuitive, as then the optimal fraction $c$ can become
  negative, and it does so in a singular way -- by a jump from $+\infty$
  to $-\infty$.
\end{remark}

When following the optimal strategy given in Theorem~\ref{th1}, the fraction
$\pi_t$ of the total wealth invested in the stock is kept in the interval
$[(1+c)^{-1}, (1+c/\overline{s})^{-1}]$, the \emph{no-trade region}.
\begin{theorem}
  \label{thr:no-trade-region}
  The lower and upper boundaries $\underline{\theta}$ and $\overline{\theta}$ 
  of the no-trade-region satisfy the asymptotic expansions
\begin{gather*}
    \underline \theta \coloneqq
    \frac{1}{1+c}=\frac{p+pd-d}{1-d}-\frac{(1-p-pd)(p+pd-d)(1-p)}{((1+d)
      \eta-1)(1-d)^2} \lambda + \mathcal{O}(\lambda^2),\\ 
    \overline \theta \coloneqq \frac{1}{1+c/\bar s}= \frac{p+pd-d}{1-d} +
    \frac{(1-p-pd)(p+pd-d)((1+d) \eta-(1-p))}{((1+d) \eta-1) (1-d)^2} \lambda
    + \mathcal{O}(\lambda^2)
  \end{gather*}
  for $p \neq 1/2$ and
  \begin{gather*}
    \underline \theta \coloneqq \f{1}{2} - \f{1}{4} \f{1-d}{(1+d)\log(d^{-1})
      - 2(1-d)} \lambda + \mathcal{O}(\lambda^2),\\
    \overline \theta \coloneqq \f{1}{2} + \f{1}{4}
    \f{1-d+(1+d)\log(d)}{2(1-d)+(1+d) \log(d)} \lambda + \mathcal{O}(\lambda^2)
  \end{gather*}
  for $p = 1/2$.
  The width of the no-trade-region is therefore given by
  \begin{equation*}
    \overline \theta-\underline \theta=\frac{(1-p-pd)(p+pd-d)(1+d)
      \eta}{((1+d) \eta-1) (1-d)^2} \lambda + \mathcal{O}(\lambda^2)
  \end{equation*}
  for $p \neq 1/2$ and similarly for $p = 1/2$.
\end{theorem}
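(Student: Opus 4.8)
The plan is to reduce the whole statement to the single series $c = \barc + c_1\lambda + \mathcal{O}(\lambda^2)$ of Proposition~\ref{prop:c-expansion}, by first showing that the two no-trade boundaries are in fact \emph{rational} functions of $c$ and $\lambda$ alone. Both $\underline\theta = (1+c)^{-1}$ and $\overline\theta = (1+c/\bar s)^{-1}$ are already written in terms of $c$, so the only thing left to eliminate is $\bar s = u^k$ with $k$ from Proposition~\ref{prop:define-sbar}. For $p \neq \half$ one has $k = \log(r(c))/\log((1-p)/p)$ while $b = \log(d)/\log((1-p)/p)$, and since $d = u^{-1}$ this gives $\bar s = u^k = d^{-k} = r(c)^{-b}$. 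Now use the defining relation $\lambda = F(c) = 1 - A(c)^2\, r(c)^{\,b-1}$ of Proposition~\ref{resultsc}, where $A(c) = (c(p+pd-d)+d(2p-1))/((1-d)(1-p))$ (recall this relation \emph{is} Assumption~\ref{ass:3}). The transcendental factor $r(c)^{b-1}$ then cancels:
\[
  \frac{1}{\bar s} \;=\; r(c)^{\,b} \;=\; r(c)\cdot r(c)^{\,b-1} \;=\; \frac{r(c)\,(1-\lambda)}{A(c)^2} \;=\; \frac{(1-\lambda)\bigl(1-p-pd-c(2p-1)\bigr)}{c\,\bigl(c(p+pd-d)+d(2p-1)\bigr)},
\]
where the last equality uses the explicit form of $r$. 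Hence $c/\bar s$ is a rational function of $(c,\lambda)$, and for $p = \half$ the analogous computation from $F(c) = 1 - c^2 d^k$ and $\bar s = d^{-k}$ yields the simpler $c/\bar s = (1-\lambda)/c$. In particular $\underline\theta$ and $\overline\theta$ are now elementary functions of $c$ and $\lambda$ (and since $k>0$ forces $\bar s>1$, already $\overline\theta>\underline\theta$).

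Next I would substitute $c = \barc + c_1\lambda + \mathcal{O}(\lambda^2)$ into these expressions and Taylor-expand in $\lambda$ (using a CAS, exactly as in the proof of Proposition~\ref{prop:c-expansion}). At $\lambda = 0$ one has $c = \barc$ and $\bar s = 1$, so $\underline\theta = \overline\theta = (1+\barc)^{-1} = (p+pd-d)/(1-d)$, the common leading term. Differentiating, $\underline\theta'(0) = -c_1(1+\barc)^{-2}$; inserting $1+\barc = (1-d)/(p+pd-d)$ and $c_1 = \barc(1-p)/((1+d)\eta-1)$ from Proposition~\ref{prop:c-expansion} gives the stated $\lambda$-coefficient of $\underline\theta$. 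For $\overline\theta = (1+c/\bar s)^{-1}$ one needs $\frac{d}{d\lambda}(c/\bar s)\big|_{\lambda=0}$; differentiating the rational formula above and using the collapsing identities $c(p+pd-d)+d(2p-1)\big|_{c=\barc} = (1-p)(1-d)$ and $1-p-pd-\barc(2p-1) = (1-p-pd)(1-p)(1-d)/(p+pd-d)$, one obtains $\frac{d}{d\lambda}(c/\bar s)\big|_0 = -\frac{p}{1-p}c_1 - \barc$, whence $\overline\theta'(0) = -(1+\barc)^{-2}\bigl(-\frac{p}{1-p}c_1 - \barc\bigr)$ simplifies to the claimed coefficient of $\overline\theta$.

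Finally, the width $\overline\theta - \underline\theta$ is obtained by subtracting the two first-order coefficients; the summands $-(1-p)$ and $+(1-p)$ inside the numerators cancel, leaving $(1-p-pd)(p+pd-d)(1+d)\eta/\bigl(((1+d)\eta-1)(1-d)^2\bigr)$. The case $p = \half$ runs identically with $c/\bar s = (1-\lambda)/c$, $\barc = 1$ and $\eta = \log(d)/(-2(1-d))$: then $\overline\theta = c/(c+1-\lambda)$ has leading term $\half$ and $\lambda$-coefficient $(c_1+1)/4$, while $\underline\theta = (1+c)^{-1} = \half - \frac{c_1}{4}\lambda + \mathcal{O}(\lambda^2)$, and substituting $c_1 = (1-d)/((1+d)\log(d^{-1})-2(1-d))$ reproduces the displayed formulas.

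The main obstacle is the elimination step of the first paragraph: seeing that $\bar s$ is the $(-b)$-th power of the very quantity $r(c)$ whose $(b-1)$-th power appears inside $F$, so that the relation $\lambda = F(c)$ (i.e.\ Assumption~\ref{ass:3}) lets one trade the transcendental $b$-dependence for a rational one. Once the boundaries are rational in $(c,\lambda)$, the rest is bookkeeping — composing with the known series for $c$ and tracking the polynomial factors — the only delicate points being the algebraic identities $c(p+pd-d)+d(2p-1)\big|_{c=\barc} = (1-p)(1-d)$ and $pd-p-d+1 = (1-p)(1-d)$ that make the coefficients collapse to the stated closed forms.
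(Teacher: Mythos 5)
Your proposal is correct, and it verifies: I checked the key identity $1/\bar s = r(c)^b = r(c)\,r(c)^{b-1} = (1-\lambda)\,r(c)/A(c)^2$, the collapsing evaluations $D(\barc)=c(p+pd-d)+d(2p-1)\big|_{c=\barc}=(1-p)(1-d)$ and $N(\barc)=(1-p-pd)(1-p)(1-d)/(p+pd-d)$, the resulting derivative $\tfrac{d}{d\lambda}(c/\bar s)\big|_0=-\barc-\tfrac{p}{1-p}c_1$, and the $p=\half$ reduction $c/\bar s=(1-\lambda)/c$; all first-order coefficients come out exactly as stated in the theorem. The overall strategy is the same as the paper's — fix the expansion $c=\barc+c_1\lambda+\mathcal{O}(\lambda^2)$ from Proposition~\ref{prop:c-expansion}, then expand $\underline\theta=(1+c)^{-1}$ and $\overline\theta=(1+c/\bar s)^{-1}$ in $\lambda$ — but your treatment of $\bar s$ is a genuine refinement. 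The paper substitutes the series for $c$ directly into the transcendental formula $\bar s=u^k$, $k=\log(r(c))/\log((1-p)/p)$, obtains $\bar s=1+s_1\lambda+\mathcal{O}(\lambda^2)$ with $s_1=(1+d)\eta/((1+d)\eta-1)$, and delegates the Taylor expansions to Mathematica. You instead use the defining relation $\lambda=F(c)$ (Proposition~\ref{resultsc}) to cancel the factor $r(c)^{b-1}$ against the one hidden in $\bar s=r(c)^{-b}$, so that $c/\bar s$ becomes a rational function of $(c,\lambda)$; after that the first-order coefficients follow from elementary differentiation and the two collapsing identities, with no CAS needed. What the paper's route buys is uniformity (the same mechanical substitution produces arbitrarily many coefficients, and the intermediate quantity $s_1$ is reused elsewhere); what your route buys is a hand-checkable, conceptually cleaner derivation that makes transparent why the coefficients simplify. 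One small wording caveat: the relation $\lambda=F(c)$ is not literally Assumption~\ref{ass:3} but the definition of $c$ as the root of $F(c)=\lambda$ in Proposition~\ref{resultsc} (obtained from that assumption), which is exactly what licenses its use here.
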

\begin{proof}
  We again assume $p \neq \frac{1}{2},$ the case $p=\frac{1}{2}$ being similar. 
  We first need to compute the expansion for
  $\bar s = u^k$. Inserting the expansion for $c$ given in
  Proposition~\ref{prop:c-expansion} into the formula for $\overline{s}$ given
  in Proposition~\ref{prop:define-sbar}, we obtain
  \begin{equation*}
    \bar s=1+ s_1 \lambda +\mathcal{O}(\lambda^2), 
  \end{equation*}
  where $s_1=\frac{(1+d) \eta }{(1+d) \eta -1}$ and the further
  coefficients can, as usually, be computed using symbolic algorithms.
Then, again taking advantage of Mathematica~\cite{mathematica}, we find that the
lower boundary and the upper boundaries of the no-trade region have the
asymptotic series expansions  
  $$\underline
  \theta = \frac{1}{1+c} = \frac{p+pd-d}{1-d} -
  \frac{(1-p)(1-p-pd)(p+pd-d)}{((1+d) \eta-1)(1-d)^2} \lambda+
  \mathcal{O}(\lambda^2),$$ 
  $$\overline \theta = \frac{1}{1+c/\bar s} = \frac{p+pd-d}{1-d} +
  \frac{(1-p-pd)(p+pd-d)((1+d) \eta-(1-p))} 
  {((1+d) \eta-1)(1-d)^2} \lambda + \mathcal{O}(\lambda^2).$$
By subtracting, we get the desired formula for the width of the no-trade region.
\end{proof}
\begin{remark}
  Note that the width of the no-trade-region is positive and increasing in $d$
  to first order. This makes sense economically as larger $d$ means that the
  returns in the risky asset are smaller, so it makes sense to be more
  stringent about the transactions costs. Moreover, to first order the width
  of the no-trade-region is increasing in $p$ for $p < 1/2$ and decreasing for
  $p > 1/2$. In other words, the size of the no-trade-regions increases with
  the ``variability'' of the stock returns.
\end{remark}

Finally, we prove the second part of Corollary~\ref{cor:diff-true-modified}.
\begin{lemma}
  \label{lem:proof-diff-true-modified}
  Let $(\vphi^0, \varphi)$ be the log-optimal portfolio of the shadow-price
  process.  For $\lambda$ small enough we can find a positive, bounded random
  variable $Y = Y(\lambda)$ having a finite, deterministic limit $Y(0) =
  \lim_{\lambda\to0} Y(\lambda)$ such that
  \begin{equation*}
    \sup_{(\psi^0, \psi)} \E[\log({V}_{T}(\psi^0, \psi))]+ E[\log(1-\lambda
    Y(\lambda))] 
    \le \E[\log({V}_{T}(\vphi^0, \vphi ))] \le  \sup_{(\psi^0, \psi)}
    \E[\log({V}_{T}(\psi^0, \psi))].
  \end{equation*}  
\end{lemma}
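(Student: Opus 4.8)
The plan is to make the quantity $Y(\lambda)$ explicit from the description of the log-optimal portfolio in Theorem~\ref{th1}. Recall that in Corollary~\ref{cor:diff-true-modified}(i) the good case was $\vphi^0\ge 0$ and $\varphi\ge 0$, which gives ${V}_T(\vphi^0,\vphi)\ge(1-\lambda)\widetilde V_T(\vphi^0,\vphi)$ because $\widetilde S\ge(1-\lambda)S$ and the stock holding is nonnegative. In general $\varphi$ may be negative (this happens precisely for $p<1/2$, where $c$ and hence the wealth fractions can be negative, as noted in the remark after Proposition~\ref{prop:c-expansion}), and then the bound on $V_T$ in terms of $\widetilde V_T$ degrades. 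The key observation is that the loss incurred when liquidating at time $T$ is controlled by the \emph{fraction} of wealth held in stock, which by Theorem~\ref{th1} equals $\widetilde\pi_T=\frac{g(S_T/m_T)}{c+g(S_T/m_T)}$, a bounded quantity since $g(S_T/m_T)\in[(1-\lambda)\bar s\wedge 1,\ \bar s\vee 1]$ lies in a compact set not containing $-c$ for $\lambda$ small.

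The steps I would carry out are as follows. First, write $V_T(\vphi^0,\vphi)=\vphi^0_T+\varphi_T^+(1-\lambda)S_T-\varphi_T^- S_T$ and $\widetilde V_T(\vphi^0,\vphi)=\vphi^0_T+\varphi_T\widetilde S_T$, and compare: since $(1-\lambda)S_T\le\widetilde S_T\le S_T$, one gets $V_T(\vphi^0,\vphi)\ge\widetilde V_T(\vphi^0,\vphi)-\lambda S_T\varphi_T^+-\lambda S_T\varphi_T^-\cdot\mathbf 1_{\{\cdots\}}$; more carefully, $V_T\ge\widetilde V_T-\lambda S_T|\varphi_T|$ when $\varphi_T\ge 0$ and $V_T\ge\widetilde V_T$ when $\varphi_T<0$ (selling short at the higher ask price $S_T$ versus valuing at $\widetilde S_T$ can only help). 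Hence in all cases $V_T(\vphi^0,\vphi)\ge\widetilde V_T(\vphi^0,\vphi)-\lambda S_T\varphi_T^+\ge\widetilde V_T(\vphi^0,\vphi)\bigl(1-\lambda\,\frac{S_T\varphi_T^+}{\widetilde V_T(\vphi^0,\vphi)}\bigr)$. Second, relate $\frac{S_T\varphi_T^+}{\widetilde V_T}$ to $\widetilde\pi_T$: since $\widetilde S_T\ge(1-\lambda)S_T$, we have $\frac{S_T\varphi_T^+}{\widetilde V_T}\le\frac{1}{1-\lambda}\cdot\frac{\widetilde S_T\varphi_T^+}{\widetilde V_T}\le\frac{1}{1-\lambda}\widetilde\pi_T^+=\frac{1}{1-\lambda}\Bigl(\frac{g(S_T/m_T)}{c+g(S_T/m_T)}\Bigr)^+$. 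Define $Y(\lambda):=\frac{1}{1-\lambda}\Bigl(\frac{g(S_T/m_T)}{c+g(S_T/m_T)}\Bigr)^+$ (or the essential supremum over the finitely many possible values of $S_T/m_T$, to get a deterministic-in-$\omega$ bound if preferred); then $V_T(\vphi^0,\vphi)\ge\widetilde V_T(\vphi^0,\vphi)(1-\lambda Y(\lambda))$ pointwise.

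Third, take logarithms and combine with Proposition~\ref{SOLMODPROB} exactly as in the proof of Corollary~\ref{cor:diff-true-modified}(i): for any admissible $(\psi^0,\psi)$ for $((1-\lambda)S,S)$,
\begin{equation*}
  \E[\log V_T(\vphi^0,\vphi)]\ge\E[\log\widetilde V_T(\vphi^0,\vphi)]+\E[\log(1-\lambda Y(\lambda))]\ge\E[\log\widetilde V_T(\psi^0,\psi)]+\E[\log(1-\lambda Y(\lambda))],
\end{equation*}
and then $\widetilde V_T(\psi^0,\psi)\ge V_T(\psi^0,\psi)$ gives the left inequality; the right inequality is trivial since $(\vphi^0,\vphi)$ is one admissible competitor. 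Finally, check the two properties claimed for $Y$: boundedness and positivity are immediate because $S_T/m_T$ ranges over the finite set $\{1,d,\ldots,\bar s\}$ (equivalently $\{u^j:0\le j\le k\}$) and $g$ maps this into a compact subset of $(0,\infty)$ bounded away from $-c$ for small $\lambda$; and the limit $Y(0)=\lim_{\lambda\to 0}Y(\lambda)$ exists and is deterministic because, as $\lambda\to 0$, $c\to\bar c$, $\bar s\to 1$, so $g(S_T/m_T)\to 1$ and $Y(\lambda)\to\frac{1}{1+\bar c}=\underline\theta|_{\lambda=0}=\frac{p+pd-d}{1-d}$, a constant.

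The main obstacle I expect is the sign bookkeeping in the comparison $V_T\ge\widetilde V_T(1-\lambda Y)$ when $\varphi_T$ can be negative: one has to be careful that a short position is valued in $\widetilde V_T$ at $\widetilde S_T$ but liquidated in $V_T$ at the \emph{ask} price $S_T\ge\widetilde S_T$, so the short part of the position contributes a term that is favorable (not adverse), and only the long part $\varphi_T^+$ produces the $\lambda Y$ loss. A secondary subtlety is ensuring $1-\lambda Y(\lambda)>0$ — i.e.\ that $\log(1-\lambda Y(\lambda))$ is well-defined — which is exactly why the statement restricts to ``$\lambda$ small enough'': since $Y(\lambda)$ is bounded uniformly (its values lie in a fixed compact set for $\lambda$ in a neighborhood of $0$), $\lambda Y(\lambda)<1$ for all small $\lambda$, and then $\E[\log(1-\lambda Y(\lambda))]$ is finite because $Y$ takes only finitely many values. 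Everything else is a direct rerun of the already-proved Corollary~\ref{cor:diff-true-modified}(i) with the refined pointwise inequality in place of $V_T(\vphi^0,\vphi)\ge(1-\lambda)\widetilde V_T(\vphi^0,\vphi)$.
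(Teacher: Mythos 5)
Your overall strategy coincides with the paper's: extract an explicit random variable $Y(\lambda)$ from the optimal stock fraction of Theorem~\ref{th1}, establish a pointwise comparison $V_T(\vphi^0,\varphi)\ge(1-\lambda Y(\lambda))\widetilde V_T(\vphi^0,\varphi)$, identify $\lim_{\lambda\to0}Y(\lambda)$ with the Merton proportion $\frac{p+pd-d}{1-d}$, and rerun the argument of Corollary~\ref{cor:diff-true-modified}(i). The paper does exactly this, taking $Y(\lambda)=\max\Bigl(-\bigl(1-\lambda+\frac{\vphi^0_T}{\varphi_T S_T}\bigr)^{-1},\ \bigl(1+\frac{\vphi^0_T}{\varphi_T S_T}\bigr)^{-1}\Bigr)$.

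However, there is a genuine error in precisely the step you flag as the main obstacle, namely the sign bookkeeping for a short position. You claim that for $\varphi_T<0$ one has $V_T\ge\widetilde V_T$ because liquidation at the ask price ``can only help''. The opposite holds: by the definition of the wealth process, a short position enters $V_T$ as the liability $-\varphi_T^- S_T$, i.e.\ it must be bought back at the \emph{ask} price $S_T\ge\widetilde S_T$, while in $\widetilde V_T$ it is only valued at $-\varphi_T^-\widetilde S_T$; hence $V_T-\widetilde V_T=-\varphi_T^-(S_T-\widetilde S_T)\le 0$, with a loss of up to $\lambda\varphi_T^- S_T$. Consequently your pointwise bound $V_T\ge\widetilde V_T-\lambda S_T\varphi_T^+$, and with it the choice $Y(\lambda)=\frac{1}{1-\lambda}\widetilde\pi_T^+$, is false whenever $\varphi_T<0$: it would assert that a short position can be unwound without any transaction-cost loss. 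This matters because the short/leveraged case is the whole point of the lemma: under the standing assumptions of Section~\ref{sec:form-constr-shad} one has $c>\bar c>0$, so Theorem~\ref{th1} gives $\vphi^0_T>0$ and $\varphi_T>0$ and part (i) of Corollary~\ref{cor:diff-true-modified} already applies; part (ii) is there exactly for the regimes where these positivity conditions fail, and that is where your bound breaks. The repair is small: bound the liquidation loss by $\lambda S_T\abs{\varphi_T}$ in both cases and define $Y$ through $\abs{\widetilde\pi_T}$ (equivalently, through a maximum of two expressions in $\vphi^0_T/(\varphi_T S_T)$, one relevant for the long and one for the short case, as in the paper); boundedness, positivity, the restriction to small $\lambda$ so that $1-\lambda Y(\lambda)>0$, and the deterministic limit $Y(0)=\frac{1}{1+\bar c}=\frac{p+pd-d}{1-d}$ then go through exactly as in the rest of your argument.
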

\begin{proof}
  It is easy to see that $(1-\xi) \widetilde{V}_T(\vphi^0,\vphi) \le
  V_T(\vphi^0, \vphi)$ provided that
  \begin{equation*}
    \xi \ge \lambda \max\left( -\left(1 - \lambda + \frac{\vphi^0_T}{\vphi_T
          S_t} \right)^{-1},\ \left(1 + \frac{\vphi^0_T}{\vphi_T S_T}
      \right)^{-1} \right) \eqqcolon \lambda Y(\lambda).
  \end{equation*}
  Boundedness and positivity of $Y$ now follows from Theorem~\ref{th1} above,
  and we note that the limit for $\lambda \to 0$ is precisely given by the
  Merton proportion. The rest of the argument works just as for
  Corollary~\ref{cor:diff-true-modified}. 
\end{proof}

\section{The optimal growth rate}

In the following, we are going to consider the optimal growth rate as given in
Definition~\ref{def:asymptotic-growth-rate}.
In the frictionless binomial model, we recall from the proof of Proposition
\ref{frictionless} that the value of the log-optimal strategy satisfies
$V_T=\frac{V_0}{Z_T}$ and hence the expected 
utility is given by
$$\E[\log(V_T)]=\log(V_{0})+T \log\left(\frac{(1+d)p^p
    (1-p)^{1-p}}{d^p}\right).$$ 
Therefore, the optimal growth rate satisfies
\begin{equation}
\lim_{T \to \infty} \frac{\E[\log(V_T)]}{T} = \log\left(\frac{(1+d)p^p
    (1-p)^{1-p}}{d^p}\right).\label{eq:opt-growth-frictionless}
\end{equation}

\begin{theorem}
  \label{thr:growth-rate}
  The optimal growth rate in a binomial model with proportional transaction
  costs satisfies
  \begin{equation*}
    R = \frac{c(1-d)}{c^2-d}\log\left(\frac{(c+d)}{\sqrt{d}(c+1)}\right)
  \end{equation*}
  when $p = \frac{1}{2}$ and
  \begin{equation*}
    R =
    \frac{1-2p}{(1-p)(1-(\frac{p}{1-p})^{k+1})}
    \left[(1-p)\log\left(\frac{c+d}{c+1}\right) + p \left(\frac{p}{1-p}\right)^k
      \log\left(\frac{(c+d)p}{(c-1)(1-p)d}  \right) \right]
  \end{equation*}
  otherwise.
\end{theorem}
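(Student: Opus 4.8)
The strategy is to exploit the renewal structure of the shadow-price construction: the stopping times $\rho_n$ and $\sigma_n$ partition time into i.i.d.\ blocks (modulo the position of $m$), so the long-run growth rate is the ratio of the expected log-wealth increment over one ``cycle'' to the expected length of one cycle. Concretely, I would write the optimal wealth in the frictionless $\widetilde S$-market as $\widetilde V_t = V_0/\widetilde Z_t$, where $\widetilde Z$ is the state-price density associated with the generalized binomial model $\widetilde S$ (whose multipliers $\widetilde u_{t+1}, \widetilde d_{t+1}$ are given explicitly in Proposition~\ref{PropBnm} in terms of $g$ and $S_t/m_t$). By Corollary~\ref{cor:diff-true-modified} and the remark following Definition~\ref{def:asymptotic-growth-rate}, $R = \lim_{T\to\infty}\frac1T\E[\log\widetilde V_T]$, so it suffices to compute $-\lim_{T\to\infty}\frac1T\E[\log\widetilde Z_T]$.

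First I would set up the Markov chain. The pair $(S_t/m_t,\ \text{phase})$, where ``phase'' records whether we are currently in a $\llbracket\rho_{n-1},\sigma_n\llbracket$ block (tracking the running minimum) or a $\llbracket\sigma_n,\rho_n\llbracket$ block (tracking the running maximum), is a finite-state Markov chain on $\{1,u,\dots,\bar s\}$ with the appropriate reflection/absorption at the two ends dictated by the definition of the stopping times. I would compute its unique stationary distribution $\mu$ explicitly — this is where the factor $(\tfrac{p}{1-p})^k$ and the normalization $1-(\tfrac{p}{1-p})^{k+1}$ enter, as the stationary weights of a biased random walk on $\{0,\dots,k\}$ are geometric in $\tfrac{p}{1-p}$. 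Then, by the ergodic theorem for finite Markov chains,
\begin{equation*}
  R = -\lim_{T\to\infty}\frac1T\sum_{t=1}^T\E\bigl[\log(\widetilde Z_t/\widetilde Z_{t-1})\bigr]
    = -\sum_{s}\mu(s)\,\E_s\bigl[\log(\widetilde Z_1/\widetilde Z_0)\bigr],
\end{equation*}
and the one-step log-increment of $\widetilde Z$ at state $s=S_t/m_t$ equals $-\log\bigl(p\,\widetilde u_{t+1} + (1-p)\widetilde d_{t+1}\bigr)$ plus a martingale-difference term with zero mean (coming from $\log(\widetilde p_s/p)$ vs.\ $\log(\widetilde q_s/q)$), so only the ``drift'' $\E_s[\log\widetilde Z_1/\widetilde Z_0]$ survives in expectation. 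Using $g(s) = S_0 g_c(s)$ and the recursion $\pi_n = \widetilde S_n/(cS_0+\widetilde S_n)$ from~\eqref{eq2}, the telescoping product over a block collapses: the log-wealth gained while $m$ stays constant is zero (no trading, and $\widetilde S$ fluctuates around a fixed fraction), while each downward move of $m$ multiplies wealth by $\tfrac{c+d}{c+1}$ and — after recomputing at the $\sigma$-boundary — each upward move of $M$ contributes $\tfrac{cd+(1-\lambda)\bar s}{c+(1-\lambda)\bar s}$, exactly as recorded in the formulas for $\vphi^0_t,\vphi_t$ in Theorem~\ref{th1}. Summing $\log$ of these factors against the stationary frequencies of hitting the two ends of $\{0,\dots,k\}$ yields the claimed closed form, after substituting $\bar s = u^k = d^{-k}$ and simplifying; the $p=\tfrac12$ case is the degenerate limit of the geometric weights becoming uniform, giving the $\tfrac{c(1-d)}{c^2-d}$ prefactor in place of $\tfrac{1-2p}{(1-p)(1-(p/(1-p))^{k+1})}$.

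The main obstacle I anticipate is not any single computation but getting the bookkeeping of the two-phase chain exactly right: one must be careful that $m$ and $M$ are only defined on alternating stochastic intervals, that the shadow price ``touches'' the bid price at the integer point $k$ without a trade (the second part of Assumption~\ref{ass:3}), so that the wealth increment at $\sigma_n$ itself is trivial while the genuine rebalancing happens one step later — this is precisely why $\tfrac{cd+(1-\lambda)\bar s}{c+(1-\lambda)\bar s}$ rather than something symmetric appears. A clean way to organize this is to compute, for one full cycle $\llbracket\rho_{n-1},\rho_n\llbracket$, both $\E[\log(\widetilde V_{\rho_n}/\widetilde V_{\rho_{n-1}})]$ and $\E[\rho_n-\rho_{n-1}]$ as functions of $c,d,p,k$ via first-passage computations for the underlying simple random walk on the $k+1$ lattice points, and then invoke the renewal-reward theorem; I would relegate the resulting elementary but lengthy algebra (and the verification that it matches the Mathematica output) to the appendix, presenting here only the reduction to the renewal-reward identity and the statement of the two expectations.
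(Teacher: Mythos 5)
Your plan follows the same route as the paper's proof: reduce to the frictionless log-utility in the $\tS$-market (Proposition~\ref{SOLMODPROB}, Corollary~\ref{cor:diff-true-modified}), note that $Z_t=S_t/m_t$ is a finite Markov chain on $\{1,u,\dots,u^k\}$ whose invariant weights are truncated-geometric, $\alpha_n\propto\left(\frac{p}{1-p}\right)^n$, and compute $R$ as an ergodic average in which, after telescoping of the interior terms, only the two boundary events survive with frequencies $\alpha_0(1-p)$ and $\alpha_k p$. The paper sums the one-step expected increments against the invariant distribution rather than using renewal--reward over $\rho$-cycles, and it does not need your extra ``phase'' variable (with the extension of $m$ to all of $\N$, the ratio $Z_t$ alone is Markov with sticky boundaries), but these are inessential variations; your $p=\tfrac12$ remark about the uniform limit of the weights is also correct.

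Two steps, as written, are wrong, and one of them would corrupt the final constant. First, the conditional mean of $\log(\widetilde Z_t/\widetilde Z_{t-1})$ is $p\log(\widetilde p_t/p)+(1-p)\log(\widetilde q_t/(1-p))$ with $\widetilde p_t=\frac{1-\widetilde d_t}{\widetilde u_t-\widetilde d_t}$; it is \emph{not} $-\log\bigl(p\,\widetilde u_t+(1-p)\widetilde d_t\bigr)$, which is minus the log of the expected gross stock return and a different quantity. This slip is harmless only because you never use the formula afterwards. Second, and more seriously, the per-event wealth multiplier at the upper boundary is not the holdings multiplier $\frac{cd+(1-\lambda)\bar s}{c+(1-\lambda)\bar s}$ quoted from Theorem~\ref{th1}. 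Since $\widetilde V_t=\vphi_t m_t\bigl(c+g(S_t/m_t)\bigr)$ by \eqref{eq5}, each upward move of $m$ (i.e.\ each step at which $Z_t=\bar s$ and $S$ moves up) multiplies wealth by $\frac{c+g(u\bar s)}{c+g(\bar s)}=\frac{c+(1-\lambda)u\bar s}{c+(1-\lambda)\bar s}=d^{-1}\,\frac{cd+(1-\lambda)\bar s}{c+(1-\lambda)\bar s}$; the factor you cite governs $\vphi^0,\vphi$, not $\widetilde V$. (At the lower boundary your factor $\frac{c+d}{c+1}$ is fine because there the holdings and wealth multipliers coincide, $Z$ staying at $1$ and $g(d)=d$.) Carried through literally, your cycle accounting yields $R$ shifted by $\alpha_k\,p\,\log d\neq0$ and hence not the stated formula, whose boundary term $p\left(\frac{p}{1-p}\right)^k\log\frac{(c+d)p}{(c-1)(1-p)d}$ is exactly $\alpha_k p\log\frac{c+g(u\bar s)}{c+g(\bar s)}$ after normalization. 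With this correction, and using the identities $\frac{g(s)-g(ds)}{p(g(us)-g(ds))}=\frac{c+g(s)}{c+g(us)}$, $\frac{g(us)-g(s)}{(1-p)(g(us)-g(ds))}=\frac{c+g(s)}{c+g(ds)}$ from Proposition~\ref{Propg} to justify the interior telescoping, your argument coincides with the paper's.
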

\begin{proof}
  We recall from Proposition \ref{PropBnm} that up and down factors for
  $\widetilde S$ are $\widetilde u_{t+1}=\frac{g(Z_t u)}{g(Z_t)}$ and
  $\widetilde d_{t+1}=\frac{g(Z_t d)}{g(Z_t)},$ where
  $Z_t\coloneqq\frac{S_t}{m_t}.$ Hence, using Proposition \ref{frictionless},
  we compute the expected $\log$-utility as
  \begin{align*}
    \E[\log(\widetilde V_T)]=&\log(\widetilde V_0)-\sum_{t=1}^T \E \left[ \log
      \left( \f{\widetilde{p}_t}{p} \mathbf{1}_{\set{1}}(w_t) +
        \f{\widetilde{q}_t}{1-p}
        \mathbf{1}_{\set{-1}}(w_t)\right) \right] \\
    =& \log(\widetilde V_0)-p\sum_{t=1}^T \E \left[ \log \left(
        \f{1-\widetilde d_t}{p\left(\widetilde u_t-\widetilde d_t \right)}
      \right) \right]
    -(1-p)\sum_{t=1}^T \E \left[ \log \left( \f{\widetilde
          u_t-1}{(1-p)\left(\widetilde u_t-\widetilde d_t \right)}\right)
    \right] \\ 
    =& \log(\widetilde V_0)-p\sum_{t=1}^T \E \left[ \log \left(
        \f{g(Z_{t-1})-g(Z_{t-1}d)}{p\left(g(Z_{t-1}u)-g(Z_{t-1}d) \right)}
      \right) \right]\\ 
    &-(1-p)\sum_{t=1}^T \E \left[ \log \left(
        \f{g(Z_{t-1}u)-g(Z_{t-1})}{(1-p)\left(g(Z_{t-1}u)-g(Z_{t-1}d)
          \right)}\right) \right].
  \end{align*}
  Now, we know from Proposition \ref{Propg} that $\frac{p g(us)
    +(1-p)g(ds)-1}{(g(us)-g(s))(g(s)-g(ds))}=\frac{1}{c+g(s)} \mbox{ for } 1
  \leq s \leq \bar s.$ Then, an elementary calculation implies
  \begin{equation*}
    \frac{g(s)-g(ds)}{p(g(us)-g(ds))}=\frac{c+g(s)}{c+g(us)}, \quad
    \frac{g(us)-g(s)}{(1-p)(g(us)-g(ds))}=\frac{c+g(s)}{c+g(ds)}.
  \end{equation*}
  Thus, using these identities we obtain that
  \begin{align*}
    R &= \lim_{T\to\infty} \left\{ -p\frac{1}{T} \sum_{t=1}^T \E \left[ \log
        \left( \f{c+g(Z_{t-1})}{c+g(Z_{t-1}u)} \right) \right]
    -(1-p)\frac{1}{T}\sum_{t=1}^T \E \left[ \log \left(
        \f{c+g(Z_{t-1})}{c+g(Z_{t-1}d)}\right) \right] \right\}  \\
    &= -p\E^{*}\left[\log\left(\frac{c+g(Z_t)}{c+g(uZ_t)}\right)\right] -(1-p)
    \E^{*}\left[\log\left(\frac{c+g(Z_t)}{c+g(dZ_t)}\right)\right],
  \end{align*}
  where the last step is due to the ergodic theorem and $\E^*$ denotes the
  expectation with respect to the invariant distribution of $Z_t$. Note that
  $Z_t$ is a Markov chain with state space $\{1,u,u^2, \ldots, u^k \}$ and
  transition matrix
  \begin{equation*}
    P_{i,j} \coloneqq \P[Z_{t+1}=u^j|Z_{t}=u^i] = 
    \begin{cases} 
      p, & j=i+1, \ 0 \le i \le k-1, \\ 
      1-p,  &j=i-1,\ 1 \le i \le k, \\
      p, & j=i=k, \\ 
      1-p, & j=i=0, \\ 
      0, & \mbox{ else. } 
    \end{cases}
  \end{equation*}
  Then the invariant distribution is the solution of $\alpha^T P=\alpha^T$
  normalized to $\sum_n \alpha_n = 1$. If
  $p=\frac{1}{2},$ the solution satisfies $\alpha_n=\frac{1}{k+1}, \mbox{ for
  } 0 \leq n \leq k.$ If $p \neq \frac{1}{2},$ we get
  $\alpha_n=\frac{1-2p}{(1-p)\left(1-\left(\frac{p}{1-p}\right)^{k+1} \right)}
  \left(\frac{p}{1-p}\right)^n, \mbox{ for } 0 \leq n \leq k.$


  For the remainder of the proof, we assume $p \neq \frac{1}{2},$ the other
  case being similar. Then, the optimal growth rate becomes 
  \begin{align*}
    R=&-p\E^{*}\left[\log\left(\frac{c+g(Z_t)}{c+g(uZ_t)}\right)\right]
    -(1-p) \E^{*}\left[\log\left(\frac{c+g(Z_t)}{c+g(dZ_t)}\right)\right] \\
    =& \E^{*}\left[\log \left
        (-\left(1-(\frac{1-p}{p})^{\frac{\log(Z_t)}{\log(u)}}\right)+\beta_p
      \right) \right]- p \E^{*}\left[ \log\left( -\left( 1-( \frac{1-p}{p}
          )^{\frac{\log(uZ_t)}{\log(u)}} \right) + \beta_p \right)\right] \\
    &-(1-p) \E^{*}\left[\log\left( -\left(1-(\frac{1-p}{p}
          )^{\frac{\log(dZ_t)}{\log(u)}}\right)+\beta_p \right)\right]\\ 
    =&\frac{1-2p}{(1-p)(1-(\frac{p}{1-p})^{k+1})}
    \Bigg[(1-p)\left(\log(\beta_p)- \log\left(\left( \frac{1-p}{p}
        \right)^{-1}+\beta_p- 1 \right)\right) \\ 
    & +p \left(\frac{p}{1-p}\right)^k \left(
      \log\left(\left(\frac{1-p}{p}\right)^{k}+\beta_p-1 \right)-
      \log\left(\left(\frac{1-p}{p}\right)^{k+1}+\beta_p-1 \right)\right)
    \Bigg] \\
    =&\frac{1-2p}{(1-p)(1-(\frac{p}{1-p})^{k+1})} \left[ (1-p)
      \log\left(\frac{c+d}{c+1} \right) + p \left(\frac{p}{1-p}\right)^k
      \log\left(\frac{(c+d)p}{(c-1)(1-p)d} \right) \right]. \qedhere
  \end{align*}
\end{proof}

Writing $k$ in terms of $c$ and plugging in the series expansion for $c$, we get
\begin{corollary}
  \label{cor:growth-rate-expansion}
  The optimal growth rate has the expansion
  \begin{align*}
    R=& \log\left(\frac{(1+d)p^p(1-p)^{1-p}}{d^p}\right) \\ & + 
    \frac{(p+pd-d)(1-p-pd)-(1+d)^2(1-p) p \log\left(\frac{(1+d)^2 (1-p)p}{d}
      \right)}{(1-d^2)\left[(1+d) \eta-1 \right]}  
    \lambda+  \mathcal{O}(\lambda^2). 
  \end{align*}
\end{corollary}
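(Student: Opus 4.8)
The plan is to obtain the expansion by feeding the series for $c$ from Proposition~\ref{prop:c-expansion} into the \emph{closed form} for the optimal growth rate established in Theorem~\ref{thr:growth-rate}. The only auxiliary quantity appearing there besides $c$ is the exponent $k$, which by Proposition~\ref{prop:define-sbar} is itself an explicit function of $c$: writing $x=(1-p)/p$ we have $x^{k}=r(c)$, so that $\left(\frac{p}{1-p}\right)^{k}=1/r(c)$ and $\left(\frac{p}{1-p}\right)^{k+1}=\frac{p}{(1-p)\,r(c)}$. Substituting these into the $p\neq\frac12$ branch of Theorem~\ref{thr:growth-rate} turns $R$ into a genuine function $R=R(c)$ of the single variable $c$ (the model parameters $p,d$ being fixed), built from rational expressions in $c$, from $r(c)$, and from logarithms thereof. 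Composing with $c=\bar c+c_{1}\lambda+\mathcal{O}(\lambda^{2})$ and Taylor--expanding in $\lambda$ then yields the asserted two--term expansion; the case $p=\frac12$ is treated identically, starting from the $p=\frac12$ branch of Theorem~\ref{thr:growth-rate} and the $p=\frac12$ incarnation of $\eta$.

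Before expanding one should record the key algebraic fact that $r(\bar c)=1$ (equivalently $\bar s=u^{k}\to1$ and $k\to0$ as $\lambda\to0$, which is also implicit in the proof of Theorem~\ref{thr:no-trade-region}). This has two consequences. First, the prefactor $\frac{1-2p}{(1-p)\bigl(1-(p/(1-p))^{k+1}\bigr)}$ tends to $\frac{1-2p}{(1-p)-p}=1$, so $R(c)$ has \emph{no} singularity at $c=\bar c$ and $\lambda\mapsto R(c(\lambda))$ is analytic near $0$; the expansion is therefore legitimate. Second, the zeroth order term $R(\bar c)$ can, and should, be checked against \eqref{eq:opt-growth-frictionless}: using the identities $\bar c+1=\frac{1-d}{p+pd-d}$ and $\bar c+d=\frac{(1-p)(1-d^{2})}{p+pd-d}$ one verifies $R(\bar c)=\log\!\bigl(\frac{(1+d)p^{p}(1-p)^{1-p}}{d^{p}}\bigr)$, i.e.\ the frictionless growth rate is recovered, as it must be. The first order coefficient is then simply $R'(\bar c)\,c_{1}$ with $c_{1}=\frac{\bar c(1-p)}{(1+d)\eta-1}$ from Proposition~\ref{prop:c-expansion}; differentiating $R(c)$ at $c=\bar c$ (where $r(\bar c)=1$ collapses many terms and where $r'(\bar c)$ must be evaluated) and simplifying with the same identities produces the displayed numerator $(p+pd-d)(1-p-pd)-(1+d)^{2}(1-p)p\log\!\bigl(\frac{(1+d)^{2}(1-p)p}{d}\bigr)$ over $(1-d^{2})\bigl[(1+d)\eta-1\bigr]$.

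I expect the main obstacle to be purely computational: $R(c)$ and $r(c)$ are rational functions composed with logarithms and powers, so computing $R'(\bar c)$, substituting the series for $c$, and recognizing the resulting expression as the claimed closed form requires a computer algebra system, exactly as in the proofs of Proposition~\ref{prop:c-expansion} and Theorem~\ref{thr:no-trade-region}. A minor but genuine point to be careful about is the interchange of the limit $T\to\infty$ defining $R$ with the $\lambda$--expansion: this is harmless, since for each fixed $\lambda$ that limit is the \emph{exact} ergodic average computed in Theorem~\ref{thr:growth-rate}, so one is expanding an honest function of $\lambda$ rather than a limit of expansions. Higher order coefficients are obtained in the same way.
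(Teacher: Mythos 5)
Your proposal is essentially the paper's own (one-line) argument: write $k$ in terms of $c$ via $x^{k}=r(c)$, substitute into the closed form of Theorem~\ref{thr:growth-rate}, insert the expansion $c=\bar c+c_1\lambda+\mathcal{O}(\lambda^2)$ from Proposition~\ref{prop:c-expansion}, and Taylor-expand, with your observations that $r(\bar c)=1$ (so $k\to0$, the prefactor tends to $1$, and no singularity arises) and that the constant term must reduce to \eqref{eq:opt-growth-frictionless} being correct and useful sanity checks. One caveat: your zeroth-order verification only goes through after fixing an apparent typo in the stated formula of Theorem~\ref{thr:growth-rate} --- the factor $(c-1)$ in the second logarithm should be $(c+1)$, as the penultimate display of its proof shows (one computes $\frac{x^{k}+\beta_p-1}{x^{k+1}+\beta_p-1}=\frac{(c+d)p}{(c+1)(1-p)d}$) --- and with that correction $R(\bar c)$ indeed equals the frictionless rate and $R'(\bar c)\,c_1$ reproduces the displayed first-order coefficient.
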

\begin{remark}
  The first order correction term in Corollary~\ref{cor:growth-rate-expansion}
  is negative, reflecting the trivial observation that transaction costs
  reduce the optimal growth rate. Moreover, contrary to the width of the
  no-trade-region, the term is decreasing in $d$ and increasing in $p$ for $p
  > 1/2$ and decreasing for $p<1/2$. Thus, the optimal growth rate is most
  effected by transactions costs, when the model is close to the Black-Scholes
  model.
\end{remark}

\section{Convergence to the Black-Scholes model}
\label{sec:conv-black-schol}

Historically, proportional transaction costs have mainly been studied in the
framework of the Black-Scholes model, see \cite{TAKA88},  \cite{DANO90}, \cite{DULU91},
 \cite{SHSO94}, \cite{CVKA96}, \cite{JASH04}, \cite{KAMK10}, \cite{GMKS11}. In
order to compare our results to previous results, we shall, therefore, obtain
a common ground for the binomial model and the Black-Scholes model. When we
add more and more periods to the binomial model while letting $u$ and $d$
approach $1$, the binomial model will clearly converge to the Black-Scholes
model. Here, we want to keep the convenient choice $d = 1/u$, while still
allowing all possible drift and volatility values $\mu$ and $\sigma$ in the
limiting Black-Scholes model. Hence, we need $p$ converging to $\half$, but
allowing it to be different from $\half$ for every finite time-step. More
precisely, if we choose a time-step $\delta \coloneqq T/N > 0$, and define 
\begin{equation}
  \label{eq:asymptotic-bs}
  p \coloneqq \frac{1}{2} + \frac{\mu-\frac{\sigma^2}{2}}{2 \sigma}
  \sqrt{\delta}, \quad d \coloneqq \exp(-\sigma \sqrt{\delta}),
\end{equation}
$u = 1/d$, then the binomial model $(S_n)_{n=0}^{N}$ with parameters $S_0 > 0$
and $p, u, d$ as above will converge to the geometrical Brownian motion
$\left(S_0 \exp\left( \sigma B_t + \left(\mu - \half \sigma^2 \right) t\right)
\right)_{t\in[0,T]}$ as $N \to \infty$ in distribution -- this is a
consequence of the invariance principle, see, e.g., Ethier and
Kurtz~\cite[Th.~7.4.1]{eth/kur86}. Moreover, the shadow price process $\tS$ of
the binomial model with proportional transaction costs $\lambda$ will also
converge to the shadow price process of the Black-Scholes model with
proportional transaction costs $\lambda$. Indeed, both shadow price processes
are parametrized by the respective functions $g_c$, in the binomial case given
in Proposition~\ref{Propg}, in the Black-Scholes case given in~\cite[Lemma
4.3]{GMKS11} as
\begin{equation*}
  g^{(BS)}_c(s) =
  \begin{cases}
    \frac{-cs+(2 \theta -1 + 2c \theta) s^{2 \theta} }{s-(2-2
      \theta-c(2\theta-1))s^{2 \theta}},& \theta \notin \set{\half,1},\\
    \frac{(c+1)+c\log(s)}{c+1-\log(s)},& \theta = \half,
  \end{cases}
\end{equation*}
where $\theta \coloneqq \frac{\mu}{\sigma^2}$, and it is an easy exercise to
verify that
\begin{equation*}
  \lim_{\delta\to0} g_c(s) = g_c^{(BS)}(s).
\end{equation*}

From equation~\eqref{eq:lambda-p-series} we can, however, see a big difference
between the binomial and the Black-Scholes case: In the binomial model, the
inverse function theorem shows that we can invert the function $F(c)$ and the
inverse function $c = G(\lambda) \coloneqq F^{-1}(\lambda)$ is analytic in a
neighborhood of $\lambda = 0$. On the other hand, we cannot directly apply the
inverse function theorem in the Black-Scholes case, since then the first and
second derivatives of $F$ at the corresponding point $\overline{c} =
\f{1-\theta}{\theta}$ vanish. Indeed, this can be seen already from the
  derivatives in the binomial model. If we plug in~\eqref{eq:asymptotic-bs}
  and do a Taylor expansion in $\delta$, then the first three derivatives of
  $F$ are
  \begin{subequations}\label{eq:lambda_asym}
    \begin{gather}
      \lambda_1 = \frac{2}{3} \theta^2 \sigma^2 \delta +
      \mathcal{O}\left(\delta^{3/2}\right),\label{eq:lambda_1_asym}\\
      \lambda_2 = \frac{2 \theta^3 \sigma}{1-\theta}
      \sqrt{\delta} +
      \mathcal{O}\left(\delta\right)\label{eq:lambda_2_asym},\\
      \lambda_3 = \frac{4 \theta^4}{3 \left(\theta -1 \right)^2}
      +O\left(\delta^{1/2}\right).
    \end{gather}
  \end{subequations}
In~\cite{GMKS11}, this problem is solved by taking the third root, i.e., by
considering the equation $\lambda^{1/3} = F(c)^{1/3}$. The power series of
$F^{1/3}$ around $c=\overline{c}$ -- corresponding to $\lambda=0$ -- then has
non-vanishing first-order term, and thus can be inverted, giving an expansion
of $c$ in terms of $\lambda^{1/3}$, see~\cite[Proposition 6.1]{GMKS11}. In
Section~\ref{sec:asympt-expans}, we have already discussed the economic
implications of this observation.

As a trivial mathematical consequence, we cannot directly obtain the series
coefficients of the relevant quantities in the Black-Scholes model as limits
of the corresponding series coefficients in the binomial model, as the former
are coefficients of a fractional power series in terms of $\lambda^{1/3}$,
whereas the latter are coefficients of an ordinary power series in
$\lambda$. Indeed, it is easy to see that the power series coefficients of,
for instance, $c$ in terms of $\lambda$ in the binomial model diverge when we
take the limit $\delta \to 0$, which is owed to the fact that the limiting
function is not analytic in $\lambda$ and, hence, does not admit a power
series expansion. 

On the other hand, we would like to stress that the quantities of interest
will actually converge to the corresponding quantities in the Black-Scholes
model when $\delta\to0$. More precisely, let us consider the optimal
wealth-fraction $c$ itself. Assuming the parameters~\eqref{eq:asymptotic-bs}
in the binomial model with fixed $\mu$ and $\sigma$, let us denote $c
\eqqcolon G(\delta, \lambda)$ when we stress the dependence on the remaining
variables $\delta > 0$ and $\lambda$. Moreover, we denote by $G(0,\lambda)$
the optimal wealth fraction $c$ in the Black-Scholes model with corresponding
parameters $\mu$ and $\sigma$. Then we obtain the
\begin{lemma}\label{lem:1}
  The function $G=G(\delta,\lambda)$ is continuous in its arguments.
\end{lemma}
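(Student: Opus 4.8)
The plan is to exhibit $G$ as the (local) inverse of an explicitly given, jointly continuous function and then invoke a quantitative version of the implicit function theorem that is uniform in the parameter $\delta$. Concretely, for $\delta > 0$ the defining relation is $\lambda = F(c) = F(\delta, c)$, where $F$ is the function from~\eqref{eq:general-p-lambda} (equivalently Proposition~\ref{resultsc}) with $p$ and $d$ substituted from~\eqref{eq:asymptotic-bs}; for $\delta = 0$ the defining relation is $\lambda = F^{(BS)}(c)$, the analogous function in the Black--Scholes model (implicit in~\cite{GMKS11}). First I would check that $(\delta, c) \mapsto F(\delta, c)$ extends continuously to $\delta = 0$ with $F(0, c) = F^{(BS)}(c)$; this is essentially the same computation already sketched in the excerpt for $g_c \to g_c^{(BS)}$, since $F$ is built from $g_c$ through the smooth-pasting conditions, together with the observation that $\bar c = \frac{1-p-pd}{p+pd-d} \to \frac{1-\theta}{\theta}$ under~\eqref{eq:asymptotic-bs}. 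I would also record that $F(\delta, \bar c(\delta)) = 0$ for every $\delta \ge 0$, so the graph of $G$ passes through the ``Merton point'' in each model.

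The difficulty is that the naive inverse-function argument degenerates as $\delta \to 0$: by~\eqref{eq:lambda_asym} the first two $c$-derivatives of $F$ at $c = \bar c$ vanish in the limit (they are $O(\delta)$ and $O(\sqrt{\delta})$ respectively), so one cannot get a $\delta$-uniform lower bound on $\partial_c F$ near $c = \bar c$ and thereby a $\delta$-uniform domain of invertibility. The remedy, following the structure already visible in the excerpt, is to invert not $F$ but $F^{1/3}$. So the key step is to set $H(\delta, c) \coloneqq F(\delta, c)^{1/3}$ (using that $F \ge 0$ on the relevant interval, by Proposition~\ref{resultsc}) and to show that $\partial_c H(\delta, c)$ is bounded away from $0$, \emph{uniformly in $\delta$ in a neighbourhood of $0$ and in $c$ in a neighbourhood of the limiting Merton point}. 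By~\eqref{eq:lambda_3_asym} the third derivative $\lambda_3 = \partial_c^3 F$ has a nonzero limit $\frac{4\theta^4}{3(\theta-1)^2}$, and an elementary computation with $F = H^3$, $F'(\bar c) = F''(\bar c) = O(\sqrt\delta) \to 0$, shows $\partial_c H(\delta, \bar c(\delta)) \to \bigl(\lambda_3(0)/6\bigr)^{1/3} \neq 0$; continuity in $(\delta, c)$ then upgrades this to the desired uniform bound on a fixed product neighbourhood.

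Given this uniform bound, I would apply the standard quantitative implicit/inverse function theorem (e.g.\ a fixed-point/contraction argument for $\mu^{1/3} = H(\delta, c)$): there exist $\delta_0 > 0$, $\mu_0 > 0$ and a $C^0$ function $(\delta, \mu) \mapsto c$, defined for $\delta \in [0, \delta_0)$ and $\mu^{1/3} \in [0, \mu_0^{1/3})$, inverting $H(\delta, \cdot)$ and jointly continuous in its arguments. Undoing the substitution $\mu = \lambda$ (the map $\lambda \mapsto \lambda^{1/3}$ is a homeomorphism of $[0,\infty)$), this function is exactly $G(\delta, \lambda)$ on $[0,\delta_0) \times [0,\lambda_0)$, and joint continuity is inherited. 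Continuity for $\delta$ bounded away from $0$ is comparatively soft: there $F(\delta, \cdot)$ is a genuine analytic diffeomorphism on $(\bar c, \infty)$ (or the appropriate bounded interval for $p > 1/2$) by Proposition~\ref{resultsc}, its inverse depends continuously on $\delta$ by the usual implicit function theorem, and one patches the two regimes together on the overlap. The one technical point to watch is that for $p > 1/2$, i.e.\ $\mu > \sigma^2/2$, the admissible interval for $c$ is only $(\bar c, \frac{1-p-pd}{2p-1})$, so I would check that $\lambda_0$ can be chosen small enough that the solution stays inside this interval uniformly in small $\delta$ — which is automatic from the continuity estimates, since the solution tends to $\bar c$ as $\lambda \to 0$.
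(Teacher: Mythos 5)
Your strategy (invert $F^{1/3}$ via a $\delta$-uniform quantitative inverse function theorem) is genuinely different from the paper's, but the key quantitative step fails as stated. You claim that $\partial_c H(\delta,\overline{c}(\delta))\to\bigl(\lambda_3(0)/6\bigr)^{1/3}$ and that continuity of $\partial_c H$ in $(\delta,c)$ then yields a uniform positive lower bound on a fixed product neighbourhood. But for every $\delta>0$ the function $F(\delta,\cdot)$ has a \emph{simple} zero at $\overline{c}(\delta)$ with slope $\lambda_1(\delta)>0$, so $H(\delta,c)=F(\delta,c)^{1/3}\sim\lambda_1(\delta)^{1/3}\,(c-\overline{c}(\delta))^{1/3}$ there: $\partial_c H(\delta,\overline{c}(\delta))=+\infty$, and $H(\delta,\cdot)$ is not even differentiable at that point, so the standard $C^1$ implicit/inverse function theorem cannot be invoked. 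Moreover $\partial_c H$ is not continuous in $(\delta,c)$ near $(0,\overline{c}(0))$: along $c-\overline{c}(\delta)=s\sqrt{\delta}$ all three terms $\lambda_1(c-\overline{c})$, $\lambda_2(c-\overline{c})^2$, $\lambda_3(c-\overline{c})^3$ are of the same order $\delta^{3/2}$ (cf.~\eqref{eq:lambda_asym}), and $\partial_c H\to\infty$ as $s\to0$. The conclusion you want (a $\delta$-uniform lower Lipschitz bound for $H(\delta,\cdot)$ on $c\ge\overline{c}(\delta)$, hence uniformly Lipschitz inverses) is in fact true, because $\lambda_1,\lambda_2\ge0$ and $\lambda_3$ is uniformly bounded away from $0$, but it requires a direct term-by-term estimate of $\partial_c F/F^{2/3}$ together with $\delta$-uniform control of the Taylor remainder, and then an argument via uniform strict monotonicity rather than the $C^1$ theorem; none of this is supplied by the continuity argument you give. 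Two smaller points: with the paper's convention the $\lambda_i$ are Taylor coefficients, so the limiting slope at $\delta=0$ is $\lambda_3(0)^{1/3}$, not $(\lambda_3(0)/6)^{1/3}$; and your two regimes (small $\delta$ with small $\lambda$; $\delta$ bounded away from $0$) do not cover the points $(0,\lambda)$ with $\lambda$ bounded away from $0$, where a separate (easy, since $\partial_c F(0,\cdot)$ does not degenerate away from $\overline{c}$) argument is still needed.

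For comparison, the paper sidesteps all of these quantitative issues with a purely topological argument: it only uses joint continuity of $F(\delta,c)$ (including at $\delta=0$), the intermediate value theorem to show that the image of a small compact box $[\kappa_1,\kappa_2]\times[c-\epsilon,c+\epsilon]$ under $(\delta,c)\mapsto(\delta,F(\delta,c))$ contains a neighbourhood of $(\delta,\lambda)$, and then compactness plus the \emph{uniqueness} of the root from Proposition~\ref{resultsc} to force every subsequential limit of $c_n=G(\delta_n,\lambda_n)$ to equal $G(\delta,\lambda)$. That route needs no nondegeneracy of $\partial_c F$ (nor of any surrogate such as $F^{1/3}$) uniformly in $\delta$, which is exactly the point where your argument breaks; if you want to keep your approach, either repair the uniform monotonicity estimate as indicated above, or switch to the compactness-plus-uniqueness argument, which is both shorter and more robust.
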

For the proof we again refer to Appendix~\ref{ProofsAppendix}. To summarize,
the actual quantities of interest, like the form and size of the
no-trade-region, do converge when we approach the Black-Scholes model by a
sequence of binomial models, but their series expansions fail to converge due
to non-analyticity of the optimal wealth fraction $c$ at $\delta =
0$. Consequently, our methods cannot predict the results in the Black-Scholes
model from the corresponding results in the binomial model.



This also implies that one has to be very careful in deriving
\emph{quantitative} information from the series expansions obtained in
Sections~\ref{sec:asympt-expans} and~\ref{cor:growth-rate-expansion}. Indeed,
to get quantitative results, one needs to truncate the power
series. Unfortunately, for fixed $\lambda$, one needs to include more and more
terms of the expansion to get a similar accuracy when $\delta$ becomes
smaller.

We can, however, consider the optimal wealth proportion $\barc$ itself. In the
asymptotic regime~\eqref{eq:asymptotic-bs}, we have
\begin{equation*}
  \barc = \frac{1-\theta }{\theta } + \frac{\sigma^2\left(1-2
      \theta\right)}{24 \theta^2} \delta + \frac{\left(24 \theta^2-22
      \theta +5\right) \sigma^4}{2880 \theta^3} \delta^2 +
  \mathcal{O}\left(\delta^{3}\right),
\end{equation*}
implying that the optimal proportion $\barc$ is larger than the optimal
proportion $\barc = \f{1-\theta}{\theta}$ in the Black Scholes case if and only
if $\theta < \f{1}{2}$.



\section{A series expansion when approaching the Black-Scholes model}
\label{sec:seri-expans-when}

In Section~\ref{sec:asympt-expans} we have obtained series expansions for the
log-optimal ratios of wealth invested in the bond and wealth invested in the
stock in terms of the proportional transaction costs $\lambda$, which was
valid for ``moderate'' parameters $p, d, u$ in as much as the coefficients
diverge when $d \to 1$. Hence, these formulas are not helpful when considering
the asymptotics of the binomial model to the Black-Scholes model, see
Section~\ref{sec:conv-black-schol} above.

One possible way to obtain the series expansion of quantities of interest in
the Black-Scholes model from the related quantities in the binomial model
could be a transformation of $F$\footnote{ Recall that $c$ and $\lambda$ are
  linked by the equation $F(c) = \lambda$, with $F$ given in
  Proposition~\ref{resultsc}. Moreover, in the following we always assume that
  the parameters of the binomial model are given by~\eqref{eq:asymptotic-bs}
  with $\mu$ and $\sigma$ fixed, and, hence, we denote $F = F(\delta,c)$ with
  inverse function $G = G(\delta, \lambda)$ as above. }, in the sense that we
could try to find a mollification $\lambda(\delta)$ for $\lambda$ such that
$\lambda(0) = \lambda^{1/3}$ and $G = G(\delta, \lambda(\delta))$ is analytic
even at $\delta = 0$, but we were not successful in finding such
mollification. However, it turns out that a much simpler approach can be used
to link the asymptotic expansions for the Black-Scholes model and for the
binomial model.

Recall that the asymptotic expansion~\eqref{eq:lambda-p-series} for
$F(\delta,c)$ was of the form
\begin{equation*}
  F(\delta,c) = \sum_{i=1}^\infty \lambda_i(\delta) (c - \barc)^i
\end{equation*}
with $\lim_{\delta\to0} \lambda_1(\delta) = \lim_{\delta\to0}
\lambda_2(\delta) = 0$, but non-trivial limits for $\lambda_i(\delta)$, $i \ge
3$, cf.~\eqref{eq:lambda_asym}. Thus, for $\delta \ll 1$, we may disregard the
first two terms and instead consider
\begin{equation*}
  F_2(\delta,c) \coloneqq F(\delta, c) - \lambda_1 (c - \barc) - \lambda_2 (c
  - \barc)^2,
\end{equation*}
cf.~\eqref{eq:lambda-p-series}. By~\eqref{eq:lambda_1_asym}
and~\eqref{eq:lambda_2_asym} we see that $F(\delta, c) = F_2(\delta, c) +
g(\delta,c) \sqrt{\delta}$ for some function $g$ with non-trivial limit for
$\delta \to 0$. As before, denote the inverse function of
$F(\delta, c)$ by $G(\delta,\lambda)$ and denote the inverse function of
$F_2(\delta,c)$ by $G_2(\delta, c)$, i.e.,
\begin{equation*}
  \lambda = F(\delta, G(\delta, \lambda)) = F_2(\delta, G_2(\delta, \lambda)).
\end{equation*}
\begin{lemma}
  We have $G(\delta, \lambda) = G_2(\delta, \lambda) +
  \mathcal{O}(\delta^{1/6}+ \lambda)$ for $\delta,\ \lambda \to 0$,
  uniformly for $\delta\to0$ in $\lambda$ around $\lambda = 0$.
\end{lemma}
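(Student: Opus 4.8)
The plan is to reduce the statement to three elementary order-of-magnitude estimates and one case distinction. Throughout write $c\coloneqq G(\delta,\lambda)$ and $c_2\coloneqq G_2(\delta,\lambda)$, and recall $\theta\coloneqq\mu/\sigma^2$ (so $0<\theta<1$ by the standing assumption $0<\mu<\sigma^2$). By Proposition~\ref{resultsc} — and the analogous, easier statement for $F_2$ (note $F_2(\delta,\barc)=0$ and $F_2(\delta,c)=\lambda_3(\delta)(c-\barc)^3(1+\mathcal{O}(c-\barc))$ with $\lambda_3(\delta)>0$) — both $c$ and $c_2$ lie in $(\barc,\infty)$ and tend to $\barc$ as $\lambda\to0$. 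By the very definition of $F_2$, $F(\delta,c)-F_2(\delta,c)=\lambda_1(\delta)(c-\barc)+\lambda_2(\delta)(c-\barc)^2$. I would then subtract the two defining identities $\lambda=F(\delta,c)=F_2(\delta,c_2)$, expand $F_2(\delta,\cdot)=\sum_{i\ge3}\lambda_i(\delta)(\cdot-\barc)^i$ as in~\eqref{eq:lambda-p-series}, and factor out $c-c_2$ from the telescoped differences $(c-\barc)^i-(c_2-\barc)^i$, obtaining
\[
  (c-c_2)\,A=-\lambda_1(\delta)(c-\barc)-\lambda_2(\delta)(c-\barc)^2,\qquad A\coloneqq\sum_{i\ge3}\lambda_i(\delta)\sum_{j=0}^{i-1}(c-\barc)^j(c_2-\barc)^{i-1-j}.
\]
So everything reduces to estimating $c-\barc$, $c_2-\barc$ and $A$.

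Next I would estimate the three ingredients, uniformly for small $\delta$. Under the parametrization~\eqref{eq:asymptotic-bs}, equation~\eqref{eq:lambda_asym} gives $\lambda_1(\delta),\lambda_2(\delta),\lambda_3(\delta)>0$ for $\delta$ small (here $0<\theta<1$ is used for the sign of $\lambda_2$), and $\lambda_3(\delta)$ stays bounded away from $0$ and $\infty$. Positivity of $\lambda_1,\lambda_2,\lambda_3$ yields $\lambda=F(\delta,c)\ge\tfrac12\lambda_3(\delta)(c-\barc)^3$ for $\lambda$ small, hence $c-\barc=\mathcal{O}(\lambda^{1/3})$ uniformly, and likewise $c_2-\barc=\mathcal{O}(\lambda^{1/3})$; the boundedness of $\lambda_3(\delta)$ from above additionally gives the matching lower bound $c_2-\barc\ge\kappa\lambda^{1/3}$ for a constant $\kappa>0$. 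Consequently the $i=3$ term of $A$, namely $\lambda_3(\delta)\bigl((c-\barc)^2+(c-\barc)(c_2-\barc)+(c_2-\barc)^2\bigr)$, is of exact order $\lambda^{2/3}$, while the tail $\sum_{i\ge4}$ is only $\mathcal{O}(\lambda)$ (using that $\sum_i\lambda_i(\delta)z^i$ has radius of convergence bounded below uniformly in small $\delta$, which one reads off the explicit $F$ of Proposition~\ref{resultsc}), so $A\asymp\lambda^{2/3}$. On the right-hand side, $\eqref{eq:lambda_1_asym}$ and $\eqref{eq:lambda_2_asym}$ give $\lambda_1(\delta)=\mathcal{O}(\delta)$ and $\lambda_2(\delta)=\mathcal{O}(\delta^{1/2})$, so the right-hand side is $\mathcal{O}(\delta\lambda^{1/3}+\delta^{1/2}\lambda^{2/3})$. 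Dividing,
\[
  |G(\delta,\lambda)-G_2(\delta,\lambda)|=|c-c_2|=\mathcal{O}\!\left(\delta\lambda^{-1/3}+\delta^{1/2}\right).
\]

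Finally I would remove the singular factor $\lambda^{-1/3}$ by splitting on the size of $\lambda$ relative to $\delta^{3/2}$. If $\lambda\ge\delta^{3/2}$, then $\delta\lambda^{-1/3}\le\delta^{1/2}$, so the bound just obtained is $\mathcal{O}(\delta^{1/2})$. If $\lambda<\delta^{3/2}$, I would discard it and use instead the crude estimate $|c-c_2|\le(c-\barc)+(c_2-\barc)=\mathcal{O}(\lambda^{1/3})=\mathcal{O}(\delta^{1/2})$ (valid since $0<\delta<1$). In either regime $|G(\delta,\lambda)-G_2(\delta,\lambda)|=\mathcal{O}(\delta^{1/2})$, which in particular is $\mathcal{O}(\delta^{1/6}+\lambda)$, and all implied constants depend only on $\mu,\sigma$ once $\lambda$ stays in a fixed neighbourhood of $0$ and $\delta$ is small — this is the asserted uniformity. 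The one genuinely delicate point, and the step I expect to be the main obstacle, is the \emph{uniformity in $\delta$} of the order estimates near $\barc$: namely the uniform lower bound on $\lambda_3(\delta)$ and on the convergence radius of $\sum_i\lambda_i(\delta)z^i$ (cleanest from the closed form of $F$ together with~\eqref{eq:asymptotic-bs}), together with the fact that in the regime $\lambda\lesssim\delta^{3/2}$ the quantity $c-\barc$ is genuinely \emph{smaller} than $\lambda^{1/3}$ — which is exactly why the first bound degenerates there and the crude estimate must take over.
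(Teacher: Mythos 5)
Your argument is correct and in fact proves something stronger than the stated bound, but it takes a genuinely different route from the paper's. Both proofs start from the same exact identity $F_2(\delta,G_2)-F_2(\delta,G)=F(\delta,G)-F_2(\delta,G)=\lambda_1(G-\barc)+\lambda_2(G-\barc)^2$. The paper Taylor-expands $F_2$ around $G(\delta,\lambda)$ with a Lagrange remainder, bounds the right-hand side only crudely by $g(\delta,G)\sqrt{\delta}$ with $g$ of order one, solves the resulting relation for $(G_2-G)^3$, and distinguishes cases according to which term dominates, which is exactly where the exponent $1/6$ (and the extra $\lambda$) comes from. You instead factor $G_2-G$ linearly out of the difference of the two power series, bound the cofactor $A$ from below by a constant times $\lambda^{2/3}$ (using $\lambda_3$ bounded away from $0$ and $\infty$ and the lower bound $G_2-\barc\gtrsim\lambda^{1/3}$, both uniform for small $\delta$ by \eqref{eq:lambda_asym}), and keep the small factor $G-\barc=\mathcal{O}(\lambda^{1/3})$ on the right-hand side, obtaining $|G-G_2|=\mathcal{O}(\delta\lambda^{-1/3}+\delta^{1/2})$; the dichotomy $\lambda\gtrless\delta^{3/2}$ then yields the uniform bound $\mathcal{O}(\delta^{1/2})$, which implies — and sharpens — the asserted $\mathcal{O}(\delta^{1/6}+\lambda)$, and would even relax the paper's subsequent requirement $\delta\ll\lambda^2$ for matching the first coefficient $b_1$ to $\delta\ll\lambda^{2/3}$. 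What each approach buys: the paper's needs only local information ($F-F_2=\mathcal{O}(\sqrt{\delta})$ and an order-one third derivative of $F_2$), at the price of the weaker rate and a case analysis on the cubic identity; yours extracts the sharper rate by dividing by $A\asymp\lambda^{2/3}$ instead of an order-one quantity, at the price of needing the uniform two-sided control of $\lambda_3$ and of the series tail. The delicate point you flag — uniformity in $\delta$ of the coefficients, of the radius of convergence of $F(\delta,\cdot)$ at $\barc$, and of the a priori smallness of $G-\barc$ and $G_2-\barc$ for small $\lambda$ — is genuine, but the paper's own proof is no more rigorous there (it invokes boundedness of $G_2-G$ ``by continuity'' and uses the $\lambda$-expansion of $G(\delta,\lambda)$ whose coefficients blow up as $\delta\to0$), so your treatment is at least as careful as the original.
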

\begin{proof}
  By Taylor's theorem with Lagrange remainder term, we have
  \begin{align*}
    F_2(\delta, G_2(\delta, \lambda)) - F_2(\delta, G(\delta, \lambda)) &=
    \sum_{j=1}^2 \f{1}{j!} F_2^{(j)}(\delta, G(\delta,\lambda)) (G_2(\delta,
    \lambda) - G(\delta, \lambda))^j +\\
    &\quad + \f{1}{6} F_2^{(3)}(\delta, \xi)
    (G_2(\delta, \lambda) - G(\delta, \lambda))^3 \\
  \end{align*}
  for some $\xi = \xi(\delta, \lambda)$ between $G(\delta, \lambda)$ and
  $G_2(\delta, \lambda)$. At this point, let us note that $G_2(\delta,
  \lambda) - G(\delta, \lambda)$ is bounded in a neighborhood of $(0,0)$ by
  continuity. Since
  \begin{equation*}
    G(\delta, \lambda) = \overline{c} + \frac{\overline{c}(1-p)}{(1+d)\eta -1}
    \lambda + \mathcal{O}(\lambda^2),
  \end{equation*}
  we get
  \begin{align*}
    F_2^\prime(\delta, G(\delta,\lambda)) &= 3 \lambda_3(\delta)
    \left(G(\delta,\lambda) - \overline{c} \right)^2 +
    \mathcal{O}\left(\left(G(\delta,\lambda) - \overline{c} \right)^3\right)
    \\ 
    &= 3 \lambda_3(\delta) \frac{\overline{c}^2(1-p)^2}{\left( (1+d)\eta
        -1\right)^2} \lambda^2 + \mathcal{O}\left(\lambda^3\right),\\
    F_2^{\prime\prime}(\delta, G(\delta,\lambda)) &= 6 \lambda_3(\delta)
    \left(G(\delta,\lambda) - \overline{c} \right) +
    \mathcal{O}\left(\left(G(\delta,\lambda) - \overline{c} \right)^2\right)
    \\
    &= 6 \lambda_3(\delta) \frac{\overline{c}(1-p)}{(1+d)\eta -1} \lambda +
    \mathcal{O}\left( \lambda^2 \right).
  \end{align*}
  On the other hand, we also have
  \begin{align*}
    F_2(\delta, G_2(\delta, \lambda)) - F_2(\delta, G(\delta, \lambda)) &=
    F_2(\delta, G_2(\delta, \lambda)) - (F(\delta, G(\delta,\lambda)) -
    g(\delta, G(\delta, \lambda)) \sqrt{\delta})\\
    &= g(\delta, G(\delta, \lambda)) \sqrt{\delta}.
  \end{align*}
  Consequently, we get
  \begin{align*}
    \left(G_2(\delta, \lambda) - G(\delta, \lambda) \right)^3 &=
    \f{6}{F_2^{(3)}(\delta, G(\delta, \xi(\delta, \lambda)))}
    \Bigl[g(\delta,G(\delta,\lambda)) \sqrt{\delta} +\\
    & \quad + 3 \lambda_3(\delta) \frac{\overline{c}^2(1-p)^2}{\left( (1+d)\eta
        -1\right)^2} \lambda^2  \left(G_2(\delta,\lambda) - G(\delta,\lambda)
    \right) + \mathcal{O}\left(\lambda^3\right) +\\
    &\quad + 3 \lambda_3(\delta) \frac{\overline{c}(1-p)}{(1+d)\eta -1}
    \lambda \left(G_2(\delta,\lambda) - G(\delta,\lambda) \right)^2 +
    \mathcal{O}\left( \lambda^2 \right) \Bigr].
  \end{align*}
  When the $\sqrt{\delta}$-term is dominating in the right hand side, then
  this implies that $G(\delta, \lambda) = G_2(\delta, \lambda) +
  \mathcal{O}(\delta^{1/6})$. In the other two possible cases, we get
  $G(\delta, \lambda) = G_2(\delta, \lambda) + \mathcal{O}(\lambda)$, implying
  in total $G(\delta, \lambda) = G_2(\delta, \lambda) +
  \mathcal{O}(\delta^{1/6} + \lambda)$.
  
  Regarding the uniformity in $\lambda$, note that $F_2^{(3)}(\delta,
  G(\delta, \xi(\delta, \lambda))$ is bounded for $0\le \delta \le \delta_0$
  and $0 \le \lambda \le \lambda_0$ and $g(\delta, G(\delta, \lambda))$
  converges to some finite, non-zero value.
\end{proof}

By construction, we will obtain an asymptotic expansion of the form
\begin{equation*}
  G_2(\delta, \lambda) \approx \sum_{i=1}^\infty b_i(\delta) \lambda^{i/3}
  \approx G(\delta, \lambda) + \mathcal{O}(\delta^{1/6} + \lambda).
\end{equation*}
So, the coefficients $b_1(\delta)$ and $b_2(\delta)$ of $G_2$ will be
asymptotically (for $\delta \to 0$) equal to the corresponding coefficients of
$G(0, \lambda)$. In particular, if we only want to match the first coefficient
$b_1$, we have to choose $\delta \ll \lambda^2$.

This approach allows us to compare the results of the binomial model with the
results of the Black-Scholes model, at least provided that $\delta$ is small
enough when compared with $\lambda$. Let us exemplify the procedure for the
boundaries and the size of the no-trade-regions, which have been calculated in
Theorem~\ref{thr:no-trade-region} for the binomial model and
in~\cite[Corollary 6.2]{GMKS11} for the Black-Scholes model.
For the asymptotics of the optimal growth, we need to consider
$\frac{R}{\delta}$ instead of $R$, as the calender time is given in terms of
the number of periods $T$ in the binomial model by $T\delta$. We have

\begin{theorem}
  \label{thr:no-trade-region-asymptotic}
  Consider a family of binomial model with parameters $p$ and $d$ given
  by~\eqref{eq:asymptotic-bs} for fixed $\mu$, $\sigma>0$, $u=1/d$ and
  proportional transaction costs $\lambda$, which we assume to be small but
  much larger than $\delta$, at least $\lambda^2 \gg \delta$. The lower and
  upper boundaries of the no-trade-region satisfy
  $\underline{\theta}=\underline{\theta}_{0} + \underline{\theta}_1
  \lambda^{\frac{1}{3}} + \mathcal{O}\left( \lambda^{2/3}\right)$ and
  $\overline{\theta} = \overline{\theta}_{0}+\overline{\theta}_1
  \lambda^{\frac{1}{3}} + \mathcal{O}\left( \lambda^{2/3}\right)$,
  respectively, with
  \begin{gather*}
    \underline{\theta}_{0} = \theta +  \frac{\sigma^2}{24} (2\theta-1)\delta +
    \mathcal{O}\left(\delta ^{3/2}\right), \\
    \underline{\theta}_{1} = -\left ( \frac{3 \theta ^2 (1-\theta)^2}{4}
    \right)^\frac{1}{3} + \left ( \frac{3 \theta ^2 (1-\theta)^2}{32}
    \right)^\frac{1}{3} (4 \theta-3) \sigma \delta^\f{1}{2} +
    \mathcal{O}\left(\delta\right), \\
    \overline{\theta}_{0} =
    \theta  +  \frac{\sigma^2}{24} (2\theta-1)\delta + \mathcal{O}\left(\delta
      ^{3/2}\right), \\
    \overline{\theta}_{1} = \left ( \frac{3 \theta ^2 (1-\theta)^2}{4}
    \right)^\frac{1}{3} + \left ( \frac{3 \theta ^2 (1-\theta)^2}{32}
    \right)^\frac{1}{3} \sigma \delta^\f{1}{2} +
    \mathcal{O}\left(\delta\right).
  \end{gather*}
  Moreover, the width $\overline{\theta} - \underline{\theta}$ of the
  no-trade-region is given by
  \begin{equation*}
    \left( 6 \theta^2 (1-\theta)^2 \right)^\frac{1}{3} \left ( 1 +
      (1-\theta) \sigma \delta^\f{1}{2} + \mathcal{O}\left(\delta\right)
    \right) \lambda^\frac{1}{3}+\mathcal{O}\left( \lambda^{2/3}\right).
  \end{equation*}
  The asymptotic optimal growth rate satisfies $\frac{R}{\delta}=R_{0}+ R_{1}
  \lambda^{\frac{1}{3}} + \mathcal{O}\left( \lambda^{2/3}\right),$ 
  where 
  \begin{gather*}
    R_{0}=\frac{\mu^2}{2 \sigma^2}+\frac{\sigma^2}{24} \theta(\theta-1)(2
    \theta^2-2\theta+1)\delta + 
    \mathcal{O}\left(\delta ^{3/2}\right),\\
    R_1=\left(\frac{3}{32}\right)^\frac{1}{3}\sigma^3
    (\theta(\theta-1))^\frac{5}{3}\delta ^{1/2} + 
    \mathcal{O}\left(\delta\right).
  \end{gather*}
\end{theorem}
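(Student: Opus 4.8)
The plan is to feed the preceding lemma --- which states $G(\delta,\lambda) = G_2(\delta,\lambda) + \mathcal{O}(\delta^{1/6}+\lambda)$ --- into the explicit formulas $\underline\theta = (1+c)^{-1}$, $\overline\theta = (1+c/\bar s)^{-1}$ and the closed form for $R$ from Theorem~\ref{thr:growth-rate}, so that the whole computation reduces to expanding the mollified inverse $G_2$. Unlike $F$ itself, the truncated function $F_2(\delta,c) = \lambda_3(\delta)(c-\barc)^3 + \lambda_4(\delta)(c-\barc)^4 + \cdots$ has a non-degenerate cubic leading term, with $\lambda_3(\delta)\to \tfrac{4\theta^4}{3(\theta-1)^2}$ by~\eqref{eq:lambda_asym}; Lagrange inversion therefore produces a genuine Puiseux series
\begin{equation*}
  G_2(\delta,\lambda) = \barc + b_1(\delta)\,\lambda^{1/3} + b_2(\delta)\,\lambda^{2/3} + \mathcal{O}(\lambda), \qquad b_1(\delta) = (\lambda_3(\delta))^{-1/3}.
\end{equation*}
Since the standing hypothesis $\lambda^2\gg\delta$ forces $\delta^{1/6} = o(\lambda^{1/3})$, the error from replacing $G$ by $G_2$ does not affect the $\lambda^{1/3}$-coefficient, so $c = \barc + b_1(\delta)\lambda^{1/3} + \mathcal{O}(\lambda^{2/3})$, the remainder understood in the regime $\delta\ll\lambda^2$.

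With this expansion of $c$ in hand the rest is substitution and re-expansion in the two small quantities $\lambda^{1/3}$ and $\delta^{1/2}$. For the boundaries I would expand $\bar s = u^k = \exp(-k\sigma\sqrt\delta)$, where $k = k(c)$ is the explicit function of $c$ from Proposition~\ref{prop:define-sbar} (note $k(\barc) = 0$, since $\bar s\to1$ as $\lambda\to0$, and $k'(\barc) = O(\delta^{-1/2})$ because the denominator $\log((1-p)/p)$ in $k$ is $O(\sqrt\delta)$, so that $k\log d = O(\lambda^{1/3})$ and $1-d^k = -k\log d + \mathcal{O}(\lambda^{2/3})$), plug into $(1+c)^{-1}$ and $(1+c/\bar s)^{-1}$, substitute the $\delta$-expansion of $\barc$ recorded in Section~\ref{sec:conv-black-schol} together with the one-order-further $\sqrt\delta$-expansion of $\lambda_3(\delta)$ (a routine Taylor expansion of the formula in Proposition~\ref{resultsc} after inserting~\eqref{eq:asymptotic-bs}), and collect terms; this yields $\underline\theta_0,\underline\theta_1,\overline\theta_0,\overline\theta_1$, and subtracting gives the width. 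For the growth rate I would start from the $p\neq\tfrac{1}{2}$ formula of Theorem~\ref{thr:growth-rate}, insert~\eqref{eq:asymptotic-bs}, $u = 1/d$, and the expansions of $c$ and $k$, Taylor-expand the result in $\sqrt\delta$ and $\lambda^{1/3}$ (the $p\to\tfrac{1}{2}$ limit of that formula being regular, reproducing the $p=\tfrac{1}{2}$ case, which occurs exactly when $\mu=\sigma^2/2$), and finally divide by $\delta$ because calendar time equals $T\delta$. All these manipulations are elementary but long and are best carried out with symbolic software, exactly as in the proofs of Theorems~\ref{thr:no-trade-region} and~\ref{thr:growth-rate}.

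The main obstacle is not conceptual but lies in the bookkeeping of the two independent small parameters: one must verify that the $\mathcal{O}(\delta^{1/6}+\lambda)$ discrepancy between $G$ and $G_2$, whose constant in fact carries a $\delta$-dependence through the factor $((1+d)\eta-1)^{-1}$ appearing in the proof of the preceding lemma, genuinely sits below the orders displayed --- this is exactly what the assumption $\lambda^2\gg\delta$ buys, and it is why the conclusion is only asserted in that regime. A minor point to record is Assumption~\ref{ass:k-integer}: for the expansion itself $k$ may be treated as the real number of Proposition~\ref{prop:define-sbar}, and since $d\to1$, hence $k\to\infty$, along the family, $k$ can in any case be made a positive integer by an $o(1)$ perturbation of $d$ that does not disturb any of the expansions.
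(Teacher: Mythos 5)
Your proposal is correct and follows essentially the same route as the paper's proof: replace $G$ by the mollified inverse $G_2$ (the paper's $\tilde\lambda$-inversion), apply Lagrange inversion to the cubic-leading truncated series to get a Puiseux expansion of $c$ in $\lambda^{1/3}$ with $\delta$-dependent coefficients, justified in the regime $\lambda^2\gg\delta$ by the preceding lemma, and then substitute the $\sqrt{\delta}$-asymptotics of $\barc$, $\lambda_3$ and $\bar s=u^{k(c)}$ into the formulas of Theorems~\ref{thr:no-trade-region} and~\ref{thr:growth-rate}. Your leading coefficient $b_1=\lambda_3^{-1/3}$ agrees with the paper's $\tilde c_3$, so the computation matches and nothing further is needed.
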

Note that in all of the above terms, the zero-order term in $\delta$ is equal
to the corresponding term in the Black-Scholes model, which again justifies
our approach. Interestingly, lowest order effect of discrete time seems to be
a shift of the no-trade region. The zero-order terms of both
$\underline{\theta}$ and $\overline{\theta}$ in the binomial model only differ
from the corresponding terms in the Black-Scholes model by the term
$\frac{\sigma^2}{24} (2\theta - 1)$, i.e., the no-trade-region is simply
shifted by that term, which is positive when $\theta > \half$ and negative
when $\theta < \half$. Consequently, the size of the no-trade-region is not
effected by discrete time at lowest order. At order $\lambda^{1/3}$, however,
the width of the no-trade region in the binomial model is larger than the size
of the no-trade-region in the binomial model by the term $6\theta^2
(1-\theta)^2 (1-\theta) \sigma \sqrt{\delta} \lambda^{1/3}$ plus higher order
term. This observation seems to be counter-intuitive, as the time-discreteness
should actually lead to a smaller no-trade-region, as an infinite variation of
trading can anyway not be accumulated since there are only finitely many
possible trading times -- which is also reflected by the results when we do
not consider $\delta\to0$. This indicates that the truncation $F \mapsto F_2$
over-compensates for the effects of continuous-time-trading.
\begin{proof}[Proof of Theorem~\ref{thr:no-trade-region-asymptotic}]
  Define $\tilde \lambda= \lambda- \lambda_1(c- \bar c)-\lambda_2 (c-\bar
  c)^2.$ Then,
  $$\tilde\lambda=\lambda_3 (c-\bar c)^3 + \lambda_4 (c-\bar c)^4+\lambda_5
  (c-\bar c)^5+\mathcal{O}((c-\bar c)^6)).$$
  Inverting the series using Lagrange's theorem, we obtain
  $$c=\bar c+ (\frac{1}{\lambda_3})^{\frac{1}{3}} \tilde
  \lambda^{\frac{1}{3}}-\frac{1}{3}
  \left(\left(\frac{1}{\lambda_3}\right){}^{5/3}  
    \lambda_4\right) \tilde \lambda^{2/3}+\frac{\left(\lambda_4^2-\lambda_3
      \lambda_5\right) \tilde \lambda}{3 \lambda_3^3}+\mathcal{O}\left(\tilde
    \lambda^{4/3}\right)$$
  valid when $\delta$ is small enough as compared to $\lambda$.
  $\bar c$ was already computed in the previous section
  and is given by
  $$\bar c = \frac{1-\theta }{\theta } + \frac{\sigma^2\left(1-2
      \theta\right)}{24 \theta^2} \delta + \frac{\left(24 \theta^2-22 \theta
      +5\right) \sigma^4}{2880 \theta^3} \delta^2 +
  \mathcal{O}\left(\delta^{3}\right).$$ Asymptotics for two more coefficients
  are given by
  $$\tilde c_3= \frac{1-\theta}{2 \theta}\left (\frac{6}{\theta (1-\theta) }
  \right )^\frac{1}{3} + \mathcal{O}(\delta^\frac{1}{2}),$$
  $$ \tilde c_4= \frac{(1-\theta)^2}{4 \theta} \left (\frac{6}{\theta
      (1-\theta)} \right)^\frac{2}{3}+ \mathcal{O}(\delta^\frac{1}{2}).$$

  As $\bar s$ is a function of $c$, after plugging, we get
  $$ \bar s=1+ \bar s_1 \lambda ^\frac{1}{3}+\bar s_2 \lambda ^\frac{2}{3}+
  \bar s_3 \lambda+\mathcal{O}\left(\lambda \right), $$ 
  where asymptotics are given by
  $$\bar s_1= \left ( \frac{6}{\theta (1-\theta)} \right)^\frac{1}{3} + 
  \left ( \frac{6(1-\theta)^2}{\theta} \right)^\frac{1}{3} \sigma
  \delta^\frac{1}{2}+\mathcal{O}(\delta).$$
  $$\bar s_2=\frac{1}{2} \left (  \frac{6}{\theta (1-\theta)}
  \right)^\frac{2}{3}+
  \frac{(7-2\theta)(1-\theta)\sigma}{4}\left(\frac{6}{\theta (1-\theta)}
  \right)^\frac{2}{3}\delta^\frac{1}{2}+ \mathcal{O}(\delta). $$

  From here on, we just need to follow the proof of
  Theorem~\ref{thr:no-trade-region} and Theorem~\ref{thr:growth-rate},
  respectively, using the new asymptotics derived above for $c$ and
  $\overline{s}$.
\end{proof}

\bibliographystyle{alpha}
\bibliography{Bezirgen}

\appendix

\section{ Proofs of Some Theorems} \label{ProofsAppendix}
\begin{proof}[Proof of Proposition 4.1]
 For $p=\frac{1}{2},$ we have $\bar c=1$ and $F(\bar c)=F(1)=0.$ Moreover, $$F'(c)=-\frac{\log(d)}{1-d}d^{\frac{(c+d)(c-1)}{c(1-d)}} \left
[c^2+d +2 \frac{1-d}{\log(d)} c \right].$$
We see that $F$ is increasing on $[x_1, \infty)$ where $x_{1}= \frac{1-d}{-\log(d)}+\sqrt{(\frac{1-d}{-\log(d)})^2-d}$ 
is the larger root of the parabola $c^2+d +2 \frac{1-d}{\log(d)} c.$
Elementary calculus shows that $1> x_2.$ Hence, we conclude that there exists a unique $c > 1$ s.t. $F(c)=\lambda.$

Now, let $p \neq \frac{1}{2}$. Denote $c_1=\frac{d(1-2p)}{p+pd-d}$ and $c_2=\frac{1-p-p d}{2p-1}$ which are 
the roots of $c(p+pd-d)+d(2p-1)$ and $1-p-d p-c (2 p-1),$ respectively. Moreover, denote 
\begin{equation} \label{rfunc}
r(c)=\frac{[c(p-d+pd)+d(2p-1)][1-p-d p-c (2 p-1)]}{c (1-d)^2(1-p)^2}.
\end{equation}

 We see that $F(\bar c)=0$ and
\begin{align*}
F'(c)&= (2p-1)(p+pd-d) \Bigg( \frac{c^2(1+b) - 
2 c_2 c+(b-1)d \bar c }{\big[1-p-dp-c(2p-1)\big]^{2}}
\Bigg) r(c)^{b}.
\end{align*}

If $\frac{1}{2}<p<\frac{1}{1+d},$ then $c_1<0<\bar c<c_2$ and $b>1.$ Note that $r(c)>0$ for 
$\bar c < c< c_2$ and $r(c) \to 0$ for $c \uparrow c_2.$ 
Hence, we obtain $F(c) \to 1 \mbox{ for } c \uparrow c_2.$
Intermediate value theorem implies that there is a $c$ on $(\bar c, c_2).$ s.t. $F(c)=\lambda.$ 

We see that if $\bar c <c<c_2,$ then the sign of the parabola $c^2(1+b)-2 c_2 c+(b-1)d \bar c$ determines the sign of 
$F'.$ If the parabola has no root, then $F'(c)>0$ for $\bar c <c<c_2.$
Recalling $F(\bar c)=0,$ we conclude that there exists a unique $c$ on $(\bar c, c_2)$ s.t. $F(c)=\lambda.$ 
If the parabola has a root, then  the smaller root $x_1$ satisfies 
$$ x_1 \leq \frac{c_2}{1+b} <\frac{c_2}{c_2+2} \leq \bar c.$$
Hence, depending on whether $\bar c <x_2$ or not, $F$ decreases on 
$(\bar c, x_2)$ and increases on $(x_2, c_2)$ or only increases on $(\bar c, c_2).$ Due to $F(\bar c)=0,$ in both cases, we get that
there exists a unique $c$ on $(\bar c, c_2)$ s.t. $F(c)=\lambda.$

If $\frac{d}{1+d}<p<\frac{1}{2},$ then $c_2<0<c_1<\bar c$ and $b<-1.$
Note that $r(c)>0$ for $c>\bar c$ and 
$$\frac{r(c)}{\frac{c(p+pd-d)+d(2p-1)}{(1-d)(1-p)}} \to \frac{1-2p}{(1-d)(1-p)}>0 \mbox{ for } c \uparrow \infty.$$
Since $b-1<-2,$ we get $F(c) \to 1 \mbox{ for } c \uparrow \infty.$
Now, intermediate value theorem implies that there is a $c \in (\bar c, \infty)$ s.t. $F(c)=\lambda.$ 

If $c>\bar c,$ then the sign of $F'$ is the opposite of the sign of the parabola $c^2(1+b)-2 c_2 c+(b-1)d \bar c$ due to $2p-1<0.$
The leading coefficient of the parabola, $1+b,$ is negative. 
Hence, if the parabola has no root, then $F'(c)>0$ for $c>\bar c.$ Hence, there exists a unique $c$
on $(\bar c, \infty)$ s.t. $F(c)=\lambda.$ If the parabola has a root, then the smaller root $x_1$ satisfies
$$ x_1 \leq \frac{c_2}{1+b} \leq \bar c, $$
where the last inequality follows due to the fact that the function $w(z)=\frac{z \log(z) }{z-1}$ is increasing on $(1, \infty)$
and hence $w(\frac{1-p}{p}) \leq w(u).$ Hence, by the same argument as in the previous case, 
we obtain that there exist a unique root on $(\bar c, \infty).$
\end{proof}


\begin{proof}[Proof of Proposition 4.2]
 For $p=\frac{1}{2},$ we know from Proposition \ref{resultsc} that $c > 1.$ As $\frac{(c+d)(c-1)}{c(1-d)}$ is strictly increasing
for $c>0,$ we get $k >0.$

Now let $p \neq \frac{1}{2}.$ Recall, how we defined $r(c).$ Differentiation yields
$$ r'(c)=\frac{(1-2p)}{(1-d)^2(1-p)^2} \frac{(p+pd-d)c^2+d(1-p-pd)}{c^2}.$$

If $\frac{1}{2}<p<\frac{1}{1+d},$ then $c_1<0<\bar c<c<c_2.$ We see that $r$ is strictly decreasing 
and positive function on $(0, c_2).$ This implies $0<r(c)<r(\bar c)=1.$ As $p>\frac{1}{2}, $ we get $k > 0.$

If $\frac{d}{1+d}<p<\frac{1}{2},$ then we note that $c_2<0<c_1<\bar c<c.$ Since $r$ is strictly increasing and positive 
on $[\bar c, \infty)$ we get $r(c)>r(\barc)=1.$ Due to $p<\frac{1}{2},$ we obtain $k>0.$
\end{proof}

\begin{proof}[Proof of Proposition 4.4]

We assume $p \neq \frac{1}{2},$ the other case being similar. To start with, we shall prove that $g$ is well-defined on $\{ d,1, \ldots, \bar s, u \bar s \}.$
If $\frac{d}{1+d}<p<\frac{1}{2},$ then $\frac{1-p}{p}>1$ and hence the numerator satisfies 
\begin{align*}
-\left(1- (\f{1-p}{p})^{-\f{\log(s)}{\log d}}\right) + \beta_p & \leq 
(\f{1-p}{p})^{k+1}+\beta_p-1=\frac{1-p}{p}r(c)+\frac{c(2p-1)+p+pd-1}{(1-d)(1-p)} \\
& =\frac{1-p-pd-c(2p-1)}{(1-d)(1-p)} \frac{(c+1)d(2p-1)}{c p (1-d)} <0,
\end{align*}
which shows that $g$ is well-defined. If $\frac{1}{2}<p<\frac{1}{1+d},$ then $\frac{1-p}{p}<1$ and so we get
\begin{align*}
-\left(1- (\f{1-p}{p})^{-\f{\log(s)}{\log d}}\right) + \beta_p & \geq 
(\f{1-p}{p})^{k+1}+\beta_p-1=\frac{1-p}{p}r(c)+\frac{c(2p-1)+p+pd-1}{(1-d)(1-p)} \\
& =\frac{1-p-pd-c(2p-1)}{(1-d)(1-p)} \frac{(c+1)d(2p-1)}{c p (1-d)}>0,
\end{align*}
where $1-p-pd-c(2p-1)>0$ is due to $c<\frac{1-p-pd}{2p-1}$ (recall Proposition \ref{resultsc}). As a result, we obtain that
$g$ is well-defined.

To show that $g$ is increasing, we calculate 
$$g'(s)=\frac{(c+1)\beta_p (\frac{1-p}{p})^\frac{-\log(s)}{\log(d)} \log\left(\frac{1-p}{p}\right)}
{s \log(d)\left[-\left(1-(\frac{1-p}{p})^\frac{\log(s)}{\log(u)} \right)+\beta_p\right]^2}. $$
Here, the denominator is negative since $\log(d)<0.$ The sign of the numerator
depends on the signs of $\beta_p$ and $\log\left(\frac{1-p}{p}\right).$ We easily check that 
$\beta_p \log\left(\frac{1-p}{p}\right)$
is negative for both cases $\frac{d}{1+d}<p<\frac{1}{2}$ and 
$\frac{1}{2}<p<\frac{1}{1+d}.$ Therefore, we conclude that $g$ is increasing.

Moreover, we observe that elementary calculation shows that $g$ indeed satisfies the ``smooth pasting'' conditions after plugging
the values for $c$, $\bar s$ and $\beta_p.$

Now, we show that $ (1-\lambda) s \leq g(s) \leq s \mbox{ for } 1 \leq s \leq \bar s.$ Define $ H(s)=\frac{g(s)}{s}.$
Since $H(1)=1$ and $H(\bar s)=g(\bar s)/ \bar s=1-\lambda,$ it is enough to prove that $H$ is decreasing. Calculation yields
$$H'(s)=\frac{c (\frac{1-p}{p})^\frac{-2\log(s)}{\log(d)}+\left[(c+1) \beta_p \left(\frac{\log(\frac{1-p}{p})}{\log(d)}\right )+
\beta_p(c-1)-2c \right](\frac{1-p}{p})^\frac{-\log(s)}{\log(d)}-(\beta+c)(\beta-1)}
{s^2\left[-\left(1-(\frac{1-p}{p})^\frac{\log(s)}{\log(u)} \right)+\beta_p\right]^2 }.$$
The denominator is positive, hence it suffices to show that the numerator is negative. 
We observe that the numerator is a parabola
in $(\frac{1-p}{p})^\frac{-\log(s)}{\log(d)}$ with positive leading coefficient $c$. 
Thus, the numerator attains its maximum value at the boundaries of $[1, \bar s].$ Denoting the numerator by $N(s),$ 
we obtain that $N(\bar s)= r(c) N(1)$ where
$$N(1)= \beta_p  \left ((c+1)\left(\frac{\log(\frac{1-p}{p})}{\log(d)} -\frac{2p-1}{(1-p)(1-d)}\right )+\frac{2p-1}{1-p} \right). $$
Since $r(c)>0,$ we are done if we show that $N(1)<0.$ If $\frac{1}{2}<p<\frac{1}{1+d},$ then we obtain 
$\frac{\log(\frac{1-p}{p})}{\log(d)}<\frac{2p-1}{p(1-d)}$ since the function 
$\frac{\log(z)}{z-1}$ is decreasing on $(0,1)$ and $0<d<\frac{1-p}{p}<1.$ Combining this with 
$c< \frac{1-p-pd}{2p-1}$ (recall Proposition \ref{resultsc}), we obtain $N(1)<0.$ If $\frac{d}{1+d}<p<\frac{1}{2},$ then by 
similar arguments we get $\frac{\log(\frac{1-p}{p})}{\log(d)}>\frac{(2p-1)d}{p(1-d)}.$ 
Recalling from Proposition \ref{resultsc} that $c>\bar c$, we again obtain $N(1)<0.$

Lastly, denoting $n=\frac{\log(s)}{\log(u)}$ and $x=\frac{1-p}{p},$ we obtain
\begin{align*}
\frac{p \frac{g(us)}{g(s)} +(1-p)\frac {g(ds)}{g(s)}-1}{(\frac{g(us)}{g(s)}-1)(1-\frac{g(ds)}{g(s)})}=&
\frac{[p (g(us)-g(s))+(1-p) (g(ds)-g(s))]g(s)}{[g(us)-g(s)][g(s)-g(ds)]} \\ 
=&\frac{[p \frac{(c+1) \beta_p(x^n-x^{n+1})}{(x^{n+1}+\beta_p-1)(x^n+\beta_p-1)}+
(1-p)\frac{(c+1) \beta_p(x^n-x^{n-1})}{(x^{n-1}+\beta_p-1)(x^n+\beta_p-1)} ] g(s) }
{\frac{(c+1) \beta_p(x^n-x^{n+1})}{(x^{n+1}+\beta_p-1)(x^n+\beta_p-1)} \frac{-(c+1) \beta_p(x^n-x^{n-1})}{(x^{n-1}+\beta_p-1)(x^n+\beta_p-1)} } \\
=& \frac{(x^n+\beta_p-1) g(s)}{(c+1) \beta_p}=\frac{g(s)}{c+g(s)}.\qedhere 
\end{align*}
\end{proof}

\begin{proof}[Proof of Lemma~\ref{lem:1}]
  Recall that $c$ was constructed as root of an equation $F(\delta,c) =
  \lambda$, cf.~Proposition~\ref{resultsc}. Note that $F(\delta,c)$ is
  continuous in $\delta$ and its limit at $\delta = 0$, which we shall denote
  by $F(0,c)$ determines the optimal wealth fraction in the Black-Scholes
  model by $F(0,c) = \lambda$. 

  We set $H(\delta,c) \coloneqq (\delta, F(\delta,c))$, which we consider as
  $H:M \to N$, where $M \subset [0,\infty[ \times \R$ and $N \subset
  [0,\infty[ \times [0,\infty[$, as we do not allow for negative transaction
  costs. We fix $(\delta, \lambda) \in N$ and set $c = G(\delta,
  \lambda)$. First we are going to construct a compact set $U \subset M$ such
  that the interior of $U$ is a neighborhood of $(\delta,c)$ and the interior
  of $H(U)$ is a neighborhood of $(\delta,\lambda)$.

  Choose some $0 \le \epsilon_1 \le \delta < \epsilon_2$, with the
  understanding that we require $\epsilon_1 < \delta$ whenever $\delta>0$ and
  define
  \begin{equation*}
    U \coloneqq \bigcup_{\delta \in [\epsilon_1, \epsilon_2]} \set{\delta}
    \times [c-\epsilon,c+\epsilon],
  \end{equation*}
  for some $\epsilon>0$ small enough that $U \subset M$. Since
  $F(\delta^\prime,\cdot)$ is continuous, $F(\delta^\prime,
  [c-\epsilon,c+\epsilon])$ is a compact interval $[l(\delta^\prime),
  u(\delta^\prime)]$, where $l$ and $u$ are continuous functions of $\delta$
  by uniform continuity. Now choose $\eta > 0$ small enough that $l(\delta) <
  \lambda - 2\eta$ and $u(\delta) > \lambda + 2\eta$. By continuity of $l$ and
  $u$, we can find $0\le \epsilon_1 \le \kappa_1 \le \delta < \kappa_2 <
  \epsilon_2$ with the understanding that $\kappa_1 < \delta$ unless $\delta =
  0$, such that
  \begin{equation*}
    \forall \kappa_1 \le \delta^\prime \le \kappa_2: \ \abs{l(\delta^\prime) -
    l(\delta)} < \eta, \ \abs{u(\delta^\prime) - u(\delta)} < \eta.
  \end{equation*}
  This implies that the closed ball $B(\lambda,\eta)$ with radius $\eta$
  around $\lambda$ is contained in every $F(\delta^\prime,
  [c-\epsilon,c+\epsilon])$ with $\kappa_1 \le \delta^\prime \le \kappa_2$,
  showing that the ball $[\kappa_1,\kappa_2] \times B(\lambda,\eta)$ is
  contained in $H(U)$, implying that the interior of $H(U)$ is a neighborhood
  of $(\delta, \lambda)$. (When $\delta = 0$, then the interior is understood
  in the sense of the topology on $[0,\infty[\times [0,\infty[$. We have
  tacitly assumed that either $\lambda > 0$ or we also allow for negative
  transaction costs.)

  Now fix a sequence $(\delta_n, \lambda_n) \xrightarrow{n\to\infty}
  (\delta,\lambda)$. We may assume that $(\delta_n, \lambda_n) \in H(U)$, and
  denote $(\delta_n, c_n) \coloneqq H^{-1}(\delta_n, \lambda_n) \in U$. By
  closedness, every converging subsequence of $(\delta_n, c_n)$ must have a
  limit in $U$. Let $(\delta_{n_k}, c_{n_k})$ be such a converging subsequence
  with limit $(\delta, c^\prime)$. Then, by continuity of $H$, we have
  $H(\delta, c^\prime) = (\delta, \lambda)$, implying that $c^\prime = c
  \coloneqq G(\delta,\lambda)$.
\end{proof}

\end{document}